\documentclass[11pt]{article}
\usepackage[utf8]{inputenc}
\usepackage[margin=1in]{geometry}
\usepackage[T1]{fontenc}
\usepackage{lmodern}

\usepackage{microtype}
\usepackage{graphicx}
\usepackage{subfigure,wrapfig}
\usepackage{booktabs} 
\usepackage{natbib}
\setcitestyle{open={(},close={)}}

\usepackage[colorlinks=true,citecolor=blue]{hyperref}

\usepackage{amsmath,amsthm,amsfonts,amssymb,mathdots,array,mathrsfs,bm,bbm,stmaryrd,graphicx,subfigure,xcolor}
\usepackage[vlined,linesnumbered,ruled,resetcount]{algorithm2e}

\usepackage{breakcites}

\usepackage[T1]{fontenc}
\usepackage{enumerate}
\usepackage{inputenc}

\usepackage{graphicx} 
\usepackage{subfigure}

\usepackage{booktabs,balance}
\usepackage{rotating}
\usepackage{boldline}
\usepackage{makecell}
\usepackage{multirow}
\usepackage{balance}

\usepackage{tikz}

\newtheorem{theorem}{Theorem}[section]

\newtheorem{lemma}[theorem]{Lemma}

\newtheorem{definition}{Definition}[section]

\newtheorem{remark}{Remark}[section]

\newcommand{\bsmat}{\begin{bmatrix} }
\newcommand{\esmat}{\end{bmatrix} }

\SetKwInput{KwInput}{Input}
\SetKwInput{KwData}{Data}

\DeclareMathOperator*{\argmin}{argmin}

\begin{document}

\title{\bf\Huge Best Subset Selection with Efficient Primal-Dual Algorithm}

\author{\vspace{0.5in}\\\textbf{Shaogang Ren, Guanhua Fang,  Ping Li} \\\\
Cognitive Computing Lab\\
Baidu Research\\
10900 NE 8th St. Bellevue, WA 98004, USA\\\\
  \texttt{\{renshaogang, fanggh2018,\ pingli98\}@gmail.com}
}
\date{}
\maketitle

\begin{abstract}\vspace{0.2in}
\noindent\footnote{The initial version of the paper was submitted in February 2020.}Best subset selection is considered the `gold standard' for many sparse learning problems. A variety of optimization techniques have been proposed to attack this non-convex and NP-hard problem.  In this paper, we investigate the dual forms of a family of $\ell_0$-regularized problems. An efficient primal-dual method has been developed based on the primal and dual problem structures. By leveraging the dual range estimation along with the incremental strategy, our algorithm  potentially reduces redundant computation and improves the solutions of best subset selection. Theoretical analysis and experiments on synthetic and real-world datasets validate the efficiency and statistical properties of the proposed solutions.
\end{abstract}

\newpage

\section{Introduction}

Sparse learning is a standard approach to alleviate model over-fitting issues when the feature dimension is larger than the number of training samples. With a training set $\{x_i, y_i\}_{i=1}^n$ where $x_i \in \mathbb{R}^p $ is the sample feature and $y_i$ is the corresponding label, this paper focuses on the following generalized best subset selection problem,
\begin{align}\label{eq:primal1}
&\min_{\beta \in \mathbb{R}^p} F(\beta) = f(\beta) + \lambda_0 ||\beta||_0,  \\\notag
\text{where }\hspace{0.2in} &f(\beta) =   \sum_{i=1}^n l(\beta^{\top} x_i, y_i)   +  \lambda_1 ||\beta||_1 +\lambda_2 ||\beta||_2^2.
\end{align}
Here $l(\cdot)$ is a convex function, $\beta$ is the model parameter,  and $\lambda_0$,  $\lambda_1$ and  $\lambda_2$ are hyper-parameters/tuning parameters. It is well-known that an $\ell_0$ solver ($\lambda_0>0,\lambda_1=0$) has superior statistical properties  when the \textit{signal-to-noise ratio} (SNR) is high, but it may suffer over-fitting issues when SNR is low~\citep{pmlr-v65-david17a,mazumder2022subset}. The continuous-shrinkage solvers  e.g., ridge/Lasso ($\lambda_0=0,\lambda_1>0$), can perform better in this case compared with $\ell_0$ solver~\citep{mazumder2022subset,hastie2017extended}. Combinations of there hyper-parameters may adjust the model to work well in different noise levels. \cite{mazumder2022subset,Hazimeh18} used ridge/Lasso to improve $\ell_0$ solutions and achieve better or comparable solutions~with~less~nonzeros.

 Solving the vanilla $\ell_0$ regularized problem  is known to be NP-hard~\citep{Natarajan95}.  By leveraging the significant computational advances in mixed integer optimization (MIO), ~\cite{Bertsimas2015BestSS}  performed near optimal solutions to a special case of  problem~\eqref{eq:primal1}, for $\lambda_1 = 0$ and $\lambda_2 = 0$. This method scaled up solutions to  cases where feature sizes are much larger than what were considered possible in the community~\citep{Furnival74,Hazimeh18}. Their approach can achieve approximate optimality via dual bounds but with the cost of longer computation times. ~\cite{Bertsimas17} showed that cutting plane methods for subset selection can work well with mild sample correlations and a succinctly large $n$.

 Different from  the soft regularized ridge/Lasso problem given by~\eqref{eq:primal1},  Iterative Hard Thresholding~(IHT)~\citep{blumensath2009iterative,foucart2011hard,yuan2014gradient,jie2017tight,yuan2020nearly} has often been used to solve  $k$-sparse problems~\eqref{eq:hard_k},
 \begin{align}\label{eq:hard_k}
&\min_{||\beta||_0 \leq k} \sum_{i=1}^n l(\beta^{\top} x_i, y_i)    +\lambda_2 ||\beta||_2^2 \ .
\end{align}
In~\cite{blumensath2009iterative,foucart2011hard}, the authors demonstrated that IHT can be applied to compute the compressed sensing problem. IHT-based approaches have been studied by many researchers in the context of sparse learning problems~\citep{yuan2014gradient,jain2014iterative,jain2016structured,yuan2020nearly}. IHT methods require a specific value of the   features number~($k$) to start the algorithm.  Many other approaches have also been developed to tackle the $\ell_0$ regularized problems~\citep{Mazumder15,mazumder2022subset,soussen2015homotopy,bian2020smoothing,dedieu2020learning,yang2020fast,dong2015regularization,hazimeh2020sparse,zhu2020polynomial}.

Apart from $\ell_0$ solvers, extremely efficient and optimized   $\ell_1$-regularization (Lasso) solvers can solve an entire regularization path (with a hundred values of the tuning parameter) in usually less than a second~\citep{friedman2010regularization}. Screening and coordinate incremental techniques~\citep{Fercoq2015,GAP,Celer,ren2020thunder} can further scale the solutions to large datasets.  Compared to popular efficient solvers for Lasso, it seems that the high computation cost for using $\ell_0$ regularized models ~\citep{Bertsimas2015BestSS} might discourage practitioners from adopting global optimization-based solvers of~\eqref{eq:primal1} to daily analysis applications~\citep{hastie2017extended,mazumder2022subset,Hazimeh18}.  However, it is known~\citep{Loh14,Hazimeh18} that there is a significant gap in the statistical quality of solutions that can be achieved via Lasso (and its variants) and near-optimal solutions to non-convex subset-selection type procedures. The choice of algorithm can significantly affect the quality of solutions obtained. On many instances, algorithms that~do a better job in optimizing the non-convex subset-selection criterion ~\eqref{eq:primal1} result in superior-quality statistical estimators (for example, in terms of support recovery~\citep{Hazimeh18}).

Several recent studies attempt to further improve the efficiency of $\ell_0$ solvers. Along the line of dual methods, ~\cite{Liu17,yuan2020dual} recently studied the duality of $k$-sparse problem, and they proved the strong duality of $k$-sparse problem.  With the derived dual form by~\cite{Liu17,yuan2020dual}, a dual space based hard-thresholding method has been proposed by the authors. In addition,  a screening method has been proposed by~\cite{atamturk2020safe}.  Following coordinate descent~(CD) methods~\citep{breheny2011coordinate,mazumder2011sparsenet,friedman2010regularization,nesterov2012efficiency} for linear regression problems,~\cite{Hazimeh18} proposed an efficient CD based method to scale up the solutions of problem~\eqref{eq:primal1}. Their method can be improved with the proposed switch techniques that aim to escape from local solutions. 
Additionally, the combination of $\ell_0$ and $\ell_q$ ($q = 1$ or 2) are considered in many existing literature. See~\cite{liu2007variable, soubies2017unified} for theoretical analyses and details.

Following the studies in~\cite{pilanci2015sparse,Liu17,yuan2020dual},  we investigate the dual form of the generalized sparse problem~\eqref{eq:primal1}.  Under mild conditions, a strong duality theory has been established for problem~\eqref{eq:primal1}. A primal-dual algorithm~is proposed to further improve the efficiency and quality of solutions by leveraging the exploration in the~dual space along with coordinate screening and active incremental techniques~\citep{Fercoq2015,GAP,Ndiaye2017,atamturk2020safe,Celer,ren2020thunder}.  In summary, our contributions on theoretical side are three-folds. We first establish the duality theorem for the generalized non-convex sparse learning problem~\eqref{eq:primal1}. We second demonstrate that the derived duality allows us to adopt the screening and coordinate incremental strategies~\citep{Fercoq2015,GAP,Ndiaye2017,Celer,ren2020thunder} in $\ell_1$ solvers to boost the efficiency of the proposed  $\ell_0$ algorithm.
Thirdly, we provide  theoretical study of the proposed algorithms. Our theoretical analysis shows that the generalized sparse problem~\eqref{eq:primal1} can be solved within polynomial computation complexity. 
Experiments on both synthetic and real-world datasets show the advantages of our~method.

The rest of paper is organized as follows. In Section~\ref{sec:duality}, we formulate the dual form of the generalized sparse learning problem  and also introduce the  duality theory.
In Section~\ref{sec:algorithm}, we propose the new primal-dual algorithm improved with coordinate incremental techniques. Section~\ref{sec:analysis} presents our algorithm analysis. 
Experimental results are provided in Section~\ref{sec:experiment}.
A discussion is given in Section~\ref{sec:discuss}, and the concluding remark is presented in Section~\ref{sec:conclusion}.

\vspace{0.2in}

\noindent \textbf{Notation.}\ 
Symbol $\beta \in \mathbb R^p$ is used for the primal variable and $\alpha \in \mathbb R^n$ is for the dual variable. 
We use $\|\beta\|$, $\|\beta\|_0$ and $\|\beta\|_1$ to denote the $\ell_2$, $\ell_0$ and $\ell_1$ norm of $\beta$, respectively.
Functions $P(\beta)$ and $D(\alpha)$ represent the primal objective and the dual objective correspondingly. For matrix $X$, $\sigma_{max}(X)$/$\sigma_{min}(X)$ denotes its largest/smallest singular value. $\mathrm{supp}(\beta)$ is the support set of vector $\beta$, i.e. $\mathrm{supp}(\beta)=\{j | \beta_j \neq 0\}$. $S^c$ represents the complement of set $S$.  

\newpage

\section{Duality Theory of Sparse Learning}\label{sec:duality} 

This section extends the duality studies  in~\cite{pilanci2015sparse,Liu17,yuan2020dual} to the generalized sparse learning problem~\eqref{eq:primal1}. 
Let $X =[x_1, x_2, ..., x_n]^{\top}$ be the feature matrix, $y= [y_1, y_2, ..., y_n]^{\top}$ be the response vector and $n$ is the number of samples. 
Let $l^*_i(\alpha_i) = \max_{u\in \mathcal{F}} \{\alpha_i u - l_i(u)\}$ be the Fenchel conjugate~\citep{fenchel1949conjugate} of convex loss function $l_i(u)$ and $\mathcal{F} \subseteq \mathbb{R} $ be the feasible set of $\alpha_i$ regarding $l^*_i()$. According to the expression $l_i(u) = \max_{\alpha_i \in \mathcal{F}}\{\alpha_i u - l_i^*(\alpha_i)\}$, the primal problem can be reformulated into 
\begin{align} \label{eq:Lminmax}
\min_{\beta} \sum_{i=1}^n \max_{\alpha_i \in \mathcal{F}} &\big(\alpha_i \beta^{\top}x_i - l_i^*(\alpha_i) \big) + \lambda_0 ||\beta||_0    +  \lambda_1 ||\beta||_1 + \lambda_2 ||\beta||^2 .
 \end{align}
We use $L(\beta, \alpha)$  to represent the following objective
\begin{eqnarray} ~\label{eq:lagrangian}
L(\beta, \alpha) =  & \sum_{i=1}^n \big(\alpha_i \beta^{\top}x_i - l_i^*(\alpha_i) \big) + \lambda_0 ||\beta||_0    +  \lambda_1 ||\beta||_1 + \lambda_2 ||\beta||^2 .
 \end{eqnarray}
 Similar to the studies in~\cite{Liu17,yuan2020dual},  the  RIP (restricted strong condition number) bound conditions are not  explicitly required here.  Without  specifying $k$ in~\eqref{eq:hard_k}, our  duality theory is close to the standard duality paradigm, and thus generic primal-dual methods  may be used to further improve the solvers.

\subsection{Strong Duality}

We establish the  duality theory that guarantees the original non-convex  in~\eqref{eq:primal1} can be  solved in a dual space.  Following~\cite{yuan2020dual,Liu17}, we define the saddle point for the Lagrangian~\eqref{eq:lagrangian} of the generalized sparse learning~\eqref{eq:primal1}. 

\begin{definition} 
(Saddle Point). A pair $(\bar{\beta}, \bar{\alpha} ) \in \mathbb{R}^p \times \mathcal{F}^n$ is said to be a  saddle point for $L$~\eqref{eq:lagrangian}  if  the following holds 
 \begin{align}
L(\bar{\beta}, \alpha) \leq L(\bar{\beta}, \bar{\alpha})  \leq L( \beta, \bar{\alpha}) . \label{eq:saddlepoint}
\end{align}
\end{definition}
Different from the sparse saddle point in~\cite{yuan2020dual,Liu17} that requires $k$-sparse regarding the primal variable,  the saddle point defined here can be taken as a generalized saddle point.  Given $\alpha \in \mathcal{F}^n$, we further define  $\eta(\alpha) := - \frac{1}{2\lambda_2} \sum_{i=1}^n \alpha_i x_i  = - \frac{1}{2\lambda_2}  X^{\top}\alpha $, $\eta_0 := \frac{ 2\sqrt{\lambda_0 \lambda_2} +  \lambda_1}{2\lambda_2}$, and 
\begin{align} \label{eq:B_frak}
\beta_j(\alpha):=  \mathfrak{B}(\eta_j(\alpha))  :=  \begin{cases}
\mathrm{sign}\big(\eta_j(\alpha)\big) \big(|\eta_j(\alpha)| -\frac{\lambda_1 }{2\lambda_2} \big) &\text{if} \  \    |\eta_j(\alpha)|  > \eta_0 \\
\big\{0, \mathrm{sign}\big(\eta_j(\alpha)\big) \big(|\eta_j(\alpha)| -\frac{\lambda_1 }{2\lambda_2} \big) \big\}  &\text{if}  \  \  |\eta_j(\alpha)|  = \eta_0  \\
0  &\text{if}  \  \   |\eta_j(\alpha)|  < \eta_0
\end{cases}
.
\end{align}
Moreover, we define 
\begin{align}\label{eq:Psi_1}
&\Psi(\eta_j(\alpha); \lambda_0,  \lambda_1, \lambda_2)   
:=  \begin{cases}
- \lambda_2 (|\eta_j(\alpha)| - \frac{\lambda_1}{2 \lambda_2})^2 + \lambda_0   &\text{if} \  \     |\eta_j(\alpha)|  > \eta_0  \\
\big\{0, - \lambda_2 (|\eta_j(\alpha)| - \frac{\lambda_1}{2 \lambda_2})^2 + \lambda_0  \big\} \quad  &\text{if}  \  \    |\eta_j(\alpha)|  = \eta_0 \\
0  &\text{if}  \  \   |\eta_j(\alpha)|  < \eta_0
\end{cases} .
\end{align}

\newpage

The following  theorem establishes the duality theory regarding the generalized sparse problem~\eqref{eq:primal1}. 

\begin{theorem}\label{Thm:minimax}
Let $\bar{\beta} \in \mathbb{R}^p $ be a primal vector and $\bar{\alpha} \in \mathcal{F}^n$  regarding L, then

\begin{enumerate} 
\item $ (\bar{\alpha}, \bar{\beta})$ is a  saddle point of $L$ if and only if the following conditions hold:
\begin{enumerate}
\item $\bar{\beta}$ solves the primal problem; 
 
\item  $\bar{\alpha} \in [\partial l_1(\bar{\beta}^{\top}x_1),  \partial l_2(\bar{\beta}^{\top}x_2), ..., \partial l_n(\bar{\beta}^{\top}x_n)]^{\top}$;
\item
$\bar{\beta}_j =   \mathfrak{B}(\eta_j(\bar{\alpha}))$.
\end{enumerate}

 \item The mini-max relationship 
\begin{align} \label{eq:th2}
\max_{\alpha \in \mathcal{F}^n} \min_{\beta} L(\beta, \alpha) = \min_{\beta} \max_{\alpha \in  \mathcal{F}^n} L(\beta, \alpha) .
\end{align}
holds if and only if there exists a saddle point $(\bar{\beta}, \bar{\alpha})$ for L. 

\item The corresponding dual problem of~\eqref{eq:primal1} is  written as
\begin{align} \label{eq:dual}
  &\max_{\alpha \in \mathcal{F}^n} D(\alpha) =  \max_{\alpha \in \mathcal{F}^n}-  \sum_{i=1}^n  l_i^*(\alpha_i)   + \sum_{j=1}^p \Psi(\eta_j(\alpha); \lambda_0,  \lambda_1, \lambda_2),
\end{align} 
where  $l^*$ is the conjugate function of $l$.  The primal dual link is written as $\beta_j(\alpha)  =  \mathfrak{B}(\eta_j(\alpha))$.

\item (Strong duality) $\bar{\alpha}$ solves the dual problem in~\eqref{eq:dual}, i.e., $D(\bar{\alpha}) \geq D(\alpha), \alpha \in \mathcal{F}^n$, and $P(\bar{\beta}) = D(\bar{\alpha})$ if and only if the pair $(\bar{\beta}, \bar{\alpha})$ satisfies the  three conditions given by (a)$\sim$(c).

\end{enumerate}
\end{theorem}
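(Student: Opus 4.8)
The plan is to reduce every part of the theorem to two closed-form computations on the Lagrangian $L(\beta,\alpha)$ of \eqref{eq:lagrangian}: maximizing out $\alpha$, and minimizing out $\beta$. First I would show that $\max_{\alpha\in\mathcal F^n}L(\beta,\alpha)=F(\beta)=P(\beta)$ for every $\beta$. Since $L$ is separable in the $\alpha_i$, this is just $\max_{\alpha_i\in\mathcal F}\big(\alpha_i\,\beta^\top x_i-l_i^*(\alpha_i)\big)=l_i^{**}(\beta^\top x_i)=l_i(\beta^\top x_i)$, the Fenchel--Moreau biconjugation for the closed convex loss $l_i$; moreover the set of maximizers is exactly $\partial l_i(\beta^\top x_i)$ by the conjugate subgradient theorem. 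Next I would compute $\min_{\beta\in\mathbb R^p}L(\beta,\alpha)=D(\alpha)$. The $\ell_0$, $\ell_1$ and squared-$\ell_2$ terms are separable across coordinates, and the linear term is $(X^\top\alpha)^\top\beta$ with $(X^\top\alpha)_j=-2\lambda_2\,\eta_j(\alpha)$, so this reduces to the scalar problem of minimizing $g(t)=c\,t+\lambda_0\mathbf{1}[t\neq 0]+\lambda_1|t|+\lambda_2 t^2$ with $c=-2\lambda_2\eta_j(\alpha)$. Splitting into $t=0$, $t>0$, $t<0$: the quadratic branch has an interior minimizer only when $|\eta_j(\alpha)|>\lambda_1/(2\lambda_2)$, located at $\mathrm{sign}(\eta_j(\alpha))\big(|\eta_j(\alpha)|-\lambda_1/(2\lambda_2)\big)$ with value $\lambda_0-\lambda_2\big(|\eta_j(\alpha)|-\lambda_1/(2\lambda_2)\big)^2$; comparing this against $g(0)=0$ produces the threshold $|\eta_j(\alpha)|=\eta_0$, and the argmin is then exactly $\mathfrak B(\eta_j(\alpha))$ while the minimum value is exactly $\Psi(\eta_j(\alpha);\lambda_0,\lambda_1,\lambda_2)$ (the two candidates coincide at $|\eta_j(\alpha)|=\eta_0$, which is why $\mathfrak B$ and $\Psi$ are set-valued there). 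This already delivers item 3 and the primal--dual link.

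With these two identities in hand, items 1, 2 and 4 are bookkeeping around the saddle-point inequalities. For item 1: the left inequality $L(\bar\beta,\alpha)\le L(\bar\beta,\bar\alpha)$ for all $\alpha$ says $\bar\alpha$ attains $\max_\alpha L(\bar\beta,\alpha)$, which by the first computation is exactly condition (b); the right inequality says $\bar\beta$ attains $\min_\beta L(\beta,\bar\alpha)$, which by the (coordinatewise) second computation is exactly condition (c); and once both hold, the chain $F(\bar\beta)=\max_\alpha L(\bar\beta,\alpha)=L(\bar\beta,\bar\alpha)=\min_\beta L(\beta,\bar\alpha)\le L(\beta,\bar\alpha)\le F(\beta)$ forces (a) for every $\beta$. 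A saddle point also gives (a) by the same chain, so ``saddle point'' $\Longleftrightarrow$ (a)$\wedge$(b)$\wedge$(c). For item 2, ``saddle point $\Rightarrow$ minimax equality'' follows from that chain together with the always-valid weak duality $\max_\alpha\min_\beta L\le\min_\beta\max_\alpha L$; the converse uses that both outer problems attain their optima (the primal because $F$ is lower semicontinuous and coercive for $\lambda_2>0$; the dual because $D$ is upper semicontinuous with compact sublevel sets under the stated assumptions), whence any pair of optimizers of the two equal extremal values satisfies the saddle inequalities. For item 4: if (a),(b),(c) hold then $(\bar\beta,\bar\alpha)$ is a saddle point, so $D(\bar\alpha)=L(\bar\beta,\bar\alpha)=P(\bar\beta)$ and $\max_\alpha D=\min_\beta P=P(\bar\beta)$, giving both dual optimality of $\bar\alpha$ and $P(\bar\beta)=D(\bar\alpha)$; conversely $P(\bar\beta)=D(\bar\alpha)=\max_\alpha D\le\min_\beta P\le P(\bar\beta)$ (middle step: weak duality) forces $P(\bar\beta)=\min_\beta P$, i.e. (a), and forces $L(\bar\beta,\bar\alpha)$ to equal both $\max_\alpha L(\bar\beta,\alpha)$ and $\min_\beta L(\beta,\bar\alpha)$, i.e. (b) and (c).

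I expect the scalar minimization in the second computation to be the main obstacle --- not conceptually, but in getting every case right: the interplay of the two thresholds $\lambda_1/(2\lambda_2)$ (below which the quadratic branch has no interior minimizer) and $\eta_0$ (above which the nonzero value beats $0$), the sign bookkeeping, and verifying that the two candidate values agree when $|\eta_j(\alpha)|=\eta_0$ so that $\Psi$ is in fact continuous there. The only non-elementary inputs elsewhere are the biconjugation/subgradient correspondence for $l_i$ (requiring $l_i$ proper, closed and convex, so $l_i^{**}=l_i$ and the maximizer set equals $\partial l_i$) and the mild coercivity/compactness remark guaranteeing attainment of the dual optimum in the ``only if'' half of item 2; both are standard under the paper's blanket hypotheses ($\lambda_0,\lambda_1,\lambda_2>0$, convex loss).
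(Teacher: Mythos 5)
Your proposal is correct and takes essentially the same route as the paper: the coordinatewise scalar minimization over $\beta$ producing $\mathfrak{B}$ and $\Psi$ (the paper's Lemma~\ref{Lem:lemma1}), the conjugate/maximizing-argument identity $\max_{\alpha\in\mathcal F^n}L(\beta,\alpha)=P(\beta)$ with maximizer set $\partial l_i(\beta^{\top}x_i)$, and the same inequality chains for the saddle-point characterization, the minimax equality, and strong duality. If anything, you are slightly more careful than the paper in the ``minimax $\Rightarrow$ saddle point'' direction, where the paper simply posits optimizers of the two outer problems while you explicitly flag the attainment/coercivity issue.
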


Here $\eta_0$ is the threshold which controls the sparsity of the solution. Larger $\lambda_0$ and $\lambda_1$ lead to  sparser estimator. 
The mini-max result in  Theorem~\ref{Thm:minimax}-1 gives the sufficient and necessary conditions to guarantee the existence of a  saddle point for the Lagrangian.  Theorem~\ref{Thm:minimax}-2 can be used to  establish the duality theory, and it is the min-max side of the problem, and it  provides sufficient and necessary conditions under which one can safely exchange a min-max for a max-min regarding $L$~\eqref{eq:lagrangian}. 
\begin{remark}
Applying Theorem~\ref{Thm:minimax}, we have the following mini-max relationship 
\begin{align} \label{eq:rmk1}
\max_{\alpha \in \mathcal{F}^n} \min_{\beta} L(\beta, \alpha) = \min_{\beta} \max_{\alpha \in  \mathcal{F}^n} L(\beta, \alpha) .
\end{align}s
holds if and only if there exists a primal vector $\bar{\beta} \in \mathbb{R}^p$ and a dual vector $\bar{\alpha} \in \mathcal{F}^n$ such that conditions (a) $\sim$  (c) in Theorem\ref{Thm:minimax}-1 are satisfied.
Moreover, by calculations, it can be checked that \eqref{eq:rmk1} holds automatically for $ l(\cdot)$ being the square loss function.
\end{remark}


We use $P(\beta) = F(\beta)$ to represent the primal objective, and $D(\alpha)$ for the dual objective given in~(\ref{eq:dual}). 
Theorem~\ref{Thm:minimax}-3 indicates that the dual objective function $D(\alpha)$ is concave and the following remark explicitly gives the expression of its super-differential.
\begin{remark}\label{rmk:sup-grad}
The super-differential of  the dual form~\eqref{eq:dual} at $\alpha$ is given by 
$\nabla D(\alpha) = X\beta(\alpha) - l^{*'}(\alpha) = [\beta(\alpha)^{\top} x_1 -   {l^{*}_1}^{'}(\alpha_1), ..., \beta(\alpha)^{\top} x_n - {l^{*}_n}^{'}(\alpha_n )]^{\top}$.
\end{remark}

\newpage

The super-gradient can be alternatively derived through the partial derivative of the Lagrangian L~\eqref{eq:lagrangian} regarding $\alpha$. The sparse strong duality theory in Theorem~\ref{Thm:minimax}-4 gives the sufficient and necessary conditions under which the optimal values of the primal and dual problems~coincide. We define the  primal-dual gap  as
\begin{align}\label{eq:dgap}
\xi(\beta, \alpha) = P(\beta) - D(\alpha).
\end{align}
According to Theorem~\ref{Thm:minimax}-4, the  primal-dual gap reaches zero at the primal-dual pair $(\beta, \alpha)$ if and only if the conditions (a) $\sim$ (c) in Theorem~\ref{Thm:minimax}-1 hold. The  duality theory developed in this section suggests a natural way for finding the global minimum of the sparsity-constrained minimization problem in~\eqref{eq:primal1} via  primal-dual optimization methods. Let $(l^{*'})^{-1}$ be the inverse of $l^{*'}$, we have the following remark with $0 \in \nabla D(\bar{\alpha})$ at $\bar{\alpha}$.
\begin{remark}\label{rmk:dual_condt}
If $(\bar{\beta}, \bar{\alpha})$ satisfies the conditions in Theorem~\ref{Thm:minimax}-1, we have $\bar{\alpha} \in (l^{*'})^{-1}(X\bar{\beta}) \cap \mathcal{F}^n$. 
\end{remark}

Strong duality holds when both the primal and dual variables reach the optimal values. Before attaining the optimal values, the duality gap value can be bounded  by the current dual variable estimations. The closer the current estimation $\alpha$ and the  optimal value $\bar{\alpha}$ are, the smaller duality gap will be. Under special cases that the support of $\bar \beta$ is recovered, then the objective function becomes a convex function since $\|\beta^t\|_0$ remains a constant. Then strong duality holds naturally. 
In practice, the condition (a) of Theorem~\ref{Thm:minimax}-1 is hard to be satisfied because of the non-convexity of the primal problem. However, as long as $\beta$ reaches its optimal value, all the conditions in Theorem~\ref{Thm:minimax}-1 can be met because the dual problem is concave.

Different from~\cite{yuan2020dual,Liu17}, we study a generalized sparse problem. The methodology developed here can be easily extended to plain $\ell_1$ or $\ell_0$ problems (with  the $\ell_2$ term),  group sparse structures or fused sparse structures,  or even more  complex and  mixed sparse structures that we cannot or do not need to specify the active feature number $k$ value as in~\eqref{eq:hard_k}.

\subsection{Properties of Generalized Sparse Learning}\label{sec:sparse_prop}

In this paper, we study the duality of generalized sparse learning problem.  Based on the  strong duality of $\ell_0$ problem~\eqref{eq:primal1}, screening methods~\citep{Fercoq2015,GAP,Ndiaye2017} and coordinate increasing techniques~\citep{Celer,ren2020thunder} can be implemented to gain an improvement in algorithm efficiency. Following the Gap screening~\citep{Fercoq2015,GAP} for Lasso problem, we have the following theorem regarding the duality gap. 

\begin{theorem}\label{Thm:ball}
Assume that the primal loss functions $\{l_i(\cdot)\}_{i=1}^n$ are $1/\mu$-strongly smooth.  The range of the dual variable is  bounded via the duality gap value, i.e.,  $\forall \alpha \in \mathcal{F}^n, \beta \in \mathbb{R}^p$, $\{B(\alpha; r) : || \alpha -  \bar{\alpha}||_2  \leq r, r = \sqrt{ \frac{2(P(\beta) - D(\alpha)) } {\gamma}} \} $. 
Here $\gamma$ is a positive constant and $\gamma \geq \mu$. 
\end{theorem}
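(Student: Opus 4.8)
The plan is to read Theorem~\ref{Thm:ball} as the ``gap safe sphere'' statement familiar from screening for Lasso: once we know that the (concave) dual objective $D$ is in fact $\gamma$-strongly concave, the distance from any feasible $\alpha$ to a dual maximizer $\bar\alpha$ is controlled by the dual suboptimality $D(\bar\alpha)-D(\alpha)$, which weak duality bounds by the computable gap $P(\beta)-D(\alpha)$. So the argument splits into (i) proving strong concavity of $D$ with modulus $\gamma\ge\mu$, (ii) the quadratic-growth inequality at the maximizer, and (iii) assembling the two estimates.

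For step (i) I would invoke the Fenchel correspondence between strong smoothness and strong convexity: since each $l_i$ is $1/\mu$-strongly smooth, its conjugate $l_i^*$ is $\mu$-strongly convex, and since $l_i^*(\alpha_i)$ involves only the coordinate $\alpha_i$, the term $-\sum_{i=1}^n l_i^*(\alpha_i)$ is $\mu$-strongly concave on $\mathcal{F}^n$. For the remaining $\sum_{j=1}^p\Psi(\eta_j(\alpha))$, note that $\eta_j(\alpha)=-\frac{1}{2\lambda_2}(X^\top\alpha)_j$ is affine in $\alpha$, so it suffices to check that $t\mapsto\Psi(t;\lambda_0,\lambda_1,\lambda_2)$ from~\eqref{eq:Psi_1} is concave on $\mathbb{R}$: it vanishes identically on $\{|t|<\eta_0\}$, is a downward parabola on each of $\{t>\eta_0\}$ and $\{t<-\eta_0\}$, is continuous at $t=\pm\eta_0$ (both branches equal $0$ there), and has non-increasing one-sided slopes across those two junctions (the inner slope is $0$, the outer slopes are $\mp 2\sqrt{\lambda_0\lambda_2}$ at $t=\pm\eta_0$), hence it is concave. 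Therefore $\sum_j\Psi(\eta_j(\alpha))$ is concave in $\alpha$ (which also re-derives the concavity of $D$ asserted in Theorem~\ref{Thm:minimax}-3), and adding it to a $\mu$-strongly concave function leaves a $\gamma$-strongly concave function with modulus $\gamma\ge\mu$.

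For step (ii), since $\mathcal{F}^n$ is convex ($\mathcal{F}=\mathrm{dom}(l_i^*)$ is an interval) and $\bar\alpha$ maximizes the $\gamma$-strongly concave $D$ over $\mathcal{F}^n$, one has the quadratic-growth bound
\[
D(\bar\alpha)-D(\alpha)\ \ge\ \tfrac{\gamma}{2}\,\|\alpha-\bar\alpha\|_2^2\qquad\text{for all }\alpha\in\mathcal{F}^n,
\]
obtained by writing $\gamma$-strong concavity along the segment $\bar\alpha+t(\alpha-\bar\alpha)\in\mathcal{F}^n$, using $D(\bar\alpha+t(\alpha-\bar\alpha))\le D(\bar\alpha)$, dividing by $t>0$ and letting $t\downarrow 0$. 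For step (iii), weak duality gives, for every $\beta\in\mathbb{R}^p$, $D(\bar\alpha)=\min_{\beta'}L(\beta',\bar\alpha)\le L(\beta,\bar\alpha)\le\max_{\alpha'\in\mathcal{F}^n}L(\beta,\alpha')=P(\beta)$ (equivalently, combine $D(\bar\alpha)=P(\bar\beta)$ from Theorem~\ref{Thm:minimax}-4 with the optimality of $\bar\beta$). Chaining the two estimates,
\[
\tfrac{\gamma}{2}\,\|\alpha-\bar\alpha\|_2^2\ \le\ D(\bar\alpha)-D(\alpha)\ \le\ P(\beta)-D(\alpha),
\]
whence $\|\alpha-\bar\alpha\|_2\le\sqrt{2\,(P(\beta)-D(\alpha))/\gamma}=r$, i.e.\ $\bar\alpha\in B(\alpha;r)$.

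The only genuine obstacle I expect is step (i): one must confirm that the non-smooth, piecewise term $\sum_j\Psi(\eta_j(\alpha))$ is truly concave, so that it cannot erode the $\mu$-strong concavity inherited from the losses. This is exactly where the particular threshold $\eta_0=\frac{2\sqrt{\lambda_0\lambda_2}+\lambda_1}{2\lambda_2}$ enters, since it is the point at which the two branches of $\Psi$ join continuously and with non-increasing slope; away from that check the argument is the routine strong-concavity/weak-duality computation, and one should also note that the whole dual formulation tacitly assumes $\lambda_2>0$.
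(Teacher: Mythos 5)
Your proof is correct and takes essentially the same route as the paper: strong smoothness of $l_i$ gives $\mu$-strong convexity of $l_i^*$, concavity of $\sum_{j}\Psi(\eta_j(\alpha))$ (affine $\eta_j$ composed with a concave scalar $\Psi$) makes $D$ $\gamma$-strongly concave with $\gamma\ge\mu$, and combining the quadratic-growth bound at the dual maximizer with weak duality $D(\bar\alpha)\le P(\beta)$ yields the ball of radius $r=\sqrt{2(P(\beta)-D(\alpha))/\gamma}$. If anything, your one-dimensional slope check at the junctions $|t|=\eta_0$ and the segment-and-limit argument at the (possibly boundary) maximizer are more careful than the paper's piecewise Hessian computation and its supergradient optimality inequality, which the paper additionally uses to claim the sharper modulus $\gamma=\mu+\inf_{S}\sigma_{min}(X_S)/(2\lambda_2)$ that your argument does not (and for the stated theorem need not) recover.
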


Let $x_{\cdot i}$ be the $i$th column of X, according to the definition of $\mathfrak{B}$ in~\eqref{eq:B_frak}, the activity of feature $j$ is determined by the magnitude of $\eta_j(\bar{\alpha})$, i.e.,  $|\eta_j(\bar{\alpha})| = \frac{1}{2\lambda_2} |x^{\top}_{\cdot j} \bar{\alpha}|$.  With the ball region estimation for $\bar{\alpha}$ in Theorem~\ref{Thm:ball}, we can estimate the activity of a feature with the value of current $\alpha$. Let  $r= \sqrt{ \frac{2(P(\beta) - D(\alpha)) } {\gamma}}$ be the radius of the estimated ball range for $\bar{\alpha}$ using current $\alpha$ and $\beta$ solutions. Then 
$\big||x^{\top}_{\cdot j} \alpha| - ||x_{\cdot j}||_2 r\big| \leq |x^{\top}_{\cdot j} \bar{\alpha}| \leq |x^{\top}_{\cdot j} \alpha| + ||x_{\cdot j} ||_2 r $ and
 we get $|\eta_i(\bar{\alpha})| \leq  \frac{1}{2\lambda_2}\big(|x^{\top}_{\cdot j} \alpha| + ||x_{\cdot j} ||_2 r\big) <\eta_0  \implies x_{\cdot j}  \  \text{is an inactive feature}.$
It implies
\begin{align*}
&\text{Upper Bound}: |x^{\top}_{\cdot j} \alpha| + ||x_{\cdot j} ||_2 r  <2\lambda_2 \eta_0 = 2\sqrt{\lambda_0 \lambda_2} +  \lambda_1   \implies j \notin \mathrm{supp}(\bar{\beta}), \\
&\text{Lower Bound}: \big| |x^{\top}_{\cdot j} \alpha| - ||x_{\cdot j} ||_2 r \big| > 2\lambda_2 \eta_0 = 2\sqrt{\lambda_0 \lambda_2} +  \lambda_1  \implies j \in \mathrm{supp}(\bar{\beta}). 
\end{align*}
According to the derived dual objective~\eqref{eq:dual} and equations~\eqref{eq:B_frak}-\eqref{eq:Psi_1}, a feature's activity is determined by its product with the optimal dual variable $\bar{\alpha}$, e.g., $\eta_j(\bar{\alpha})$ for feature $j$. The dual range estimation $\bar{\alpha}$ ($B(\bar{\alpha};r) $)  allows us to perform  feature screen in order to improve algorithm efficiency by following the approach for Lasso~\citep{Fercoq2015,GAP,Ndiaye2017}. As the support set $S$ is unknown, we just set $\gamma = \mu$ to ensure the safety of the feature screening. Here safety means the screening operation does not remove any feature belonging to $S=\mathrm{supp}(\bar{\beta})$. The framework proposed in this paper lays a broader bridge between screening methods and the solutions of $\ell_0$ regularized problems.

\section{Algorithm}\label{sec:algorithm}

With the mini-max relationship in Theorem~\ref{Thm:minimax} regarding Lagrangian form $L(\beta, \alpha)$~\eqref{eq:lagrangian} , we first develop a primal-dual algorithm to update both $\alpha$ and $\beta$. The dual objective $D(\alpha)$ is a non-smooth function as the term $\Psi()$ regarding $\alpha$ is non-smooth due to the truncation operation.  We focus on the following simplified dual form 
\begin{align} \label{eq:dual2}
   &\max_{\alpha \in \mathcal{F}^n} -  \sum_{i=1}^n  l_i^*(\alpha_i)   + \sum_{j=1}^p \Psi(\eta_j(\alpha); \lambda_0,  \lambda_1, \lambda_2),  \ \text{and}\\
 \label{eq:psi}
&\Psi(\eta_j(\alpha); \lambda_0,  \lambda_1, \lambda_2) 
 =   \begin{cases}
- \lambda_2 (|\eta_j(\alpha)| - \frac{\lambda_1}{2 \lambda_2})^2 + \lambda_0   &\text{if} \  \     |\eta_j(\alpha)|  \geq \eta_0 \\
0  &\text{if}  \  \   |\eta_j(\alpha)|  < \eta_0
\end{cases} .
\end{align}
The primal dual link is 
\begin{align} \label{eq:bigB}
&  \beta_j(\alpha)  =    \mathfrak{B}(\eta_j(\alpha)) = \begin{cases}
\mathrm{sign}\big(\eta_j(\alpha)\big) \big(|\eta_j(\alpha)| -\frac{\lambda_1 }{2\lambda_2} \big) &\text{if} \  \    |\eta_j(\alpha)|  \geq \eta_0 \\
0  &\text{if}  \  \   |\eta_j(\alpha)|  < \eta_0
\end{cases}
.
\end{align}
The super-gradient  regarding the dual variable  $\nabla_{\alpha} D(\alpha) =  X\beta(\alpha) -{l^*}^{'}(\alpha)$ can be taken as the partial derivative of $L(\beta, \alpha)$. We  give the dual problems of two objective functions in the supplements, and we will focus on linear regression to present the proposed algorithms.

\subsection{Primal-dual Updating for Linear Regression}

 We use linear regression as an example to illustrate the proposed primal-dual inner solver of $\ell_0$ regularized problems.
For least square problem, the primal form is
\begin{align*}
&\min_{\beta \in \mathbb{R}^p} P(\beta) = f(\beta) + \lambda_0 ||\beta||_0, \ \text{and} \ f(\beta) =   \frac{1}{2}|| y - X\beta||^2_2   +  \lambda_1 ||\beta||_1 +\lambda_2 ||\beta||_2^2.
\end{align*}
For least square problem, $l_i(u; y_i) = \frac{1}{2}(y_i - u)^2$, then
$l_i^*(\alpha_i)=  \frac{1}{2}\big((y_i + \alpha_i)^2 - y_i^2 \big) =\frac{1}{2} \alpha_i^2 +y_i\alpha_i $. Here $u= x_i^{\top} \beta$. Thus the dual problem is
\begin{align} \label{eq:dual3}
\max_{\alpha}  D(\alpha) =   & - \frac{1}{2}\alpha^{\top}\alpha  - y^{\top}\alpha  +  \sum_{j=1}^p \Psi( - \frac{1}{2\lambda_2} \sum_{i=1}^n \bar{\alpha}_i x_i; \lambda_0, \lambda_1, \lambda_2).
\end{align}
The corresponding super-gradient can be easily computed, i.e.,
$g_{\alpha} =  \nabla_{\alpha} D(\alpha) = X\beta - Y - \alpha.$ After the super gradient ascent for the dual variables, we apply the primal-dual link function to get the variable in the primal space.

The dual objective $D(\alpha)$ is non-smooth. The super-gradient $g_{\alpha} =  \nabla_{\alpha} D(\alpha)$ can be improved with a more accurate primal variable estimation for the Lagrangian form~\eqref{eq:lagrangian} with the mini-max relationship in Theorem~\ref{Thm:minimax}.  
We use coordinate descent~(CD)~\citep{Hazimeh18} to improve the estimation of  primal variable as
\begin{align} \label{eq:thresh}
&\beta_j = T(\mathbf{\beta}; \lambda_0, \lambda_1, \lambda_2)   =  \begin{cases}
\mathrm{sign}(\tilde{\mathbf{\beta}}_j) \frac{|\tilde{\mathbf{\beta}}_j| - \lambda_1}{1 + 2\lambda_2} \quad  &\text{if} \ \frac{|\tilde{\mathbf{\beta}}_j| - \lambda_1}{1 + 2\lambda_2}  \geq \sqrt{\frac{2\lambda_0}{1+2\lambda_2}} \\ 
0 \quad  &\text{if} \ \frac{|\tilde{\mathbf{\beta}}_j| - \lambda_1}{1 + 2\lambda_2}  < \sqrt{\frac{2\lambda_0}{1+2\lambda_2}} 
\end{cases} .
\end{align}

Here $\tilde{\mathbf{\beta}}_j = (y - X\mathbf{\beta})^{\top} x_{\cdot j} + \mathbf{\beta}_j  x_{\cdot  j}^{\top}x_{\cdot j}$, and $x_{\cdot j}$ is the $j^{th}$ column of $X$, and it is also named the $j^{th}$  feature. The operation $T()$ always decreases the primal objective, i.e. with $\beta_b = T(\beta_a)$, we always have $P(\beta_b) \leq P(\beta_a)$, and hence a smaller duality gap.

\begin{algorithm}[h]
\caption{Inner solver with primal-dual  updating  }\label{alg:primal_dual_inner} 
	\KwInput{data $\{X, y\}$; $\lambda_0$, $\lambda_1$, $\lambda_2$; step size $\omega_t$ at step $t$; initial $\mathbf{\alpha}^0$,  $\mathbf{\beta}^0$}
	\KwResult{ $\mathbf{\alpha}^t$, $\mathbf{\beta}^t$ \vspace{-0.1in}}
	\hrulefill\\	
	 $t\leftarrow 0$;\\ 
	\While{$\overline{DGap}$ decreasing}{
	\textcolor{blue}{//Super-gradient } \\
	 $g_{\mathbf{\alpha}}^{t} =[{\beta^{t-1}}^{\top} x_1-  {l^{*}_1}^{'}(\alpha^{t-1}_1), ..., {\beta^{t-1}}^{\top} x_n - {l^{*}_n}^{'}(\alpha^{t-1}_n ) ]^{\top}$; \\
	  \textcolor{blue}{//Dual ascent with feasible projection } \\
	 $\alpha^t  =\mathcal{P}_{\mathcal{F}}( \alpha^{t-1} + \omega_t g_{\mathbf{\alpha}}^{t})$ ;  \\
	 $\eta(\alpha^t) = - \frac{1}{2\lambda_2} \sum_{i=1}^n \alpha^t_i x_i$; \\
	 \textcolor{blue}{//Primal-dual relation} \\
	 $ \mathbf{\beta}^t  =   \mathfrak{B}(\eta_j(\alpha^t); \lambda_0, \lambda_1, \lambda_2 ) $;  \\
	  \textcolor{blue}{//Primal coordinate descent} \\
	 $ \mathbf{\beta}^t = T(\beta^t; \lambda_0, \lambda_1, \lambda_2 ) $; \\
	 Compute duality gap $\overline{DGap}$ with $\beta_t$ and $\alpha_t$ ; \\
	 $t\leftarrow t+1$; 
	}
\end{algorithm}

 The proposed primal-dual updating procedure is given by Algorithm~\ref{alg:primal_dual_inner}. 
The primal coordinate descent $T()$ improves the solution from primal-dual relation $ \mathfrak{B}$. $\omega_t$ is the step size at $t$, and should be decreasing with $t$.  We use Algorithm~\ref{alg:primal_dual_inner} as the backbone solver in our primal-dual algorithm, and  $\overline{DGap}$ is the  sub-problem's duality gap    achieved by the inner solver.

Moreover, according to Remark~\ref{rmk:dual_condt}, the optimal dual $\bar{\alpha} \in (l^{*'})^{-1}(X\bar{\beta}) \cap \mathcal{F}^n$. For a   solver in primal space, we can use this equation to find a point in dual space and then compute the duality gap to evaluate current solution. For linear regression, with $\beta$ we have $\alpha = X\beta -y $, then we can compute the duality gap via~\eqref{eq:dgap}.

\subsection{Improve Efficiency with Active Incremental Strategy}\label{sec:increment_strategy}

For sparse models, most of the features are redundant and they  incur extra computation costs.  The derived dual problem structure and the duality property provide an approach to implement feature screening~\citep{Fercoq2015,GAP,Ndiaye2017} and feature active incremental strategy~\citep{Celer,ren2020thunder}. According to the analysis in Section~\ref{sec:sparse_prop}, the activity of a feature $x_{\cdot j}$ depends on the value of $\eta_j(\bar{\alpha})$, i.e.,  $|\eta_j(\bar{\alpha})| = \frac{1}{2\lambda_2} |x^{\top}_{\cdot j} \bar{\alpha}|$. We use the current estimation range of $\bar{\alpha}$, i.e., $B(\bar{\alpha}; r^s)$ to approximate the value of $\eta_j(\bar{\alpha})$. Here $s$ is the step number in the outer loop of the algorithm, and $\beta^s$ and $\alpha^s$ are the primal dual solutions at step $s$. The ball radius~$r^s= \sqrt{ \frac{2(P(\beta^s) - D(\alpha^s)) } {\gamma}}$~for $\bar{\alpha}$ depends the duality gap at step $s$ according to Theorem~\ref{Thm:ball}.

\begin{algorithm}[b!]
\caption{Algorithm for Feature Adding}\label{alg:add}
\KwInput{$\alpha^s$, $r^s$, $\mathcal{A}^s$, $\mathcal{R}^s$ } 
	\KwResult{ $\mathcal{R}^{s+1}$, $\mathcal{A}^{s+1}$  }
	\vspace{-0.1in}
	\hrulefill \\
	Set  $h = \lceil c\log(p) \rceil$, \, $\mathcal{H}\leftarrow$ Select $h$ features according to the descending order of $x_{\cdot j}^{\top}\alpha^s, j\in \mathcal{R}^s$;
	
	$\mathcal{A}^{s+1} \leftarrow \mathcal{A}^{s} \cup \mathcal{H}$, \  \ 
	$\mathcal{R}^{s+1} \leftarrow \mathcal{R}^{s} \setminus \mathcal{H}$ ; \\
\end{algorithm}

\begin{algorithm}[b!]
	 \caption{Primal-dual algorithm \vspace{-0.05in} }\label{alg:imprv_primal_dual}
\KwInput{ $X$, $y$; $\lambda_0$, $\lambda_1$, $\lambda_2$; $\xi$ }
\KwResult{ $\beta^s$ }
\vspace{-0.1in}
\hrulefill \\
     Choose a small set of features $\mathcal{A}_0$  from $X$ in the descending order of $|X^{\top}l'(\mathbf{0})|$, and $\mathcal{R}_0$ represents the rest features; \\ $s \leftarrow 0$; $\beta^0 \leftarrow 0$; $\alpha^0 \leftarrow 0$; $DoAdd = True$; \\
	\While{\textbf{True}}
	{
    	\textcolor{blue}{//Sub-problem solver} \\
	    $\alpha^s, \tilde{\beta}^s \leftarrow$ Solve the sub-problem with feature set $\mathcal{A}^s$   via Algorithm~\ref{alg:primal_dual_inner}; \\  
	    $\beta^s \leftarrow$ put $\tilde{\beta}^s$ in a size $p$ vector and set entries not in $\mathcal{A}^s$  zero;
	    
		\textcolor{blue}{//Dual range estimation} \\
		  Compute  the duality gap $DGap$ and  the ball region $B(\alpha^s; r^s)$ with $r^s= \sqrt{ \frac{2(P(\beta^s) - D(\alpha^s)) } {\gamma}}$ ; 
		  
		\If{$ DGap <\xi$  }
		{
		  \textbf{Stop}; \textcolor{blue}{//Algorithm exits }
		    }
		Feature Screening with $B(\alpha^s; r^s)$ and~\eqref{eq:screen_rule};  \\
			\If{ $DoAdd$ }
    		{\textcolor{blue}{//Feature Adding} \\
    		    \If{ $\max_{j \in \mathcal{R}^s} |x^{\top}_{\cdot j} \alpha^s| + \|x_{\cdot j} \|_2 r^s  <2\lambda_2 \eta_0$ }
        		{
        		    $DoAdd = False$; Continue;
        		}
    	    	Feature Adding with $\alpha^s$ and Algorithm~\ref{alg:add}, and update $\mathcal{A}^s$ and $\mathcal{R}^s$; 
    		 }
    	 $s\leftarrow s+1$;  
}
\end{algorithm}

The proposed primal-dual algorithm for $\ell_0$ is given by Algorithm~\ref{alg:imprv_primal_dual}.  Algorithm~\ref{alg:imprv_primal_dual} starts with a small active set $\mathcal{A}$, and then increase the active set's size after solving each sub-problem. We use $\mathcal{R}$ to represent the set of features  not used by the sub-problem solver. The feature adding algorithm is given by Algorithm~\ref{alg:add}. Moreover, we can derive a gap-screening algorithm~\citep{Fercoq2015,GAP} by using the upper bound of $|\eta_j(\bar{\alpha})|$'s approximation given in Section~\ref{sec:sparse_prop}.
Based on the derivation in Section~\ref{sec:sparse_prop}, we use the following safe principle for feature screening. 
\begin{align} \label{eq:screen_rule}
&\textbf{Feature Screening:}\ |x^{\top}_{\cdot j} \alpha^s| + ||x_{\cdot j} ||_2 r^s  < 2\lambda_2 \eta_0 \implies j \notin \mathrm{supp}(\bar{\beta}), \ \text{remove}. 
\end{align}
 Here $2\lambda_2 \eta_0 = 2\sqrt{\lambda_0 \lambda_2} +  \lambda_1$.  The screening rule is safe because it is derived based on concavity of the dual problem. Base on~\eqref{eq:screen_rule}, we  derive a stopping condition for feature adding. If all features in $\mathcal{R}$ satisfy~\eqref{eq:screen_rule}, we stop Feature Adding. This allows us to avoid redundant computation  resulted from some inactive features.

 In Algorithm~\ref{alg:imprv_primal_dual}, the initialization values of $\beta^0$ and $\alpha^0$ are set to zero.  We use $DGap$ to represent the duality gap of the original problem attained by the primal-dual algorithm. Please note that we use $s$ to denote the step in Algorithm~\ref{alg:add} and Algorithm~\ref{alg:imprv_primal_dual}, in order to differ from steps ($t$s) in Algorithm~\ref{alg:primal_dual_inner}.  Empirically, feature screening does show  power to improve algorithm efficiency. But the feature active incremental strategy can significantly avoid redundant computation to achieve~the~target~duality~gap.

\section{Algorithm Analysis}\label{sec:analysis}

We discuss the algorithm convergence in this section. Firstly, we present theoretical results on the convergence and support recovery of the inner solver for sub-problems. 

\subsection{Algorithm Analysis for Inner Solver}

With the conditions in Theorem~\ref{Thm:minimax}-1, we can quantify the duality gap and develop the algorithm analysis. Different from the primal updating~\citep{Hazimeh18} or dual updating~\citep{Liu17} algorithms,   Algorithm~\ref{alg:primal_dual_inner} has both primal and dual updating steps.
Let p denotes the size of the input feature set $X$ of the sub-problem in Algorithm~\ref{alg:primal_dual_inner},  $m_1 =\max_{j:1\leq j \leq p} |y^{\top}x_{\cdot j}|$,  $m_2 =\max_{j:1\leq j \leq p} ||X^{\top}x_{\cdot j}||$, $m_3 = \max_{i, \alpha_i^t \in \mathcal{F}} |l_i^{*'}(\alpha_i^t)|$,  $\varrho = \sqrt{n}||\alpha^t||_{\infty} - \lambda_1$, and $\omega_t$ is the  decreasing step size. We have the following theorem regarding the convergence of Algorithm~\ref{alg:primal_dual_inner}.
\begin{theorem}\label{Thm:inner_convg}
Assume that $l_i$ is $1/\mu$-smooth,  $||x_{i}|| \leq \vartheta  \ \forall 1\leq i \leq n$, and $||x_{\cdot j}||= 1  \ \forall 1\leq j \leq p$.  
By choosing $w_t = \frac{1}{t \gamma}$, then the sequence generated by Algorithm~\ref{alg:primal_dual_inner} satisfies the following estimation error inequality:
 \begin{align*}
|| \alpha^t - \bar{\alpha}||^2 \leq c_1\bigg( \frac{1}{t} +  \frac{\ln t}{t}  \bigg).
\end{align*}
Here $c_1 = \frac{   c_0^2}{ \mu ^2 } $, $ c_0=  \frac{ \sqrt{np} \vartheta }{2\lambda_2(1+2\lambda_2)}(2\lambda_2m_1 +  \varrho +  \sqrt{n}m_2 \varrho - 2\lambda_1\lambda_2)+\sqrt{n}m_3$, $gamma$ is same as in Theorem~\ref{Thm:ball}. 
\end{theorem}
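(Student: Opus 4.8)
The plan is to view Algorithm~\ref{alg:primal_dual_inner} as an inexact projected super-gradient ascent on the concave dual objective $D(\alpha)$, and to piggyback on the classical $O((\ln t)/t)$ rate for subgradient methods with step size $\omega_t = 1/(t\gamma)$ applied to a $\gamma$-strongly concave (hence uniquely maximized) objective. First I would record the strong concavity: by Theorem~\ref{Thm:ball} the hypothesis that each $l_i$ is $1/\mu$-smooth makes each $l_i^*$ $\mu$-strongly convex, so $-\sum_i l_i^*(\alpha_i)$ is $\mu$-strongly concave; the truncated quadratic terms $\Psi(\eta_j(\alpha))$ are concave (Theorem~\ref{Thm:minimax}-3), so $D$ is $\gamma$-strongly concave for any $\gamma\le\mu$ — this is exactly the constant appearing in Theorems~\ref{Thm:ball} and here. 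Consequently $\bar\alpha$ is the unique dual optimizer and $\tfrac{\gamma}{2}\|\alpha^t-\bar\alpha\|^2 \le D(\bar\alpha)-D(\alpha^t)$.

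Second, I would set up the one-step descent inequality. Writing $\alpha^{t}=\mathcal{P}_{\mathcal F}(\alpha^{t-1}+\omega_t g^t)$ where $g^t$ is the super-gradient of the \emph{Lagrangian-based} surrogate used in the algorithm (i.e.\ $g^t=[{\beta^{t-1}}^\top x_i - l_i^{*'}(\alpha_i^{t-1})]_i$, which by Remark~\ref{rmk:sup-grad} is a super-gradient of $D$ at $\alpha^{t-1}$ when $\beta^{t-1}=\mathfrak B(\eta(\alpha^{t-1}))$, and otherwise differs from it only through the primal CD refinement $T(\cdot)$), nonexpansiveness of the Euclidean projection onto the convex set $\mathcal F^n$ gives
\begin{align*}
\|\alpha^t-\bar\alpha\|^2 \le \|\alpha^{t-1}-\bar\alpha\|^2 + 2\omega_t \langle g^t,\bar\alpha-\alpha^{t-1}\rangle + \omega_t^2\|g^t\|^2.
\end{align*}
The concavity/strong-concavity bound $\langle g^t,\bar\alpha-\alpha^{t-1}\rangle \le D(\bar\alpha)-D(\alpha^{t-1}) - \tfrac{\gamma}{2}\|\alpha^{t-1}-\bar\alpha\|^2 \le -\tfrac{\gamma}{2}\|\alpha^{t-1}-\bar\alpha\|^2$ (using that the primal CD step only lowers $P$, hence only shrinks the gap and the inner product estimate stays valid) turns this into $\|\alpha^t-\bar\alpha\|^2 \le (1-\gamma\omega_t)\|\alpha^{t-1}-\bar\alpha\|^2 + \omega_t^2\|g^t\|^2$.

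Third, I would bound the super-gradient norm by the stated constant $c_0$. This is the routine-calculation part: one expands $g^t$ coordinatewise, uses $|\beta_j^{t}| \le \frac{1}{2\lambda_2(1+2\lambda_2)}(\cdots)$ from the thresholding maps $\mathfrak B$ and $T$ together with $\|x_{\cdot j}\|=1$, $\|x_i\|\le\vartheta$, $m_1=\max_j|y^\top x_{\cdot j}|$, $m_2=\max_j\|X^\top x_{\cdot j}\|$, $\varrho=\sqrt n\|\alpha^t\|_\infty-\lambda_1$, and $m_3=\max_i|l_i^{*'}(\alpha_i^t)|$, and collects terms to get $\|g^t\|\le c_0$ uniformly in $t$ (the feasibility projection keeps $\alpha^t\in\mathcal F^n$ so these maxima are finite). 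Plugging $\omega_t=1/(t\gamma)$ into the recursion gives $\|\alpha^t-\bar\alpha\|^2 \le (1-\tfrac1t)\|\alpha^{t-1}-\bar\alpha\|^2 + \tfrac{c_0^2}{\gamma^2 t^2}$, i.e.\ $t\|\alpha^t-\bar\alpha\|^2 \le (t-1)\|\alpha^{t-1}-\bar\alpha\|^2 + \tfrac{c_0^2}{\gamma^2 t}$; telescoping and using $\sum_{s\le t}1/s \le 1+\ln t$ yields $\|\alpha^t-\bar\alpha\|^2 \le \tfrac{c_0^2}{\gamma^2}\cdot\tfrac{1+\ln t}{t}$, which is the claimed bound with $c_1 = c_0^2/\mu^2$ (taking $\gamma=\mu$, the worst admissible constant).

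The main obstacle I expect is making the descent inequality honest despite the \emph{two} non-orthodox features of the algorithm: (i) $D$ is not differentiable (the $\Psi$ term has a kink at $|\eta_j|=\eta_0$), so one must argue that the vector actually used is a genuine element of the super-differential $\partial D(\alpha^{t-1})$ — here I would invoke Remark~\ref{rmk:sup-grad} and handle the measure-zero kink set by a one-sided/limiting super-gradient argument; and (ii) the interleaved primal coordinate-descent step $T(\cdot)$ replaces the "clean" iterate coming from $\mathfrak B$, so the gradient $g^t$ computed at step $t+1$ is taken at a primal point that is not $\mathfrak B(\eta(\alpha^t))$. I would control (ii) by the stated monotonicity $P(T(\beta))\le P(\beta)$: since $D(\alpha^{t-1})$ is unchanged by the primal step and $P$ only decreases, the duality gap $\xi(\beta^{t-1},\alpha^{t-1})$ used to upper-bound $\langle g^t,\bar\alpha-\alpha^{t-1}\rangle$ is no larger than the gap for the clean iterate, so the concavity estimate still goes through; this is the one place where the argument genuinely uses the primal-dual (rather than purely dual) structure and where some care is needed to keep the constants clean.
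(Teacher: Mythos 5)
Your proposal is correct and follows essentially the same route as the paper's own proof: a uniform bound $\|g^t\|\le c_0$ on the supergradient via the thresholding maps $\mathfrak{B}$ and $T$, the $\gamma$-strong concavity of $D$ (the paper's dual-range lemma, with $\gamma\ge\mu$), projection nonexpansiveness yielding the recursion $\|\alpha^t-\bar\alpha\|^2\le(1-\gamma\omega_t)\|\alpha^{t-1}-\bar\alpha\|^2+\omega_t^2c_0^2$, and unrolling with $\omega_t=1/(t\gamma)$ and the harmonic sum to obtain $c_0^2(1+\ln t)/(\gamma^2 t)\le c_1\big(\tfrac{1}{t}+\tfrac{\ln t}{t}\big)$. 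The only blemishes are two compensating sign slips in your displayed intermediate steps (the cross term should be $2\omega_t\langle g^t,\alpha^{t-1}-\bar\alpha\rangle$, and strong concavity gives the lower bound $\langle g^t,\bar\alpha-\alpha^{t-1}\rangle\ge\tfrac{\gamma}{2}\|\alpha^{t-1}-\bar\alpha\|^2$), which cancel to give exactly the paper's recursion; as for the interleaved coordinate-descent step, the paper itself simply applies its supergradient lemma to $g^t$ built from the CD-refined $\beta^t$, so your flagged concern is real but not handled more rigorously in the original proof either.
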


  We can prove the convergence of the primal variable using the results in Theorem~\ref{Thm:inner_convg}.  Let $S=\mathrm{supp}(\bar{\beta})$,  $N= \{j| \eta_j(\bar{\alpha}) = \eta_{0(j)} \}$, and $\bar{\delta} =2 \lambda_2 \min \bigg\{\min_{j:j \in S} \frac{|\eta_j(\bar{\alpha})  |  -  \eta_{0}}{||x_{\cdot j}||}, 
\min_{j:j \in S^c \setminus N} \frac{ \eta_{0} - |\eta_j(\bar{\alpha})|}{||x_{\cdot j}||}\bigg\}$. The following theorem gives the complexity for support recovery and duality gap convergence. 

\begin{theorem}\label{Thm:prim_conv}
Assume that $l_i$ is $1/\mu$-smooth,  $||x_{i}|| \leq \vartheta  \ \forall 1\leq i \leq n$, and $||x_{\cdot j}||= 1  \ \forall 1\leq j \leq p$. Let $ t_1 =\frac{3c_1}{\bar{\delta}^2}\log \frac{3c_1}{\bar{\delta}^2}$,  with $t > t_1$, we have $\mathrm{supp}(\beta(\alpha)) = \mathrm{supp}(\bar{\beta})$ and $|| \beta(\alpha) - \bar{\beta}|| \leq \frac{\sigma_{max}(X_S)}{2\lambda_2} ||\alpha - \bar{\alpha}||$. Moreover, let $t_2 = \frac{3c_1 c_2}{\epsilon} \log \frac{3c_1 c_2}{\epsilon}$,  $c_2 = c_0\bigg(1+ \frac{\sigma_{max}(X_S)}{2\mu\lambda_2} \bigg)$,  for any $\epsilon>0$ with $t >\mathrm{max}\{t_1, t_2\}$, we have $P(\beta^t) - D(\alpha^t) \leq \epsilon$.
\end{theorem}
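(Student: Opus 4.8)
The plan is to propagate the dual convergence rate $\|\alpha^t-\bar\alpha\|^2\le c_1(1/t+\ln t/t)$ of Theorem~\ref{Thm:inner_convg} through three stages: a margin argument giving exact support recovery, an explicit estimate of the primal--dual link once the support is frozen, and a duality gap bound obtained from strong duality and first-order concavity. Throughout, $(\bar\beta,\bar\alpha)$ denotes the optimal pair satisfying conditions (a)--(c) of Theorem~\ref{Thm:minimax}-1 and $S=\mathrm{supp}(\bar\beta)$.

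\textbf{Stage 1 (support recovery).} First I would use an elementary bound on $(\log t)/t$ to get that, once $t>t_1=\tfrac{3c_1}{\bar\delta^2}\log\tfrac{3c_1}{\bar\delta^2}$, the quantity $c_1(1/t+\ln t/t)$ drops below $\bar\delta^2$, so $\|\alpha^t-\bar\alpha\|<\bar\delta$. Since $\eta_j(\alpha)=-\tfrac{1}{2\lambda_2}x_{\cdot j}^\top\alpha$ and $\|x_{\cdot j}\|=1$, the map $\alpha\mapsto\eta_j(\alpha)$ is $\tfrac{1}{2\lambda_2}$-Lipschitz, so $\bigl||\eta_j(\alpha^t)|-|\eta_j(\bar\alpha)|\bigr|<\tfrac{\bar\delta}{2\lambda_2}$. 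By Theorem~\ref{Thm:minimax}-1(c) and the definition of $\bar\delta$, every $j\in S$ has $|\eta_j(\bar\alpha)|\ge\eta_0+\tfrac{\bar\delta}{2\lambda_2}$ and every $j\in S^c\setminus N$ has $|\eta_j(\bar\alpha)|\le\eta_0-\tfrac{\bar\delta}{2\lambda_2}$; combining with the displacement bound forces $|\eta_j(\alpha^t)|>\eta_0$ on $S$ and $|\eta_j(\alpha^t)|<\eta_0$ on $S^c\setminus N$, so by~\eqref{eq:bigB} $\mathrm{supp}(\beta(\alpha^t))=S$ (the coordinates of $N$ sit exactly at the threshold and are excluded from $\bar\delta$; generically $N=\emptyset$).

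\textbf{Stage 2 (primal--dual link).} On the recovered support, $|\eta_j(\bar\alpha)|>\eta_0>\tfrac{\lambda_1}{2\lambda_2}>0$ for $j\in S$, hence $\eta_j(\alpha^t)$ and $\eta_j(\bar\alpha)$ share the same sign and the soft-threshold in~\eqref{eq:bigB} acts affinely, giving $\beta_j(\alpha^t)-\bar\beta_j=\eta_j(\alpha^t)-\eta_j(\bar\alpha)=-\tfrac{1}{2\lambda_2}x_{\cdot j}^\top(\alpha^t-\bar\alpha)$ for $j\in S$ and both sides equal to $0$ otherwise. Squaring and summing over $j\in S$ yields $\|\beta(\alpha^t)-\bar\beta\|^2=\tfrac{1}{4\lambda_2^2}\|X_S^\top(\alpha^t-\bar\alpha)\|^2\le\tfrac{\sigma_{max}(X_S)^2}{4\lambda_2^2}\|\alpha^t-\bar\alpha\|^2$, the asserted Lipschitz bound.

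\textbf{Stage 3 (duality gap).} Using strong duality (Theorem~\ref{Thm:minimax}-4), $P(\bar\beta)=D(\bar\alpha)$, together with $P(\beta)=\max_{\alpha\in\mathcal F^n}L(\beta,\alpha)$ and $D(\alpha^t)=\min_\beta L(\beta,\alpha^t)=L(\beta(\alpha^t),\alpha^t)$ (the inner minimizer being exactly $\beta(\alpha^t)=\mathfrak B(\eta(\alpha^t))$), and since the primal coordinate step $T$ never increases $P$ so that $P(\beta^t)\le P(\beta(\alpha^t))$, I would write
\begin{align*}
P(\beta^t)-D(\alpha^t)
&\ \le\ \max_{\alpha'}L(\beta(\alpha^t),\alpha')-L(\beta(\alpha^t),\alpha^t)\\
&\ \le\ \langle\nabla D(\alpha^t),\,\alpha^*-\alpha^t\rangle
\ \le\ \|\nabla D(\alpha^t)\|\;\|\alpha^*-\alpha^t\|,
\end{align*}
by concavity of $L(\beta(\alpha^t),\cdot)$, where $\alpha^*=\alpha^*(\beta(\alpha^t))$ is its maximizer ($\alpha^*_i=l_i'(\beta(\alpha^t)^\top x_i)$) and $\nabla_\alpha L(\beta(\alpha^t),\alpha^t)=X\beta(\alpha^t)-l^{*\prime}(\alpha^t)=\nabla D(\alpha^t)$ by Remark~\ref{rmk:sup-grad}. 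I would bound the first factor by the uniform super-gradient bound $\|\nabla D(\alpha^t)\|\le c_0$ obtained inside the proof of Theorem~\ref{Thm:inner_convg}; for the second, $\bar\alpha=\alpha^*(\bar\beta)$ (Theorem~\ref{Thm:minimax}-1(b), equivalently Remark~\ref{rmk:dual_condt}: $l^{*\prime}(\bar\alpha)=X\bar\beta$), so $1/\mu$-Lipschitzness of each $l_i'$ and Stage~2 give $\|\alpha^*-\alpha^t\|\le\|\alpha^*-\bar\alpha\|+\|\bar\alpha-\alpha^t\|\le\bigl(1+\tfrac{\sigma_{max}(X_S)}{2\mu\lambda_2}\bigr)\|\alpha^t-\bar\alpha\|$. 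Multiplying gives $P(\beta^t)-D(\alpha^t)\le c_2\|\alpha^t-\bar\alpha\|$ with $c_2=c_0\bigl(1+\tfrac{\sigma_{max}(X_S)}{2\mu\lambda_2}\bigr)$; substituting Theorem~\ref{Thm:inner_convg} and requiring the right-hand side to be $\le\epsilon$ reduces, by the same $(\log t)/t$ estimate as in Stage~1, to $t>t_2=\tfrac{3c_1c_2}{\epsilon}\log\tfrac{3c_1c_2}{\epsilon}$. Since Stages 1--2 also require $t>t_1$, the conclusion holds for $t>\max\{t_1,t_2\}$.

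\textbf{Main obstacle.} The delicate part is Stage 3. The crucial observation is that once the support is frozen (Stage 1) the two nonsmooth pieces become harmless --- $\|\cdot\|_0$ is locally constant and $\|\cdot\|_1$ is locally affine because the signs are pinned --- so the gap is governed purely by the smooth first-order geometry of $D$ near its maximizer, while the primal refinement $T$ can only shrink it. Carrying this through to the exact constant $c_2$ and the exact $t_2$ rests on the crude operator-norm estimates ($\|X_S^\top v\|\le\sigma_{max}(X_S)\|v\|$, $\|x_{\cdot j}\|=1$) and on transporting the rate of Theorem~\ref{Thm:inner_convg} carefully; that bookkeeping is where the actual effort lies.
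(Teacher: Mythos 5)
Your proposal is essentially correct and follows the same three-step skeleton as the paper: invert the dual rate of Theorem~\ref{Thm:inner_convg} to get $\|\alpha^t-\bar\alpha\|\le\bar\delta$ (your Stage~1 re-derives, rather than cites, the paper's Lemma~\ref{Lem:support}), use the frozen support to get the Lipschitz primal--dual link with constant $\sigma_{max}(X_S)/(2\lambda_2)$ (Stage~2 is exactly the second half of Lemma~\ref{Lem:support}), and then bound the gap by $c_2\|\alpha^t-\bar\alpha\|$ and invert the rate once more. The only genuine difference is how the intermediate inequality $P-D\le\langle \nabla D(\alpha^t),\,\alpha^*-\alpha^t\rangle$ is obtained: the paper proves a separate gap identity (Lemma~\ref{Lem:dual_gap_range}), computing $P(\beta(\alpha))-D(\alpha)=\sum_i\big(l_i+l_i^*\big)-\alpha^\top X\beta$ from the explicit form of $\Psi$ and then applying Fenchel--Young, whereas you use weak duality plus first-order concavity of $L(\beta(\alpha^t),\cdot)$ in $\alpha$ together with $P(\beta^t)\le P(\beta(\alpha^t))$. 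Your route is more generic and arguably cleaner: it sidesteps the structural computation in Lemma~\ref{Lem:dual_gap_range} and also avoids the paper's slight mismatch of invoking that lemma (stated for $\beta=\beta(\alpha)$) at the post-$T$ iterate $\beta^t$; what the paper's route buys is the exact closed-form gap expression, which it reuses elsewhere.

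One bookkeeping point does not go through as you state it: from $c_2\|\alpha^t-\bar\alpha\|\le\epsilon$ and the squared rate $\|\alpha^t-\bar\alpha\|^2\le c_1(1/t+\ln t/t)$ you need $c_1(1/t+\ln t/t)\le \epsilon^2/c_2^2$, which by the same $(\log t)/t$ inversion gives a threshold of order $\frac{3c_1c_2^2}{\epsilon^2}\log\frac{3c_1c_2^2}{\epsilon^2}$ --- this is in fact what the paper's own proof writes --- not the $t_2=\frac{3c_1c_2}{\epsilon}\log\frac{3c_1c_2}{\epsilon}$ appearing in the theorem statement, which you reproduced as if it followed directly. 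The discrepancy is inherited from the paper (its statement and proof disagree on $t_2$), but your claim that the stated $t_2$ suffices "by the same estimate" is not justified; you should either prove the quadratic-in-$1/\epsilon$ threshold or note the correction. A second, minor point: your bound $\|\nabla D(\alpha^t)\|\le c_0$ is applied to the gradient formed with $\beta(\alpha^t)$, while the paper's $c_0$ is derived for the post-$T$ iterate; the same argument (bounding $|\breve\beta_j^t|\le\varrho/(2\lambda_2)$) gives a constant of the same form, so this only perturbs the constant, not the argument.
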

Additional analysis on algorithms  can be found in the appendix.
Due to the large magnitudes of $c_0$ and $c_2$ and small value of $\bar{\delta}$, it could be time consuming for the solver to achieve very small duality gaps. 

\newpage

\subsection{Outer Loop Analysis}

The outer loop in Algorithm~\ref{alg:imprv_primal_dual} involves both feature screening and feature adding operations  relying on the dual variable estimation for the original problem, i.e., $|| \alpha^s -  \bar{\alpha}||_2  \leq r^s, r^s = \sqrt{ \frac{2(P(\beta^s) - D(\alpha^s)) } {\gamma}}$ defined in Theorem~\ref{Thm:ball}.  Here $\gamma = \mu + \frac{\sigma_{min}(X_S)}{2 \lambda_2}$. As discussed in  Section~\ref{sec:sparse_prop}, we set  $\gamma = \mu$ to ensure the safety of feature screening. 
\begin{remark}\label{rmk:out_loop}
The  screening operation~\eqref{eq:screen_rule} is safe, and it does not remove any features in $\mathcal{A}^s \cap \mathrm{supp}(\bar{\beta})$ at step $s$. With additional features added by the feature adding operation (Algorithm~\ref{alg:add}), the primal objective $P(\beta^s)$  always decreases after  the solution of the sub-problem regarding feature set $\mathcal{A}^s$ using the inner solver (Algorithm~\ref{alg:primal_dual_inner}).
\end{remark}
The  screening operation usually  keeps the primal objective value $P(\beta^s)$ intact. With Remark~\ref{rmk:out_loop}, $\mathcal{A}^s$ and  $P(\beta^s)$  converge  after some steps, and also Algorithm~\ref{alg:imprv_primal_dual}  converges with  $DGap$ smaller than a given threshold $\xi$. In fact, the active incremental strategy  could significantly reduce redundant operations introduced by inactive features especially when the problem is with high sparse level~\citep{Celer,ren2020thunder}. The solution sparse level (the size of $S=\mathrm{supp}(\bar{\beta})$) impacts the algorithm complexity.

\section{Experiments}\label{sec:experiment}

Experiments focus on linear regression. Our proposed algorithm can  be extended to other forms of loss~function. Via experimental studies, we show the effectiveness of our method by comparing with dual iterative hard thresholding~\citep{Liu17} and coordinate descent with spacer steps~\citep{Hazimeh18} algorithms, which is the state-of-art  for $\ell_0$ regularization problem. The experimental environment is CPU: Intel Xeon Platinum8168@2.70 GHz; OS: Windows Server~2012~R2.

\subsection{Simulation Study}

In this study, we simulate the datasets under the linear regression setting, i.e., $\mathbf{y} = X\beta + \epsilon$. The data matrix is generated according to a multi-variate Gaussian $X_{n\times p}\sim \textrm{MVN}(0, \Sigma)$, and $\Sigma = (\sigma_{i,j})$. Exponential correlation~\citep{Hazimeh18} is utilized to control feature relationship, i.e., $\sigma_{ij} = \rho^{|i-j|}$ with $\rho = 0.4$. The noise $\epsilon$ is Gaussian white noise with $SNR=\{2, 5, 20\}$ . For the true parameter $\beta$, $3\%$ entries ($0.03p$) are randomly set to the values in $[-1.0, 1.0]$, and the rest ($0.97p$) are set to zero. We generate the datasets with $p=3\,000$  and $n$ varying 
in $\{200,  300, 400, 500, 600\}$. Each setting is replicated for 50 times.

\begin{figure*}[]

\mbox{\hspace{-0.0in}
\includegraphics[width=2.2in]{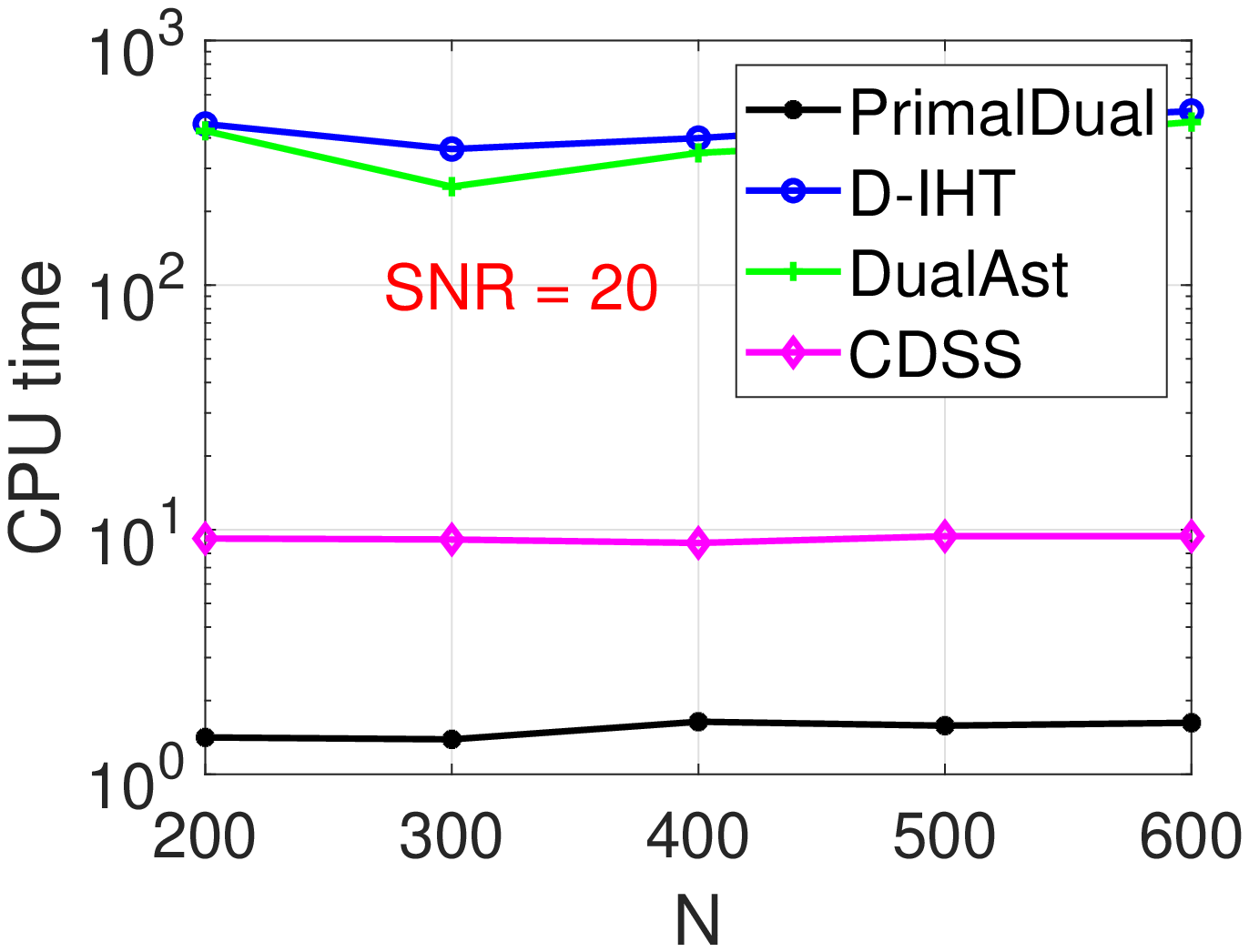}
\includegraphics[width=2.2in]{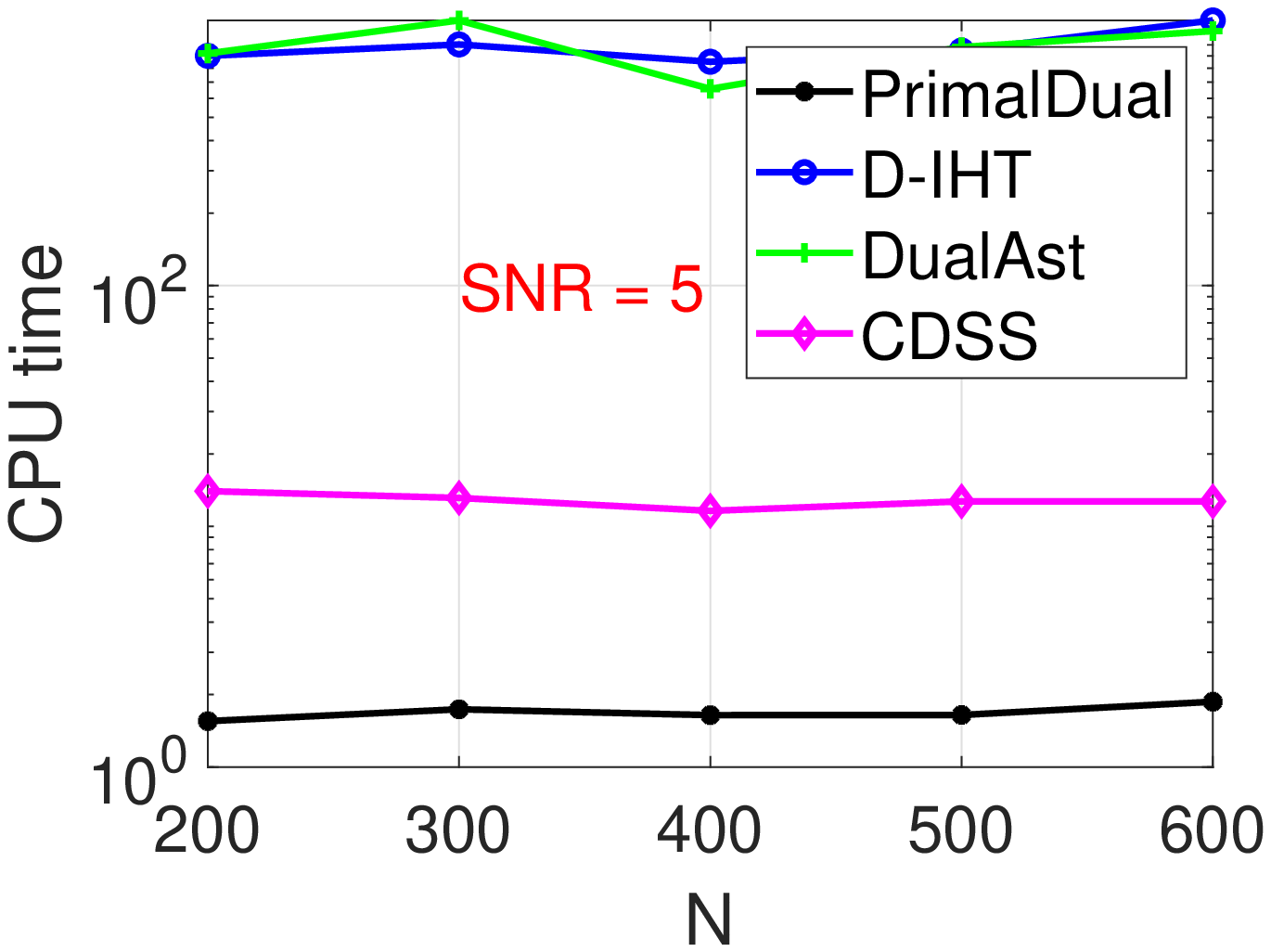} \includegraphics[width=2.2in]{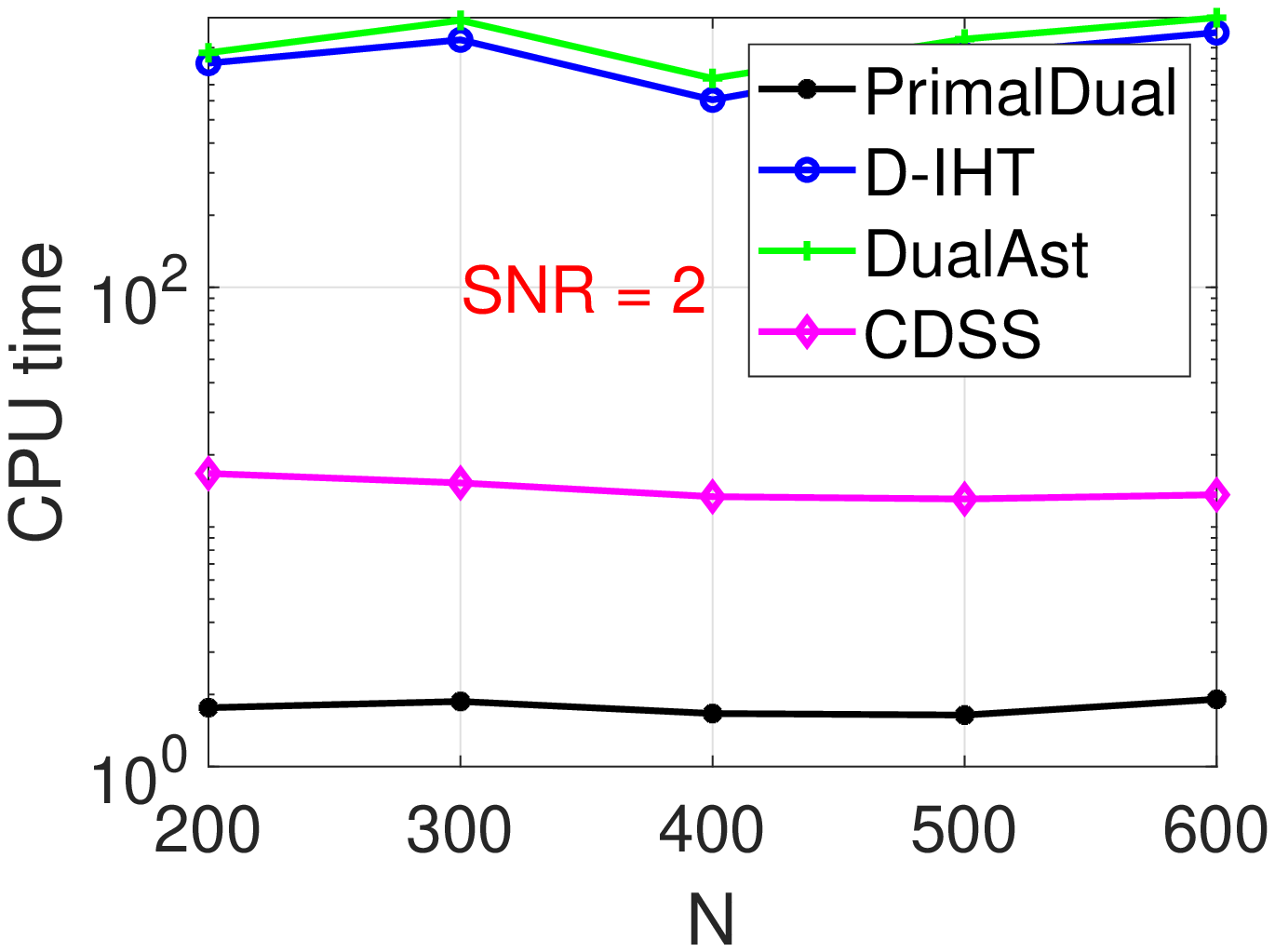} 
}

\mbox{\hspace{-0.0in}
\includegraphics[width=2.2in]{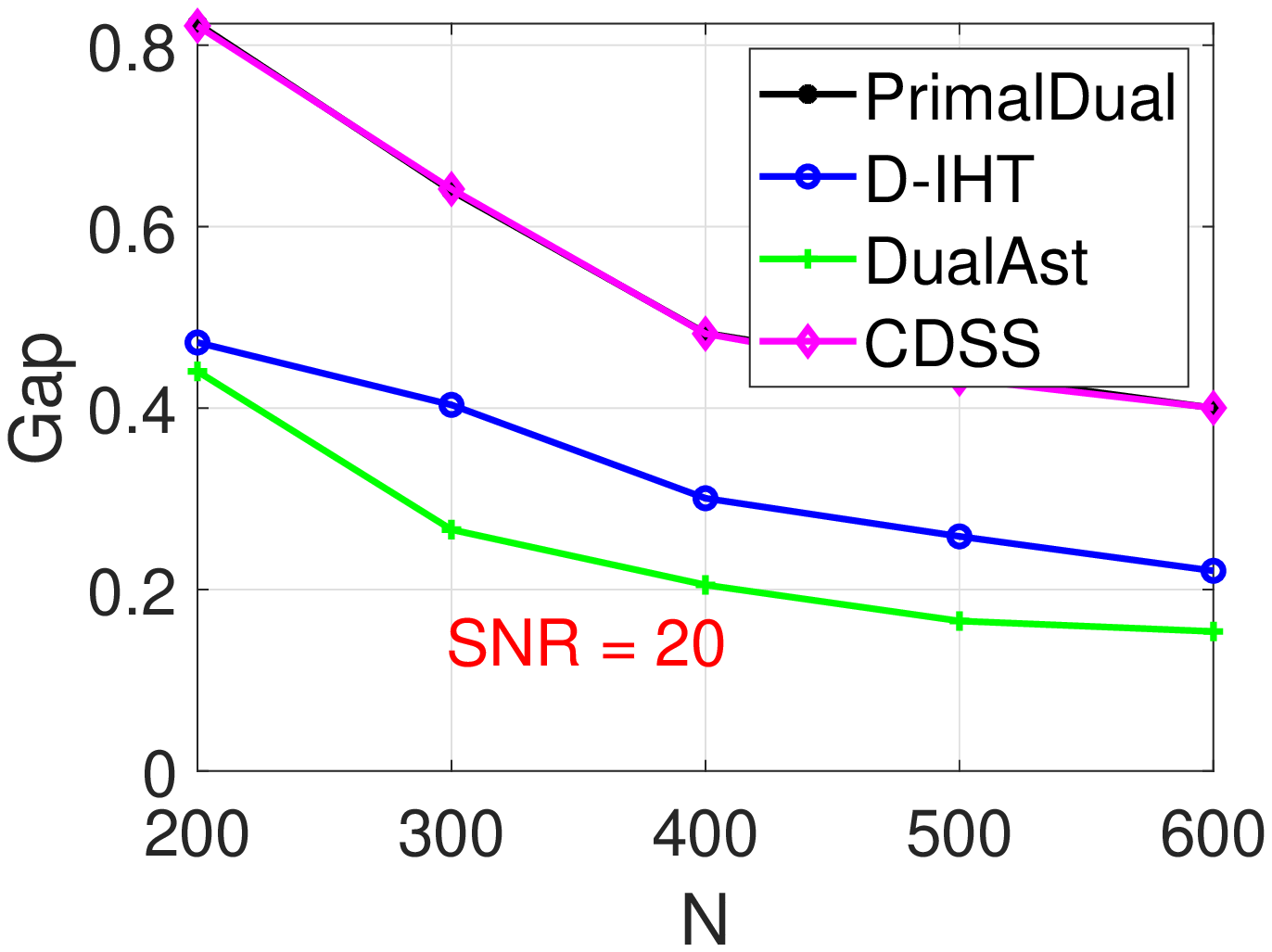}
\includegraphics[width=2.2in]{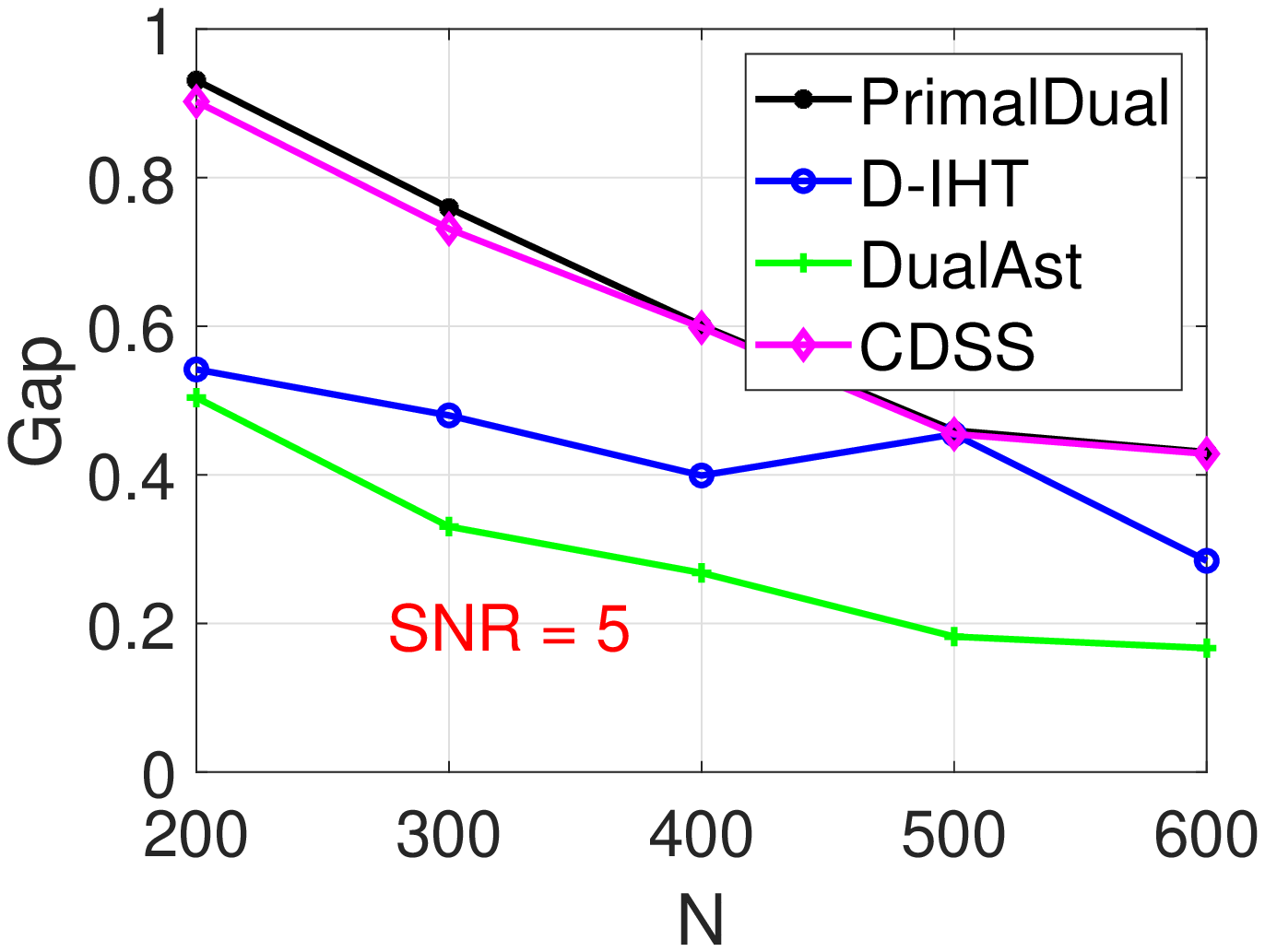} \includegraphics[width=2.2in]{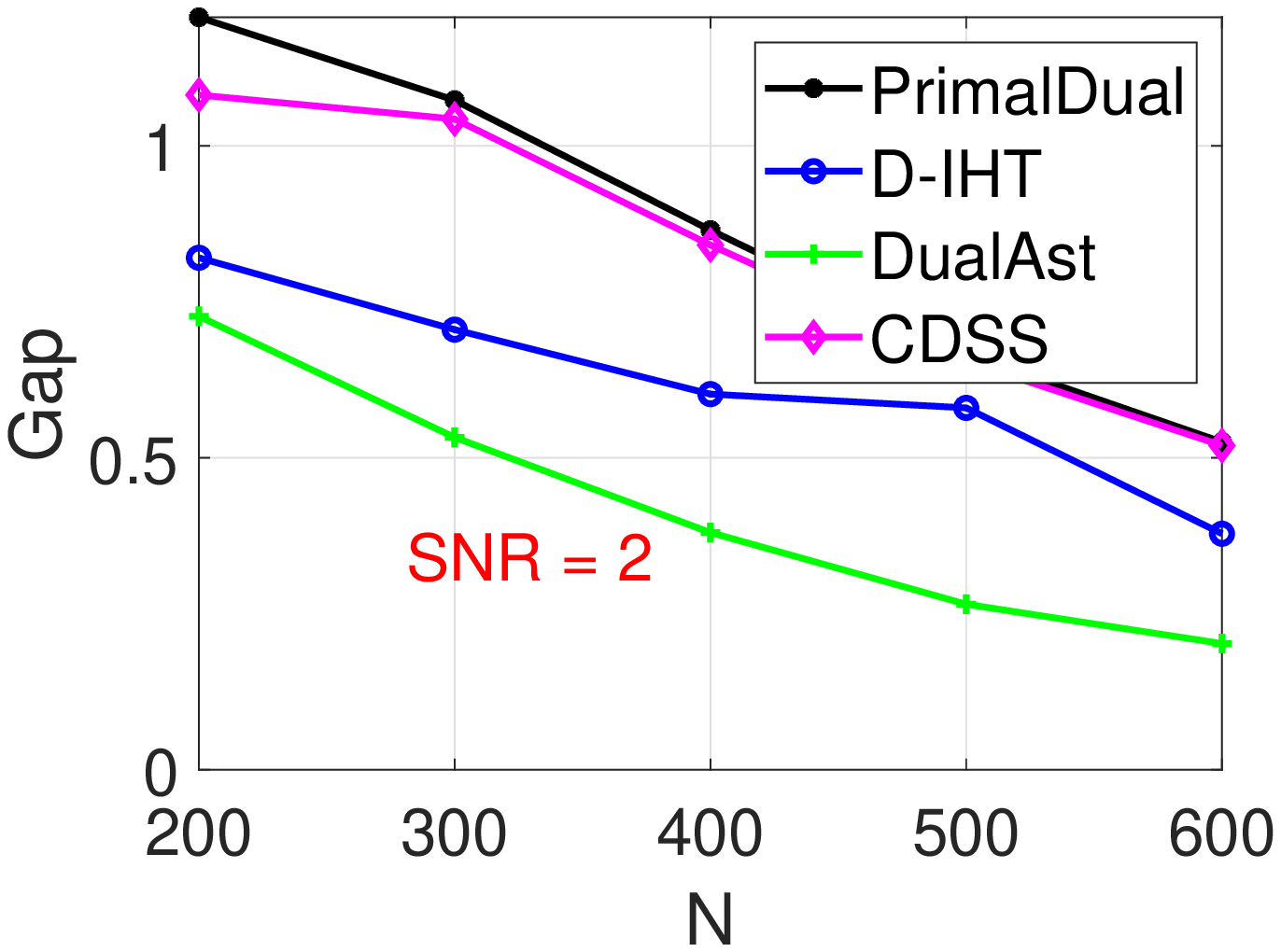} 
}

\mbox{\hspace{-0.0in}
\includegraphics[width=2.2in]{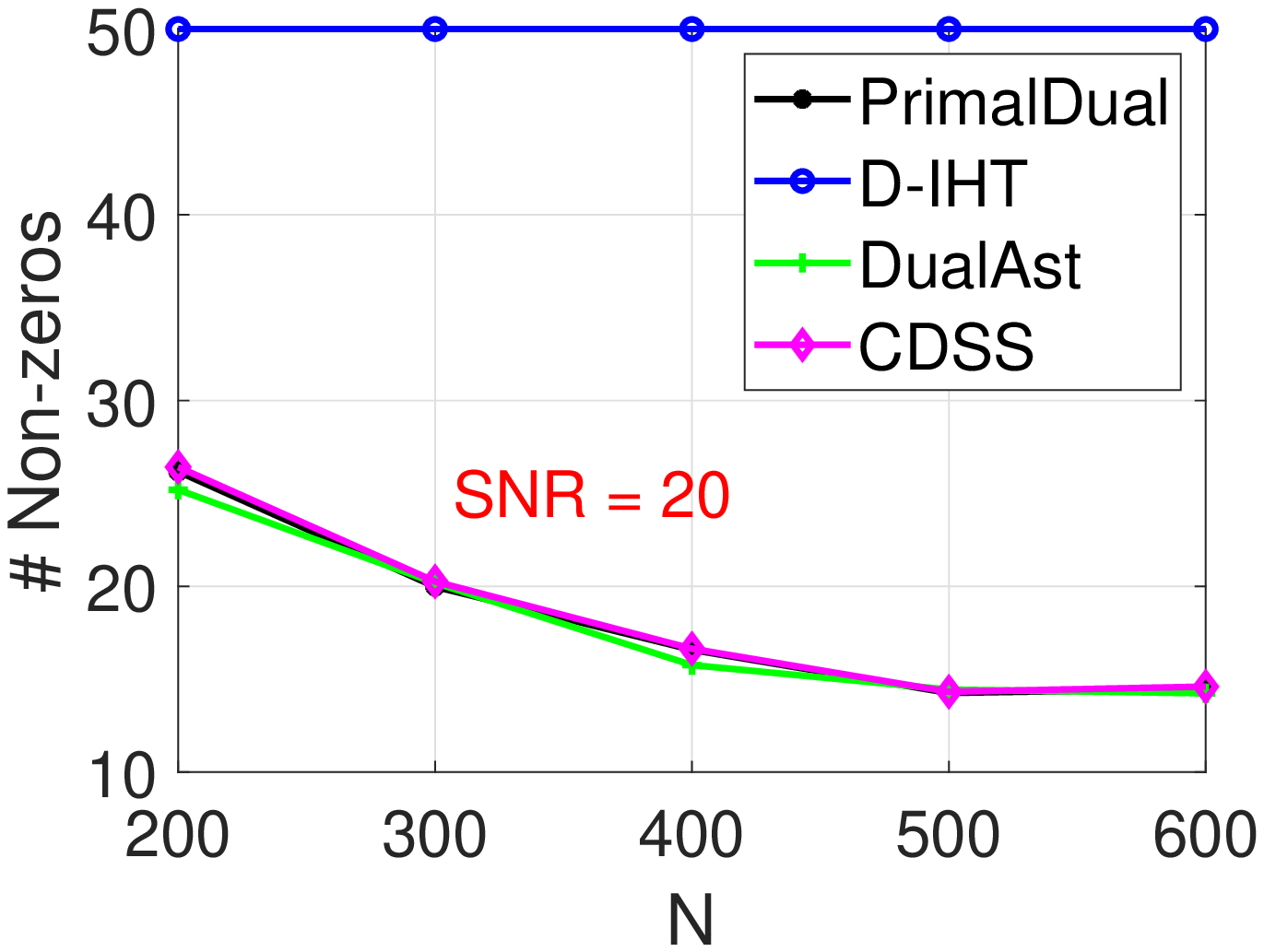}
\includegraphics[width=2.2in]{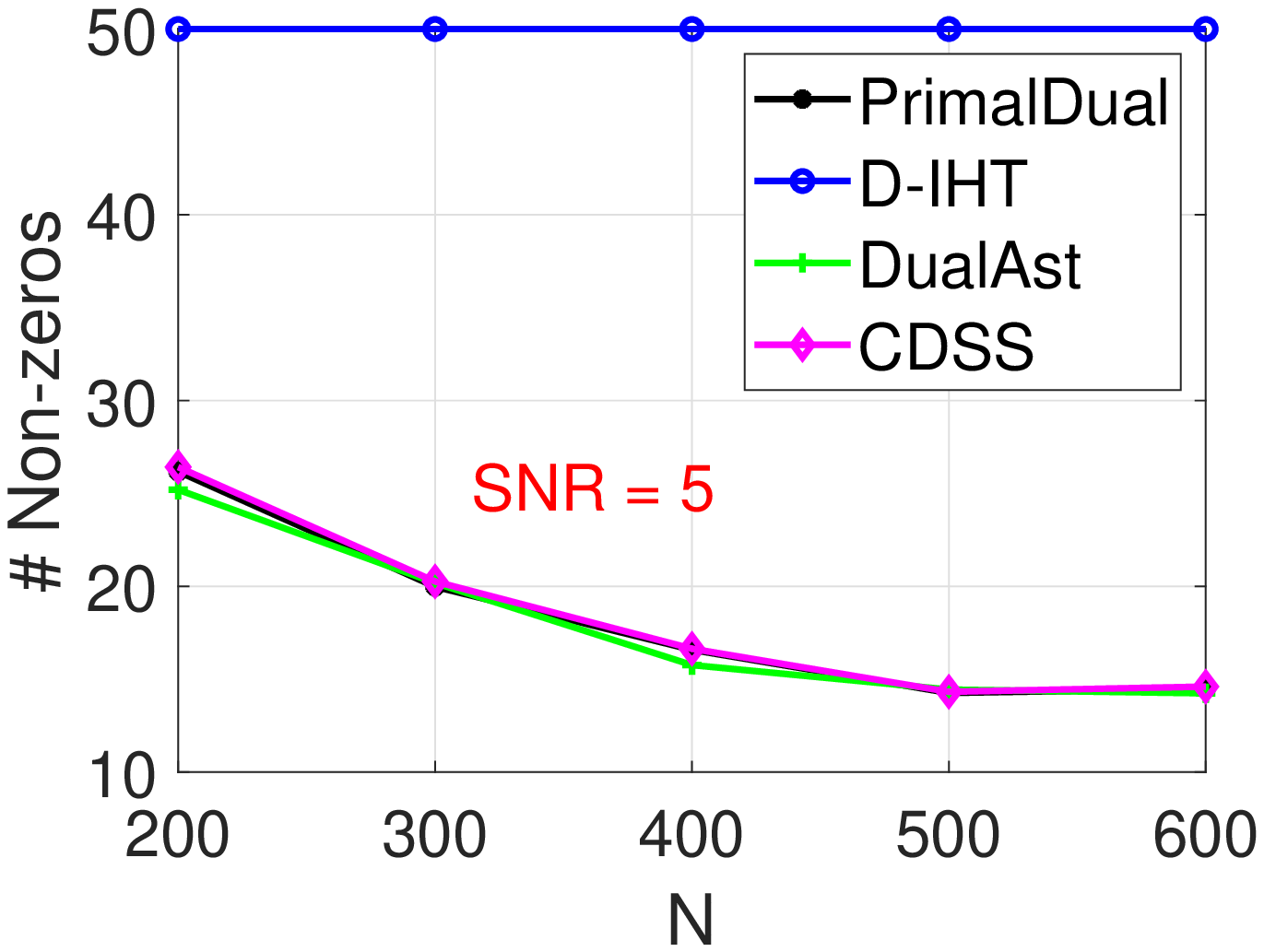} \includegraphics[width=2.2in]{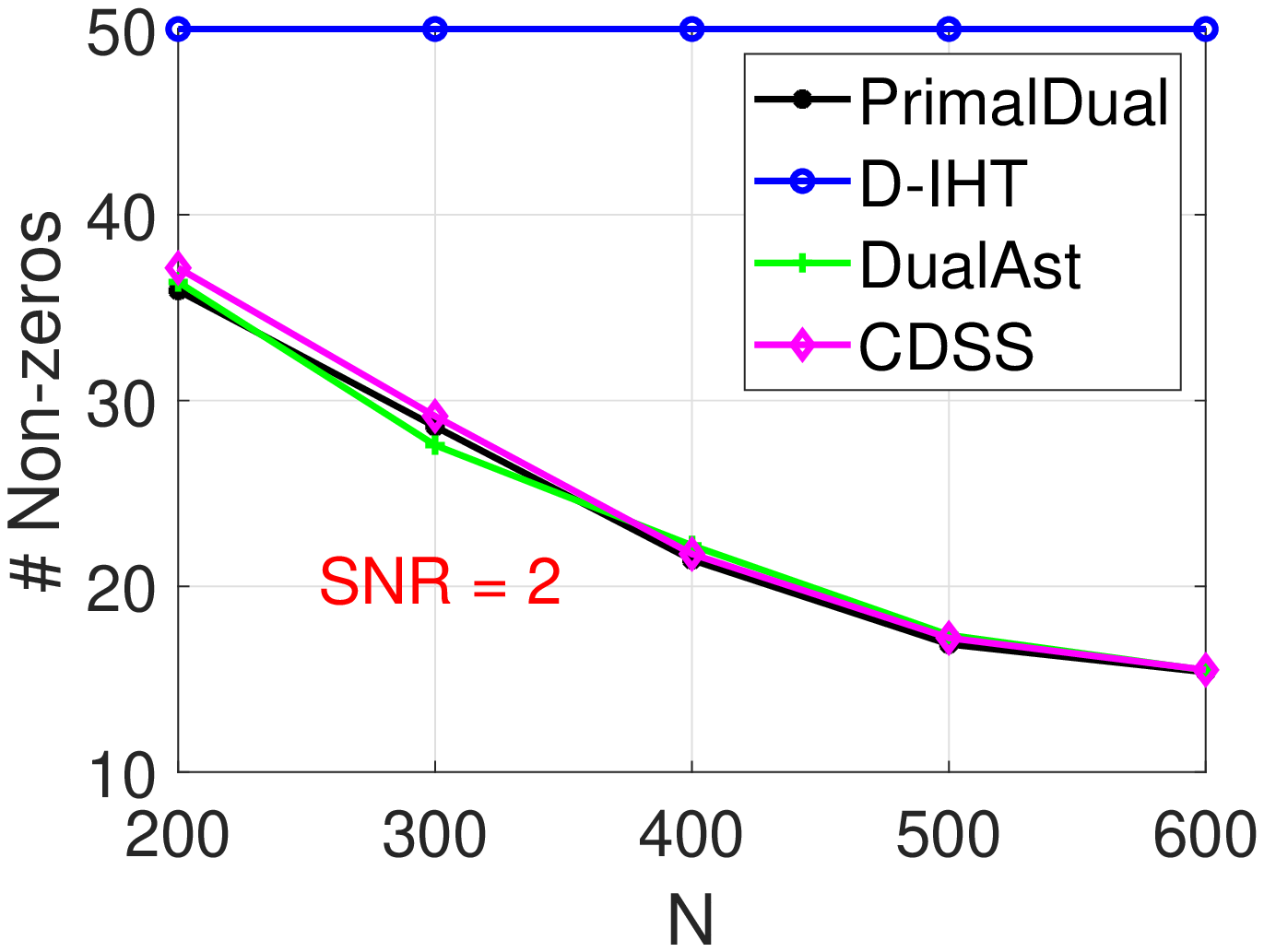} 
}

\mbox{\hspace{-0.0in}
\includegraphics[width=2.2in]{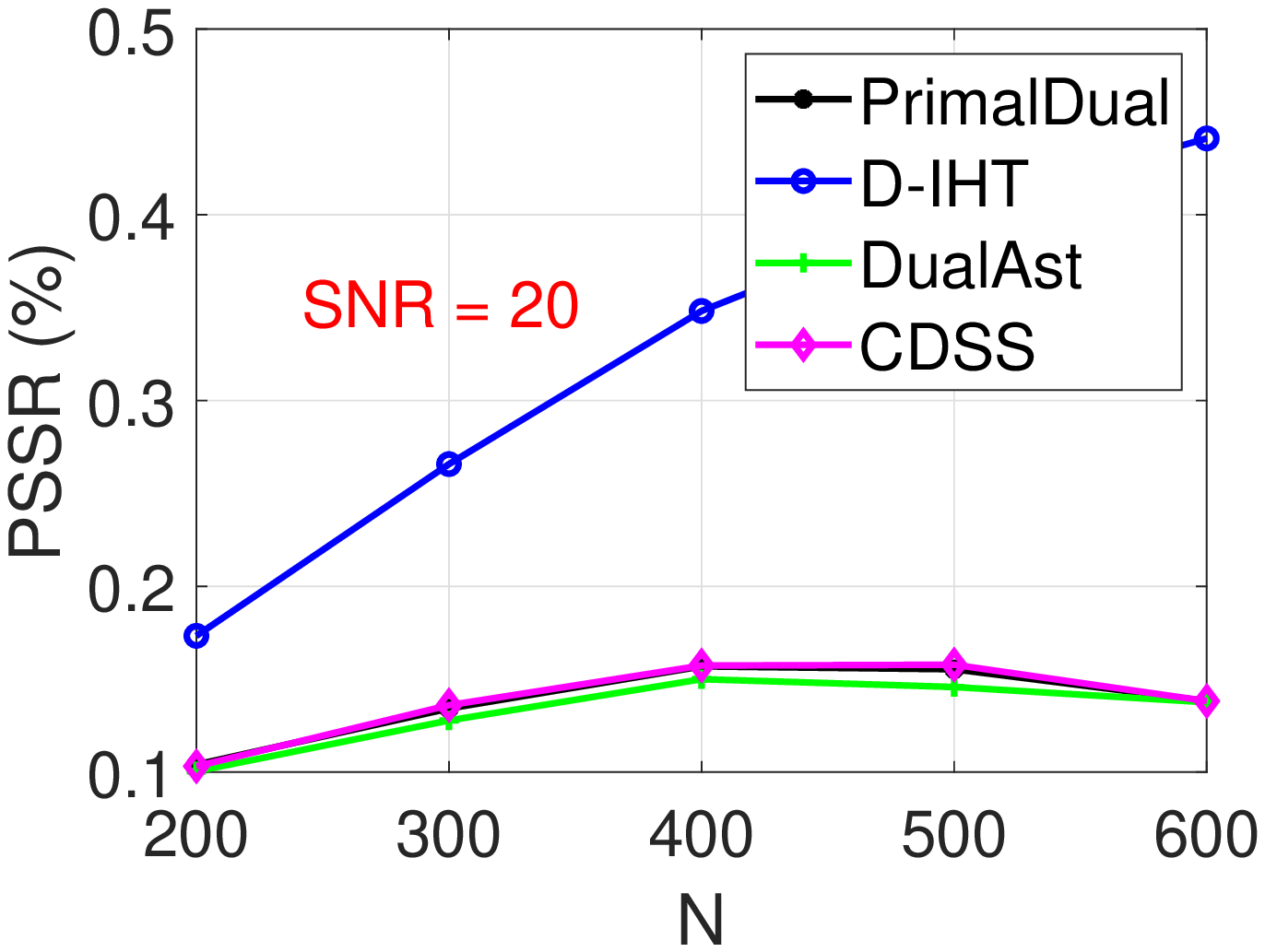}
\includegraphics[width=2.2in]{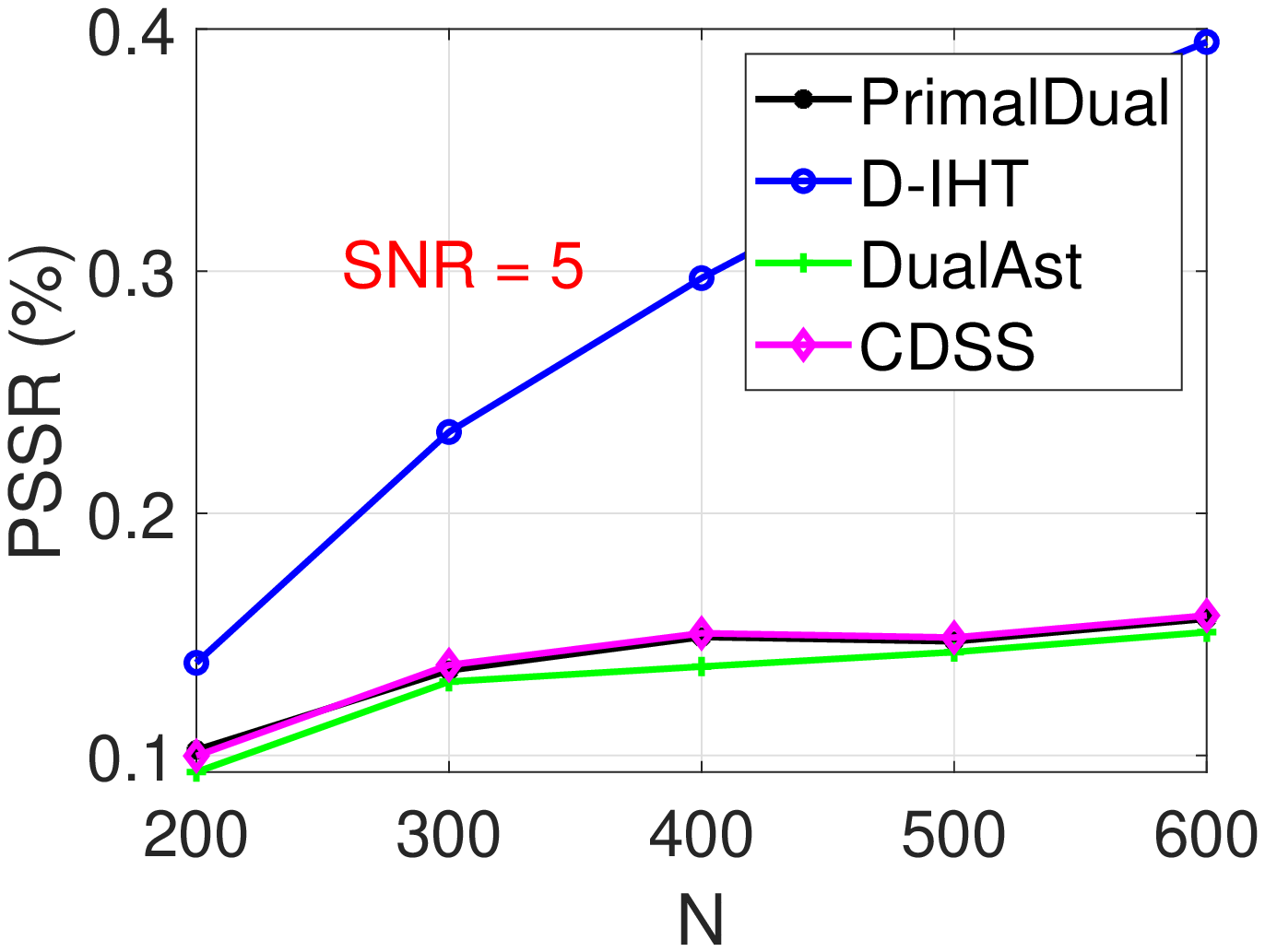}
\includegraphics[width=2.2in]{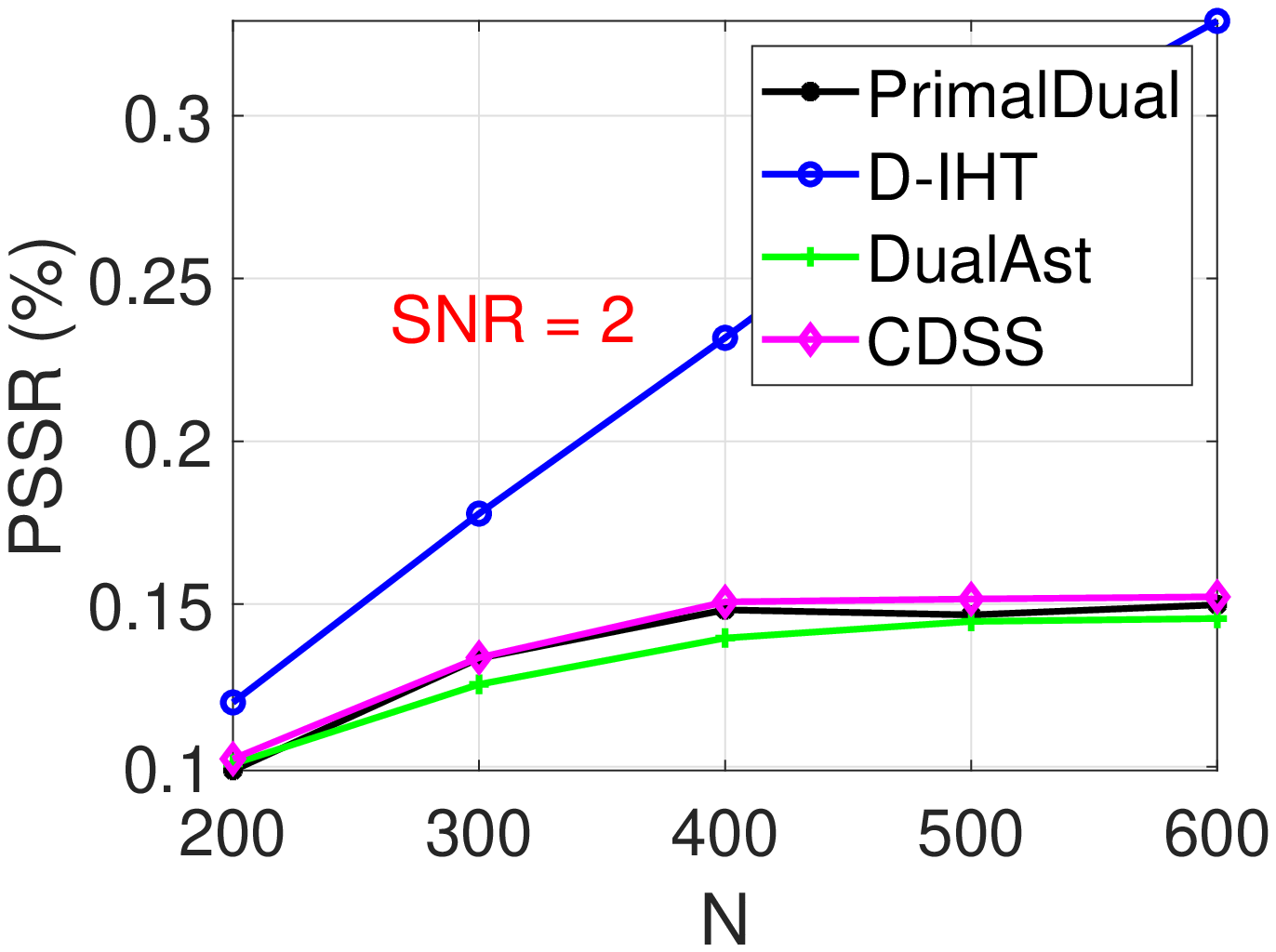} 
}

\mbox{\hspace{-0.0in}
\includegraphics[width=2.2in]{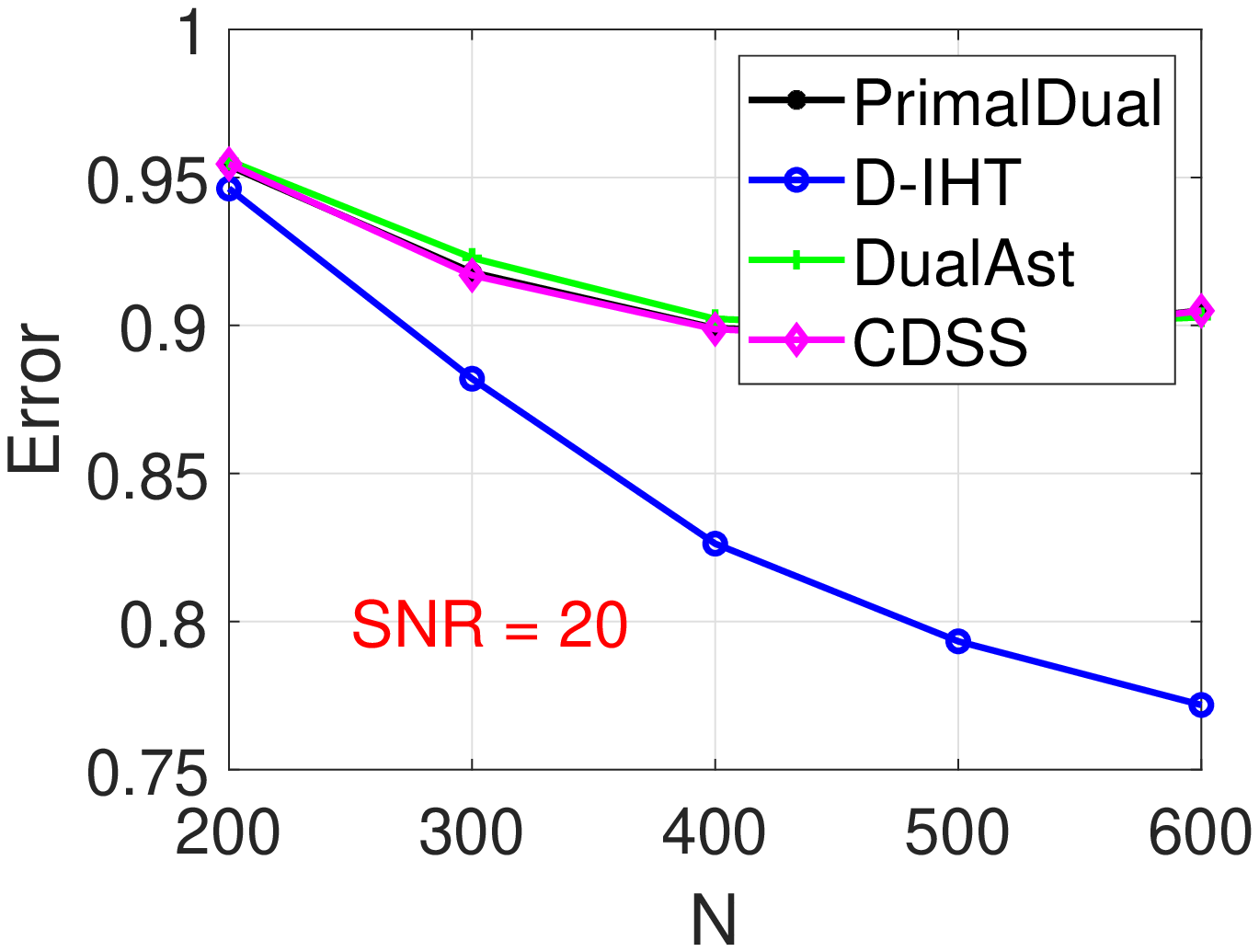}
\includegraphics[width=2.2in]{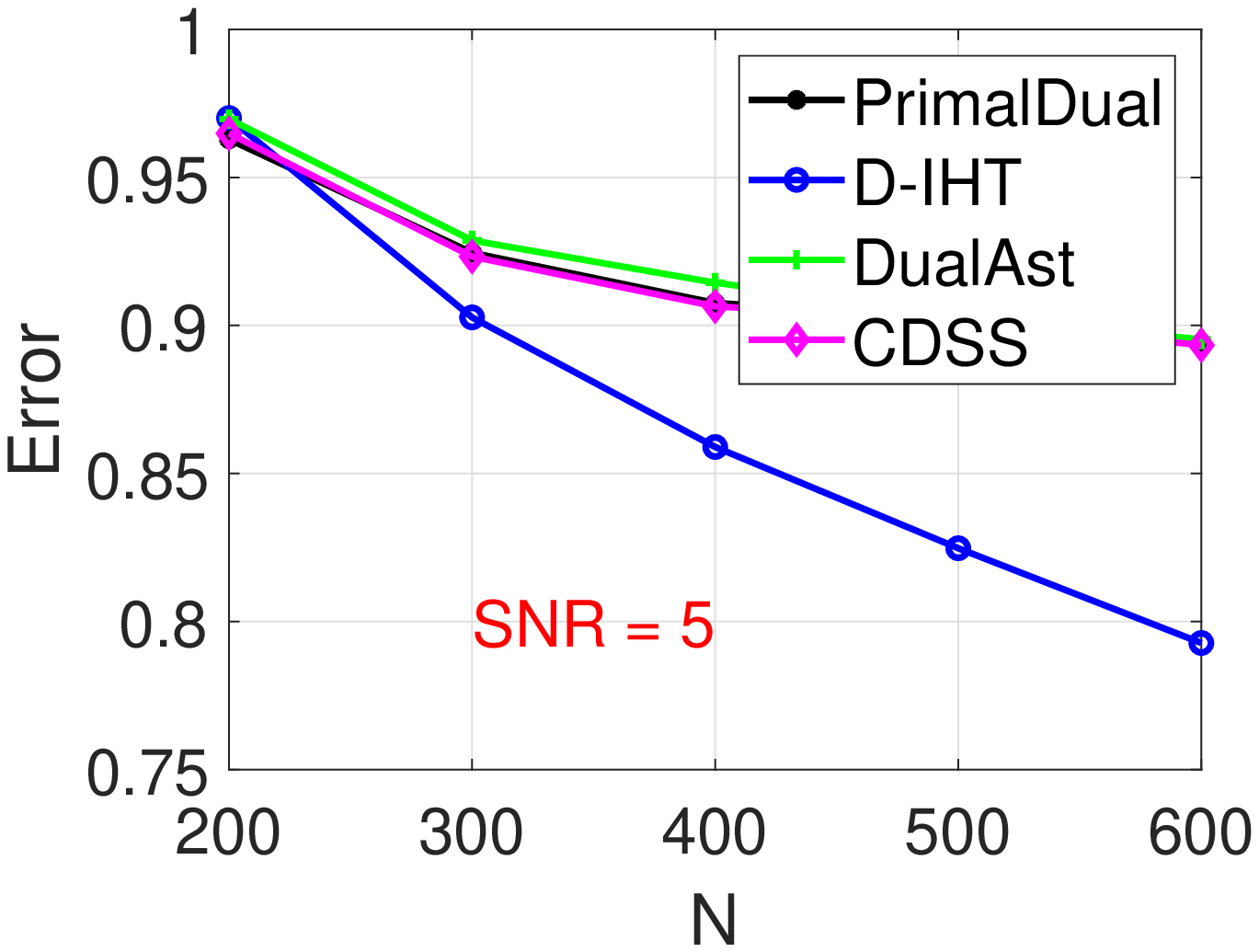}
\includegraphics[width=2.2in]{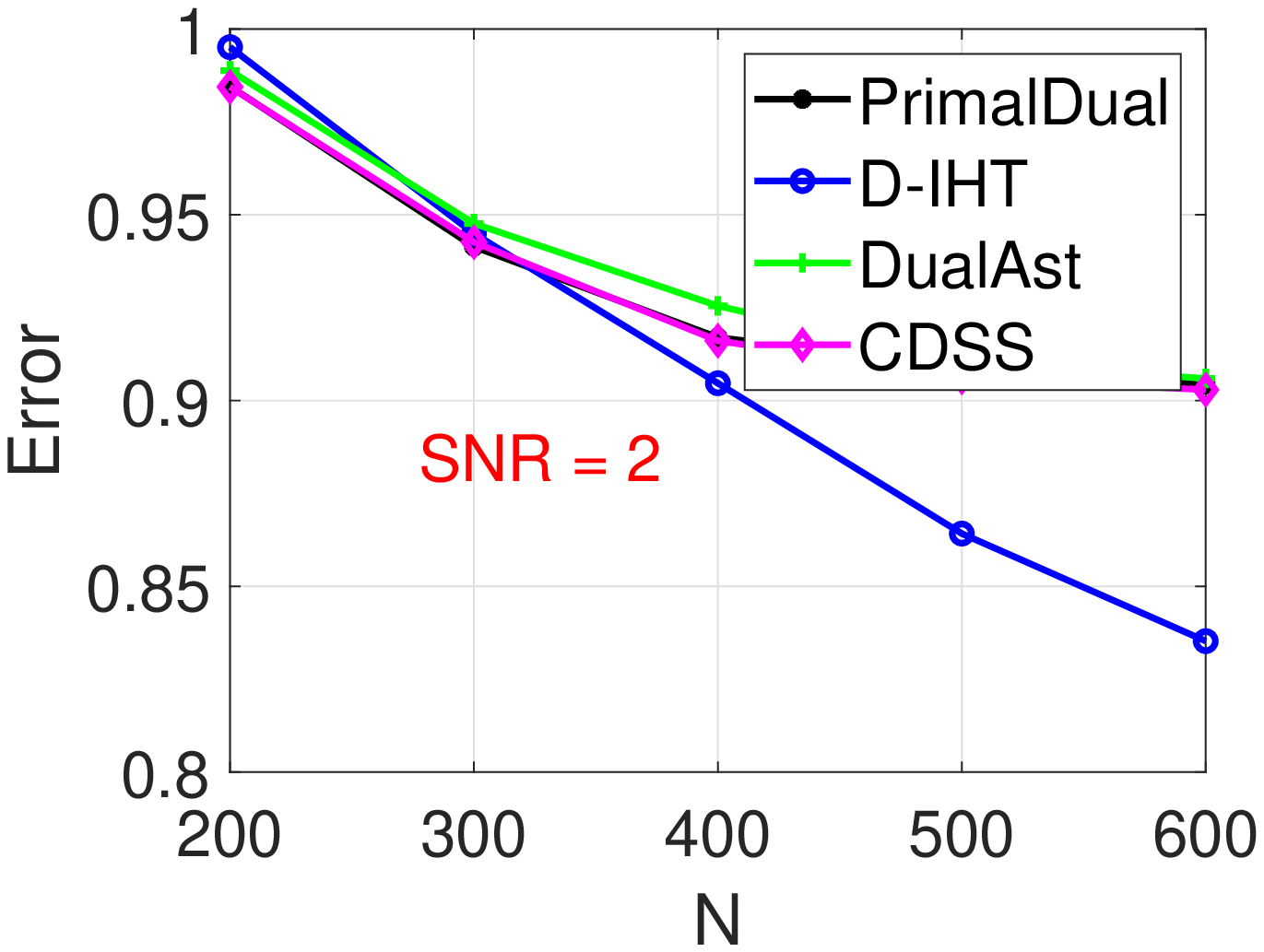} 
}

\vspace{-0.1in}

\caption{Running time, duality gap, nonzero number,  {\it PSSR}, and estimation error for four algorithms on simulated data with $SNR \in \{20, 5, 2\}$. x-axis is the number of training samples,  from 200 to 600.}\label{fig:sim}
\end{figure*}

We compare our primal-dual algorithm against  dual iterative hard thresholding~\citep{Liu17} and coordinate descent with spacer steps~\citep{Hazimeh18} algorithms. We use `Dual-IHT', and `CDSS' to represent the two algorithms, respectively. We use `PrimDual' to represent the proposed primal dual algorithm, and `DualAst' to represent dual ascent method which is Algorithm~\ref{alg:primal_dual_inner} without the primal updating steps. 
All algorithms are implemented in Matlab and run in the same environment. 
`Dual-IHT'~\citep{Liu17} is a primal-dual method with dual ascending  using hard threshold to keep $k$ largest value of $|\beta|$ in the primal space. CDSS~\citep{Hazimeh18} is a coordinate descent method operates in the primal space enhanced with PSI~(Partial Swap Inescapable) as the stopping criteria. We use  the duality gap~($DGap$) threshold $\epsilon=1.0E-6$ as the  stopping condition for Dual-IHT, DualAst and the proposed PrimDual algorithm. 
The algorithms may require extremely long time to reach a small duality gap threshold.   We also use the duality change, i.e., $\zeta = |DGap^{t-2} - DGap^{t}|$ as a stopping condition for the three algorithms, and we set $\zeta = 1.0E-6$ in the experiments.  Moreover, we use the same learning rate $\omega= 0.000\,5$ for the three algorithms.

Two indices are adopted for evaluating the performance. The first one is the  {\it percentage of successful support recovery~(PSSR)}.  The second one is parameter estimation error $|| \beta - \bar{\beta}||/||\bar{\beta}||$. Here $\bar{\beta}$ is the ground truth used in simulation. Figure~\ref{fig:sim} gives the performance of these four algorithms on datasets with different SNR values.
To achieve meaningful comparison, we choose $\lambda_0, \lambda_1$ and $\lambda_2$ to recover support number close  to the ground truth value. For the dataset with $SNR=20$, we use $\lambda_0 = 0.03, \lambda_1 = 0.02, \lambda_2 = 1.0$, and we set $\lambda_0 = 0.1, \lambda_1 = 0.2, \lambda_2 = 1.0$ for dataset with $SNR=2, 5$.
 From the plots, we can see that the proposed primal-dual algorithm can achieve similar {\it PSSR} and estimation error values (expect for Dual-IHT since its sparsity is pre-determined), but use much less time. It shows that the proposed primal-dual algorithm and incremental strategy significantly reduce the redundant operations resulted from inactive features. 

\subsection{ Experimental on Real-world Datasets}

In this section, we present additional results to compare the four algorithms on two datasets, News20 and E2006. In these experiments, we use learning rate $\omega=0.000\,5$ for Dual-IHT, DualAst, and our PrimDual method. The stopping conditions for the three algorithms are $\epsilon=1.0E-6$ and $\zeta = 1.0E-6$. We set $c=4.0$ for Algorithm~\ref{alg:add} in our experiments. It is difficult to fairly compare  D-IHT with the other three methods as it uses different objective with hard constraints, and we have to specify the hyper-parameter $k$ in advance. The values of  $k$ are set heuristically in the experiments.

\subsubsection{News20 Dataset}

After pre-processing, the commonly used News20 dataset contains 20 classes, $15\,935$ samples, and $62\,061$ features in the training set. The 20 labels in News20 dataset are transformed to response values ranging  $[-10,10]$ in the experiments.  We randomly pick up $p=2\,000$ features and to form five datasets with sample number ranging in  $\{200, 300, 400, 500, 600\}$. We use learning rate $\omega=0.0005$ for Dual-IHT, DualAst, and our PrimDual  method. The stopping conditions are $\epsilon=1.0E-6$ and $\zeta = 1.0E-6$. The hyper-parameters are set with $\lambda_0=0.1, \lambda_1=0.15, \lambda_1=1.0$. The left column of Figure~\ref{fig:news20} shows the results of different methods on News20 dataset with $p=2\,000$.
Each setting is replicated for 20 times.
We can see that under approximately the same primal objective  and duality gap values, the proposed primal-dual method uses less computation time compared against other methods when $N$ becomes larger. Though DualAst consumes similar computation cost as our method, it cannot achieve small duality gap values on all cases.  We notice that CDSS takes longest time in this case, and it could be due to that the PSI stopping condition is hard to satisfy on some real-world datasets.

 \begin{figure}[]
 
 \centering
\mbox{
\includegraphics[width=2.5in]{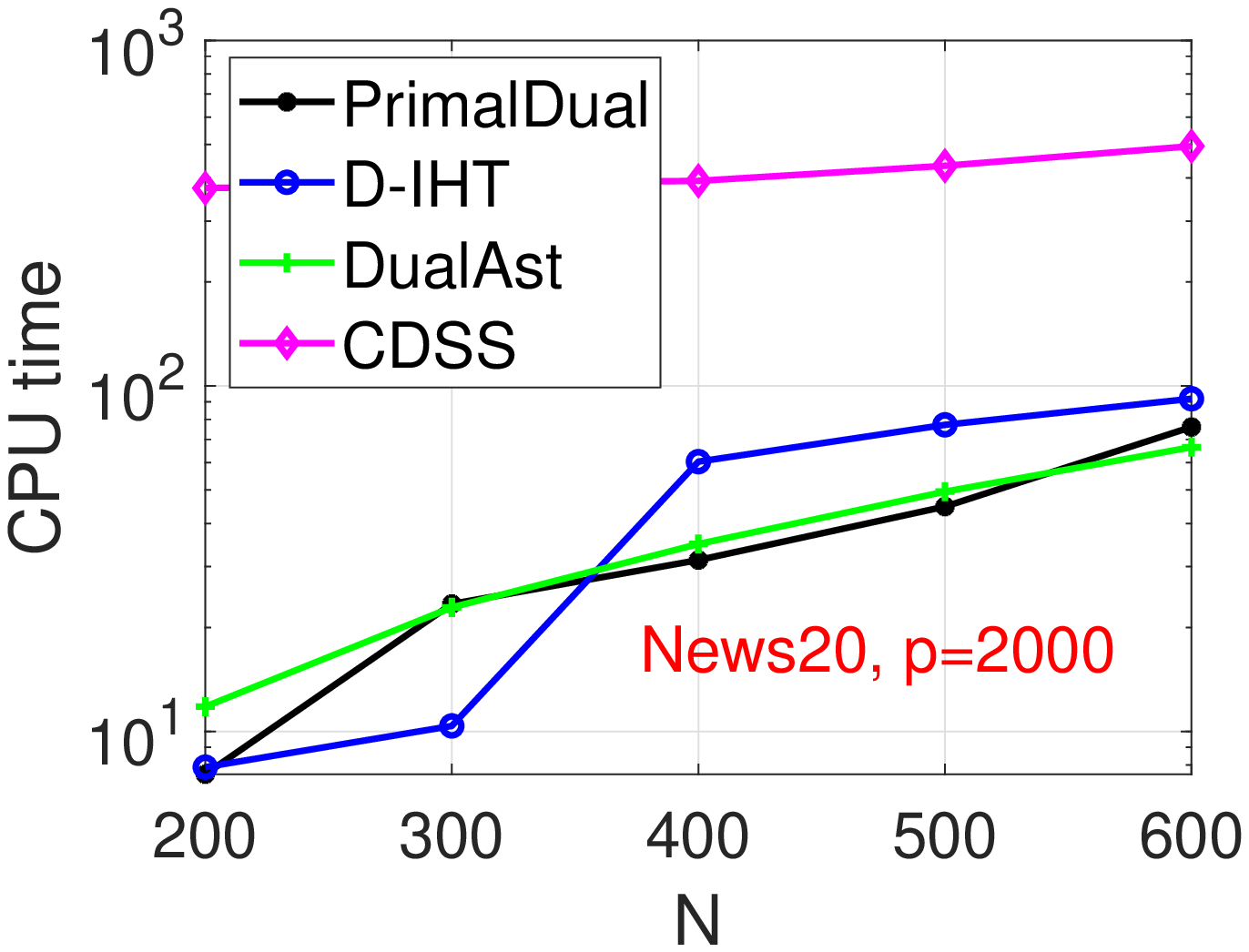}
\includegraphics[width=2.5in]{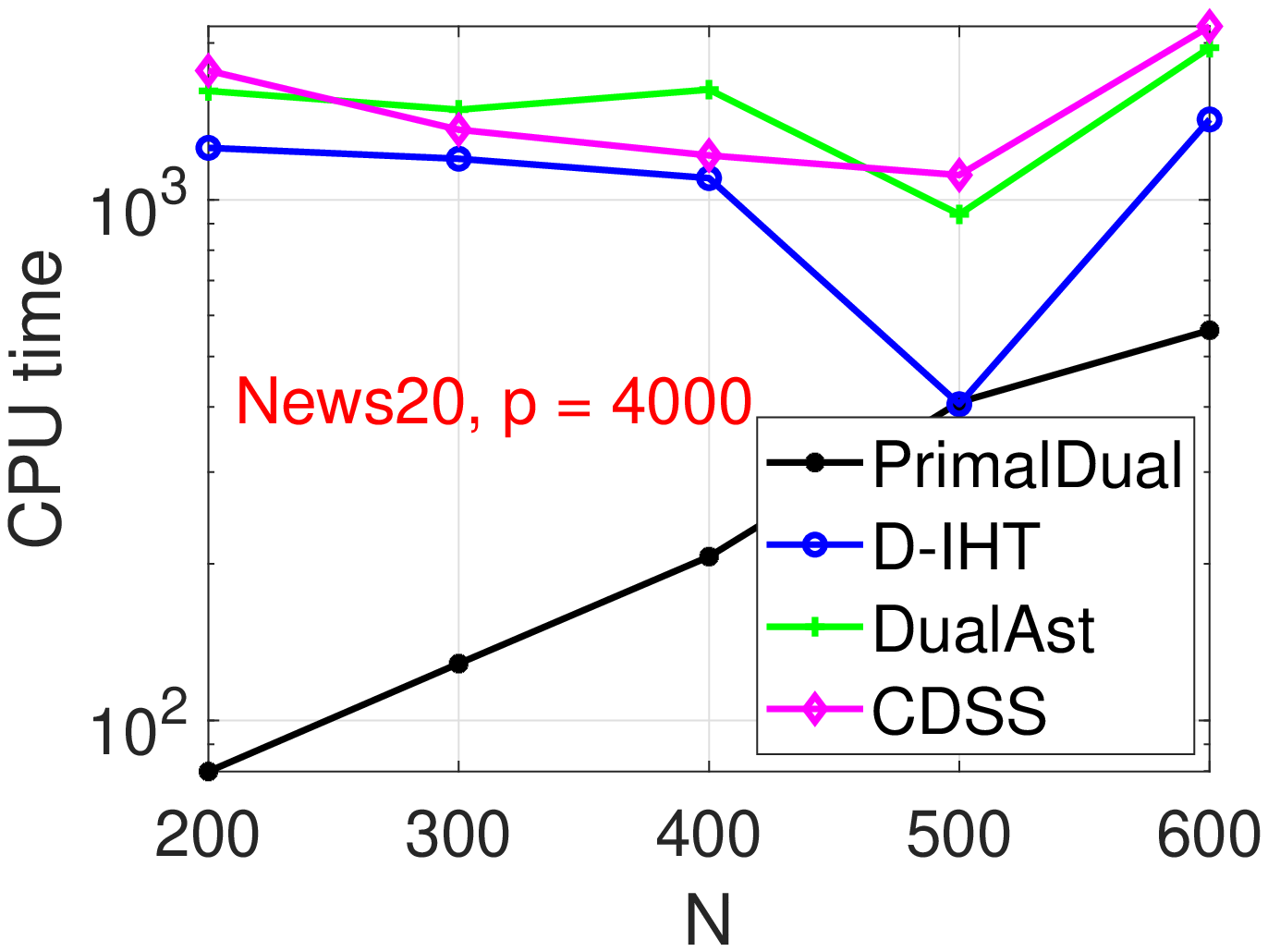}
}

\mbox{
\includegraphics[width=2.5in]{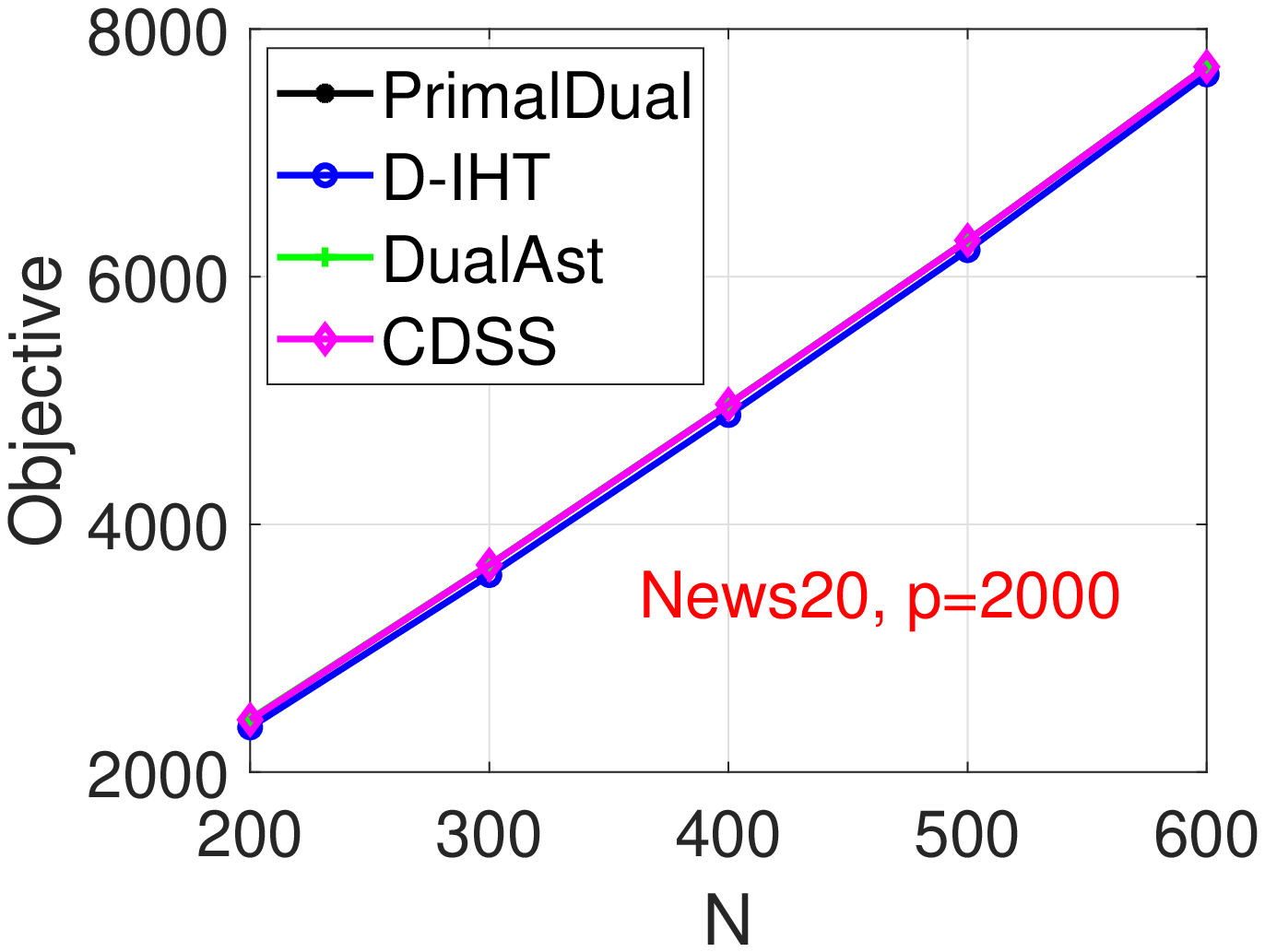}
\includegraphics[width=2.5in]{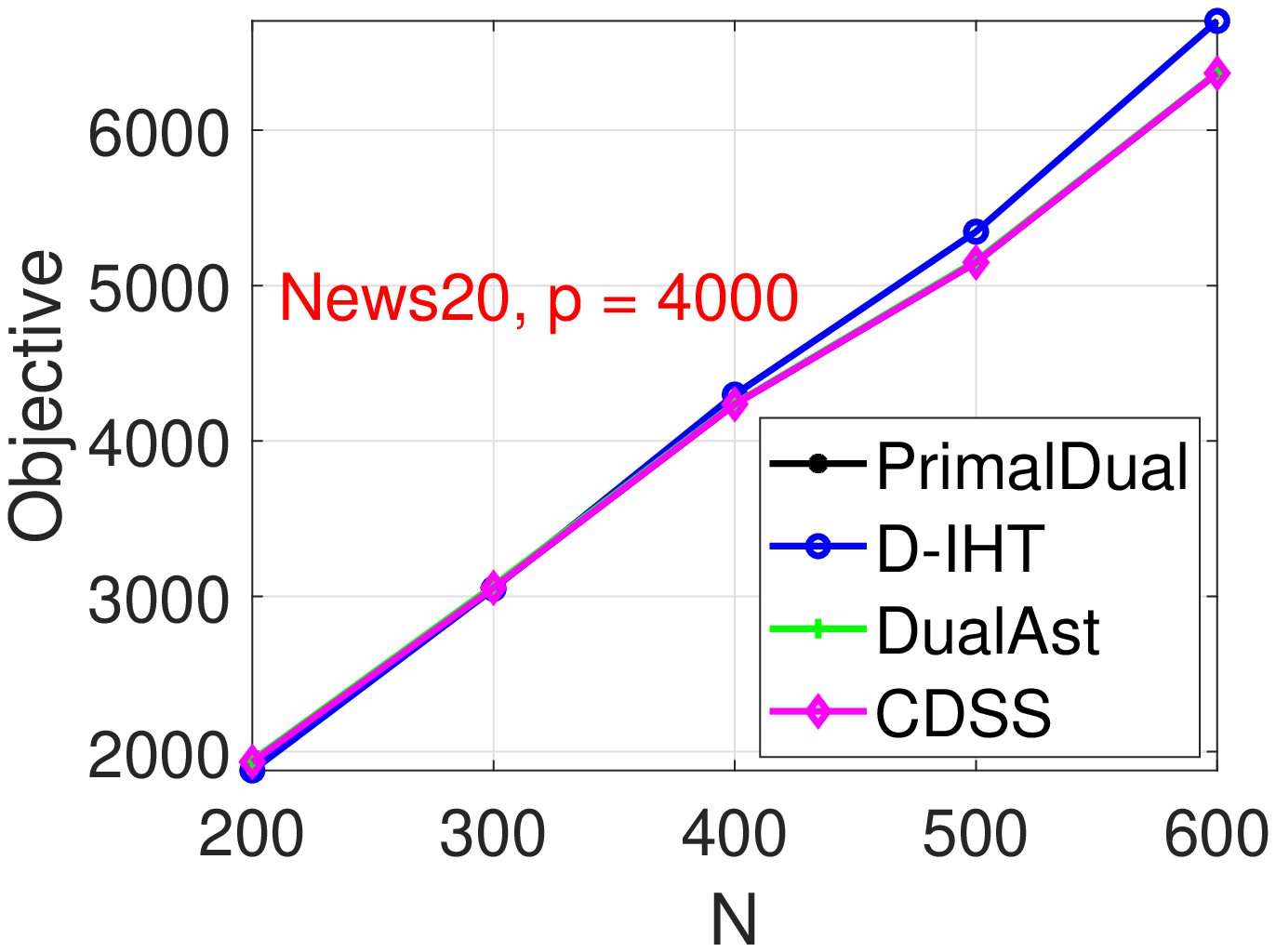}
}

\mbox{
\includegraphics[width=2.5in]{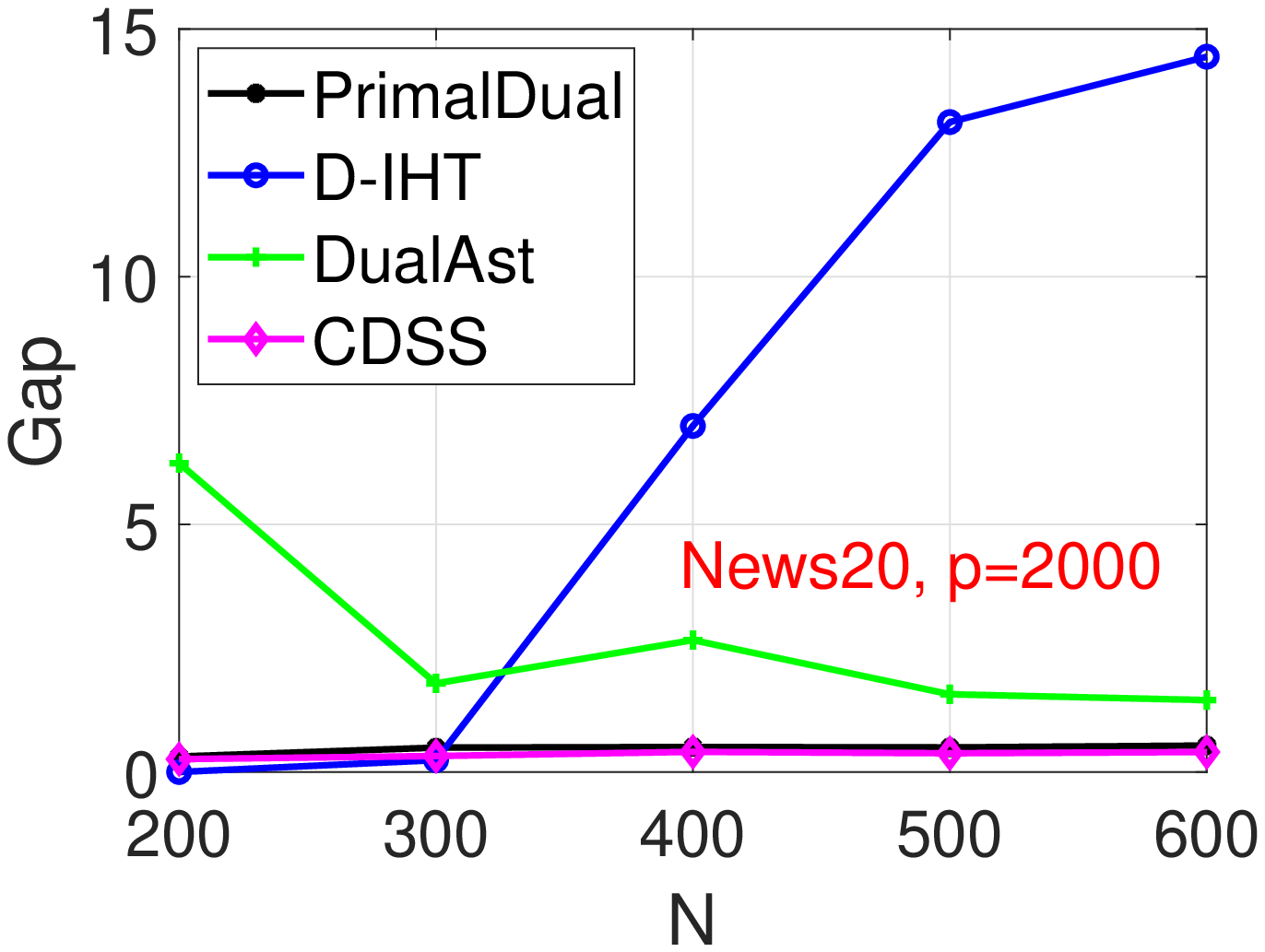}
\includegraphics[width=2.5in]{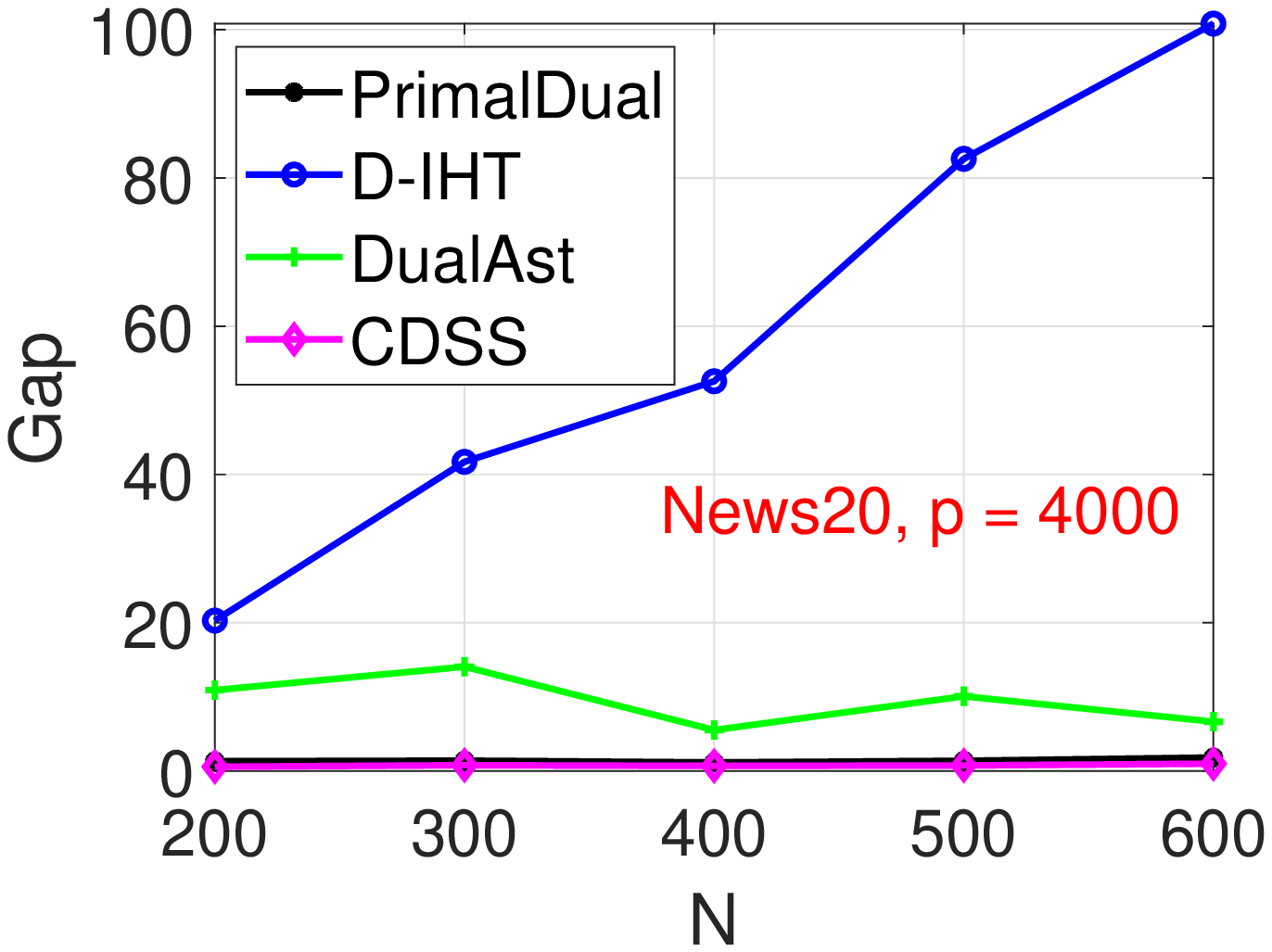}
}

\mbox{
\includegraphics[width=2.5in]{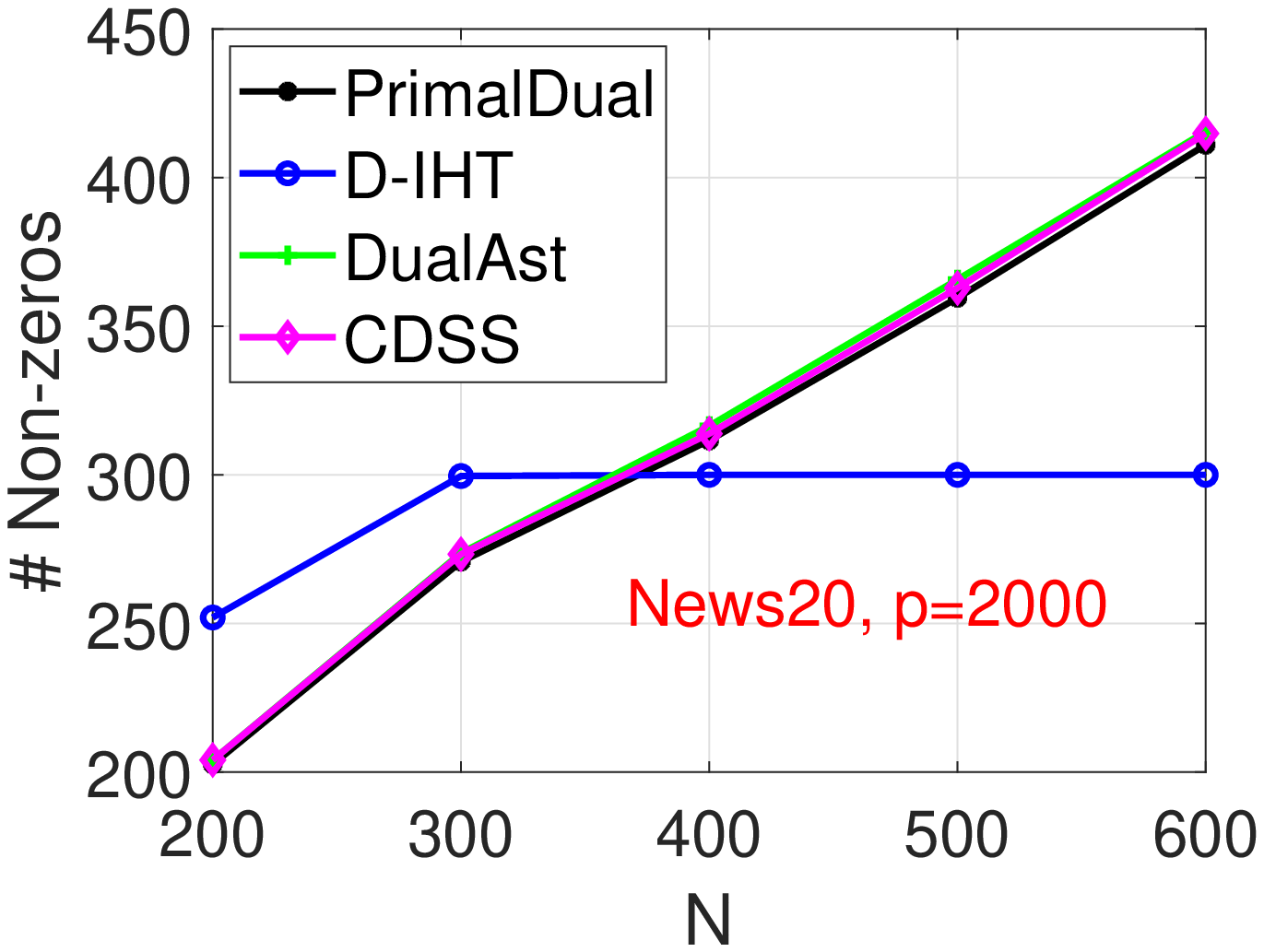}
\includegraphics[width=2.5in]{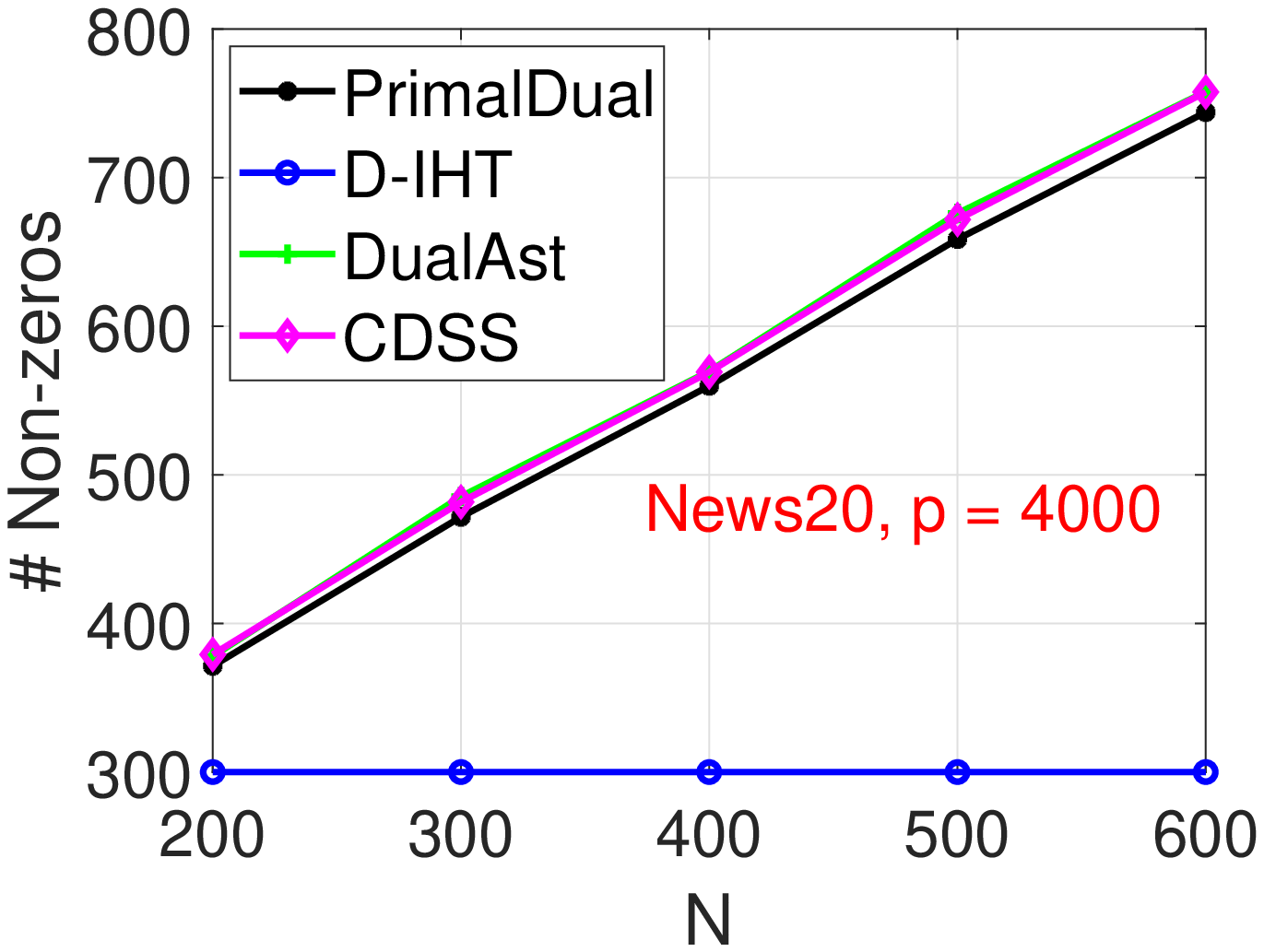}
}

\vspace{-0.15in}
\caption{The  running time, primal objective $P(\hat{\beta})$,  duality gap, and nonzero number  for different methods on News20 dataset. }\label{fig:news20}
\end{figure}

The right column of Figure~\ref{fig:news20} gives other results on News20 datasets with  a larger feature number ($p=4\,000$). Hyper-parameters  are  set as $\lambda_0=0.10, \lambda_1=0.15, \lambda_2=1.00$. We set $k=300$ for D-IHT method.  Each case is replicated for 10 times and the average result is reported.
From the plots, we can see that with longer running time, CDSS achieves smallest duality gap values. However, the proposed method  uses much less computation cost to achieve similar solutions.  

\newpage

\subsubsection{E2006}
E2006 regression dataset has 150\,360 features and 16\,087 samples for training\footnote{\url{https://www.csie.ntu.edu.tw/~cjlin/libsvmtools/datasets/regression.html}}. We also randomly select $p = 4\,000$ features and $n=\{200, 300, 400, 500, 600\}$ samples in our experiments.
 Hyper-parameters  are set to $\lambda_0=0.10, \lambda_1=0.10, \lambda_2=1.00$, and $k=300$ for D-IHT.  
 Each case is replicated for 10 times and the average result is reported.
 Figure~\ref{fig:news20_2} gives  results on E2006 datasets.
Again, the proposed primal-dual algorithm is much more efficient in solving the problems to achieve similar duality gaps compared against other methods. The results validate the duality theory, problem properties, and the proposed incremental strategy.

\begin{figure}[]

\centering
\mbox{
\includegraphics[width=2.5in]{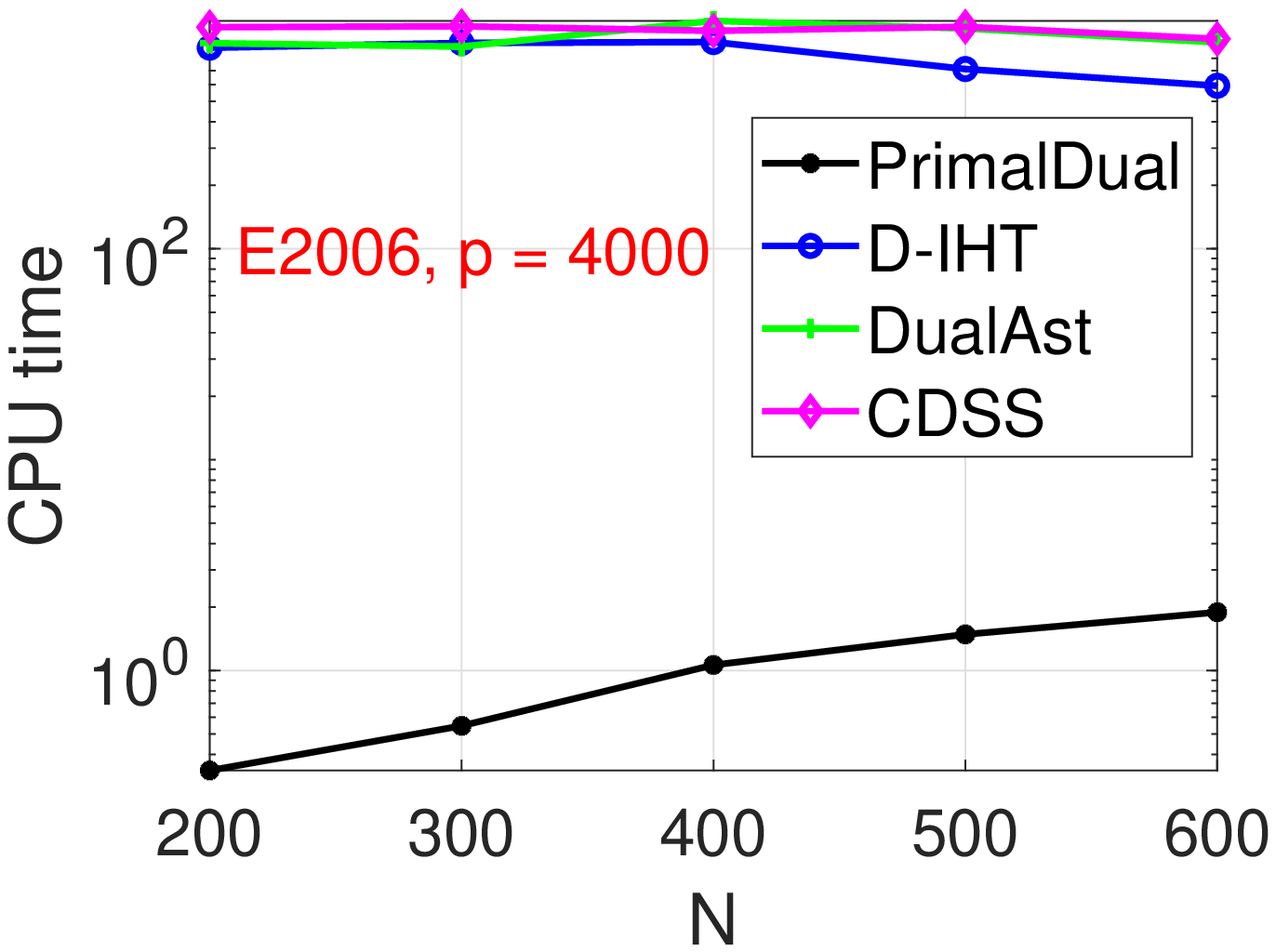}
\includegraphics[width=2.5in]{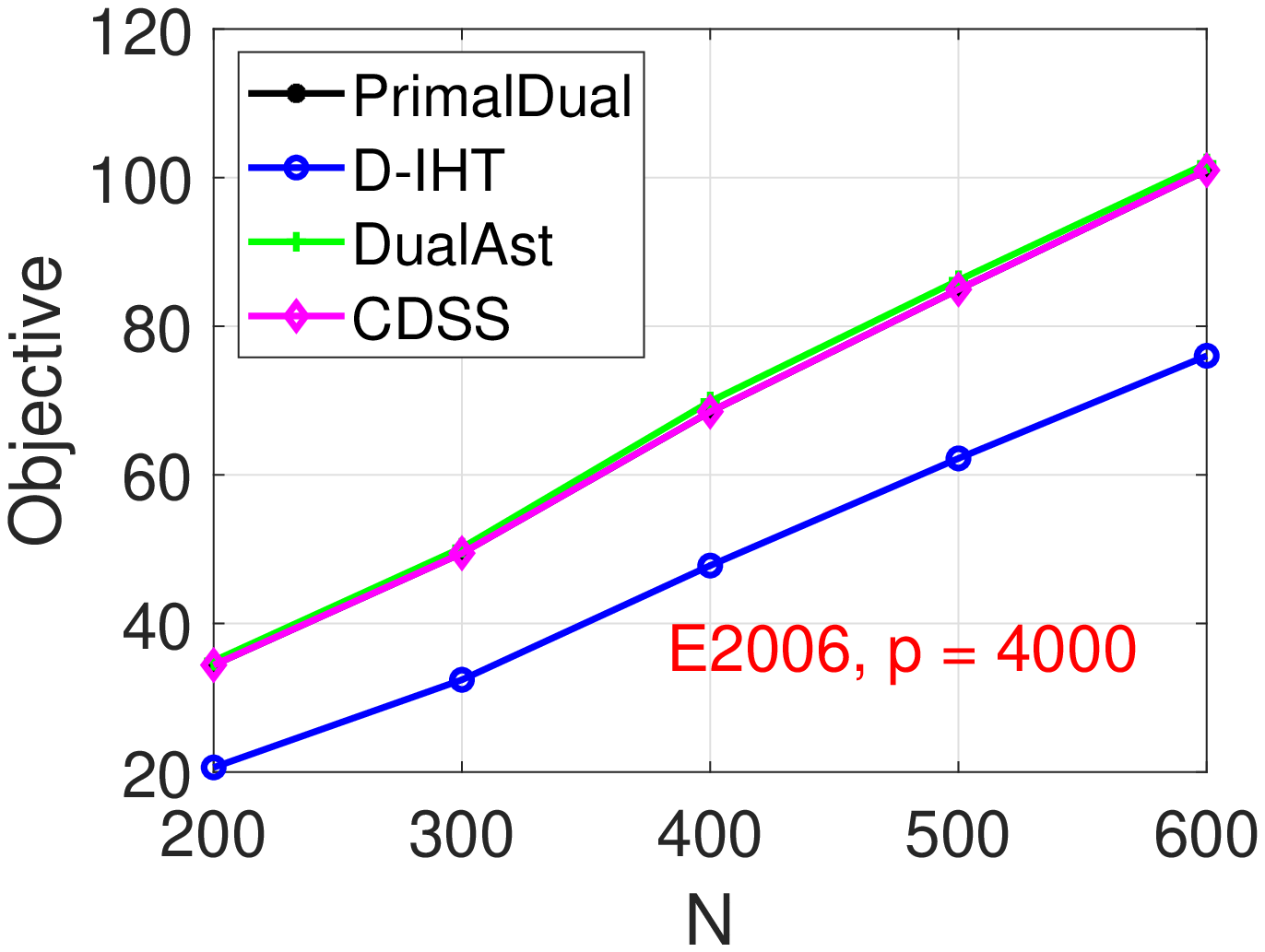}
}

\mbox{
\includegraphics[width=2.5in]{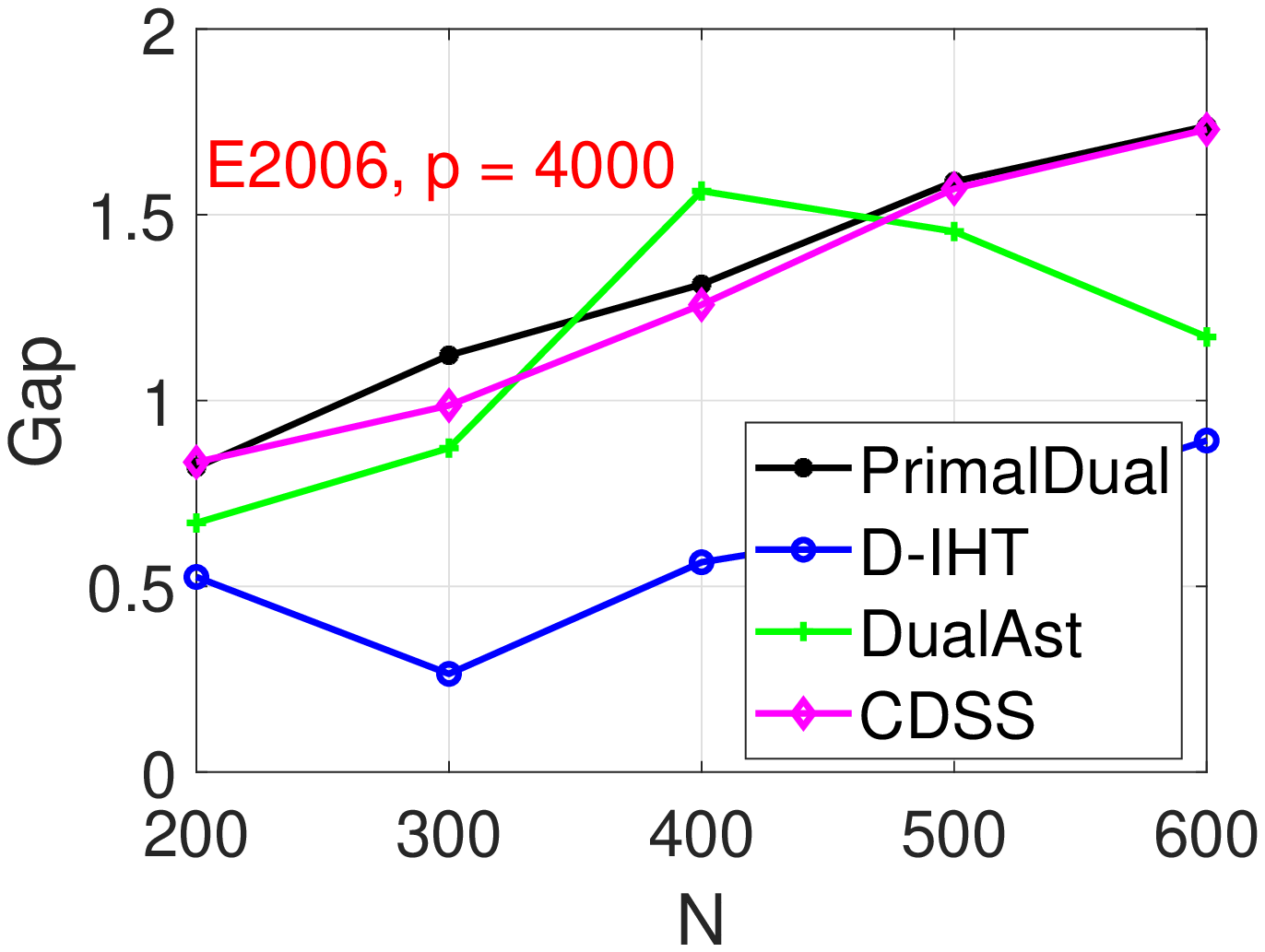}
\includegraphics[width=2.5in]{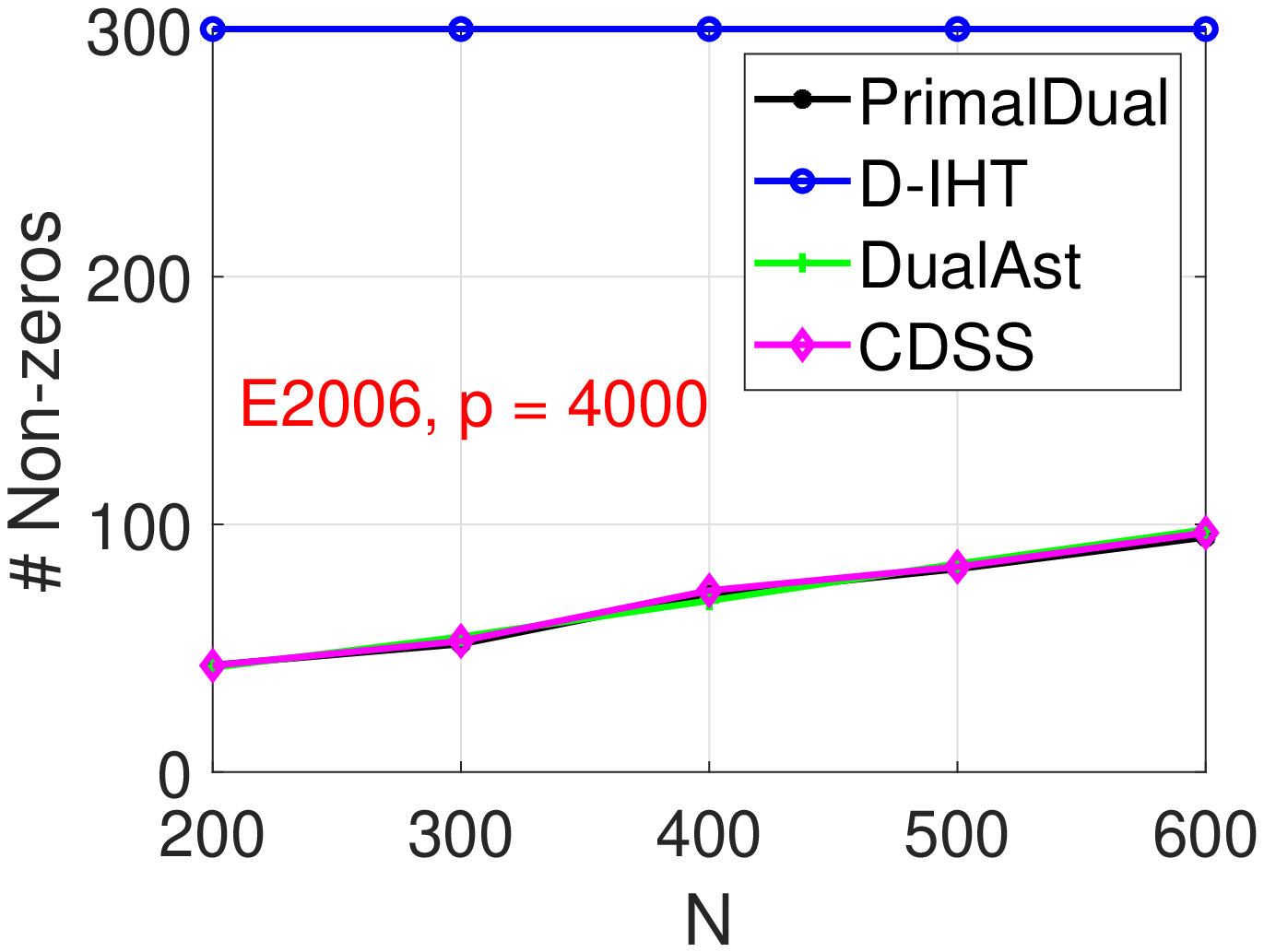}
}

\vspace{-0.1in}
\caption{The  running time, primal objective $P(\hat{\beta})$,  duality gap, and nonzero number  for different methods on E2006 datasets with $p= 4\,000$. }\label{fig:news20_2}
\end{figure}

\section{Discussion}\label{sec:discuss}
In this section, we first give several additional remarks on the method, and then we compare our approach with some existing methods. 
\subsection{Additional Remarks}\label{sec:add_remarks} 

\begin{itemize}
    \item[] \textbf{Technical Contribution:}  This paper investigates  the dual form and  strong duality of the generalized sparse learning problem~\eqref{eq:primal1}  by following~\cite{pilanci2015sparse,Liu17,yuan2020dual}. The  generalized form could overcome over-fitting issues  of $\ell_0$ regularized problems when the data  SNRs are low~\citep{pmlr-v65-david17a,mazumder2022subset}.  A  primal-dual framework  has been developed to further scale up the solutions of $\ell_0$ regularized problems based on the derived dual form. Moreover, the proposed framework considers  active coordinate incremental and screening strategies~\citep{Fercoq2015,GAP,Ndiaye2017,atamturk2020safe,Celer,ren2020thunder}  by leveraging  the duality structure properties of  problem~\eqref{eq:primal1}. The quality of solutions can be evaluated by the duality gap~\eqref{eq:dgap} with the current dual solution $\alpha$ calculated through Remark~\ref{rmk:dual_condt}. 
    
     \item[] \textbf{Saddle Point:} Different from the sparse saddle point defined in~\cite{Liu17,yuan2020dual} that requires $k$-sparse regarding the primal variable,  the saddle point in this paper can be taken as a standard saddle point.  Without the specified $k$, our  duality theory is more close to the standard duality paradigm, and hence some generic primal-dual methods can be employed to further improve the solver. The methodology developed here can be easily extended to plain $\ell_1$ or $\ell_0$ problems (with  the $\ell_2$ term),  group sparse structures, fused sparse structures, or even more  complex and  mixed sparse structures that we cannot or do not need to specify the $k$ values.
    
    \item[] \textbf{Strong Duality:}  Strong duality holds when both the primal and dual variables reach the optimal values. A closer distance between the current estimation $\alpha$ and the  optimal value $\bar{\alpha}$ gives a smaller duality gap. Once the support of $\bar \beta$ is recovered,  the objective function becomes a convex function since $\|\beta^s\|_0$ remains a constant. According to our theoretical analysis, the  saddle point of problem~\eqref{eq:primal1} could be attained within polynomial computation complexity with a decreasing step size. 
    When $l(\cdot)$ takes some special convex function, strong duality holds. However, for general $l(\cdot)$, it is hard to verify the strong duality.

\end{itemize}

\subsection{Comparisons}\label{sec:discussion}

We provide more details on the differences between our work and related works in this section to highlight our technical contributions. 

The proposed method is significantly different from Dual-IHT method~\citep{Liu17,yuan2020dual}. Firstly, the proposed primal-dual algorithm  focuses on a different problem~\eqref{eq:primal1} compared with the Dual-IHT's objective~\eqref{eq:hard_k}. Apart from using soft regularization rather than hard constraint, ~\eqref{eq:primal1} also includes the $\ell_1$ penalty that could be helpful in  the cases with low SNR values. Secondly, our primal-dual method perform updating in both primal and dual spaces to approach the  saddle point, and it can potentially attain solutions with smaller duality gaps. Finally, most important of all, our objective~\eqref{eq:primal1} and the derived dual form~\eqref{eq:dual}  allow us to employ  screening and coordinate incremental strategies~\citep{Fercoq2015,GAP,Celer,ren2020thunder,atamturk2020safe} to boost the efficiency of the algorithm.

There are several obvious differences between our primal-dual method and the coordinate descent with spacer steps~(CDSS) method~\citep{Hazimeh18}. Different from our primal-dual method, CDSS utilizes coordinate descent in the primal space for parameter updating. In CDSS~\citep{Hazimeh18}, they also rely on partial swap inescapable (PSI-$\kappa$) to improve the solution with $\kappa=1$. PSI with $\kappa \geq 2$ will introduce much more extra computation that is usually not affordable.  Our primal-dual method employs coordinate incremental strategy to save computation cost, and experimental results  indicate that the proposed method can achieve similar solution quality as CDSS but with less computation time. 

In~\cite{atamturk2020safe}, the authors propose a screening method for $\ell_0$ regularized problems. However, their objective does not include the $\ell_1$ norm. Our propose method focuses on a more generalized problem that is potentially more powerful on datasets with low SNRs. Moreover, besides the safe screening rule proposed in Section~\ref{sec:sparse_prop}, the proposed coordinate incremental strategy introduced in Section~\ref{sec:increment_strategy} is empirically effective on different datasets. The screening  methods~\citep{Fercoq2015,GAP,Ndiaye2017} and coordinate incremental strategies~\citep{ren2020thunder,Celer} used for $\ell_1$ regularized problems can be taken as special cases of the proposed method with $\lambda_0=0$.

\vspace{0.2in}

\section{Conclusion}\label{sec:conclusion}

In this paper we studied the dual forms of a broad family of $\ell_0$ regularized problems. Based on the derived dual form, a primal-dual algorithm accelerated with active coordinate selection  has been developed. Our theoretical result show that the reformed best subset selection problem can be solved within  polynomial complexity. The developed theory and the proposed framework can be integrated with many screening strategies. 
Experimental results show that the proposed primal-dual method  can  reduce  redundant operations introduced by the inactive features and hence save computation costs. The proposed framework sheds light on  primal-dual algorithms that can potentially  further  scale up the solutions of $\ell_0$ regularized non-convex sparse problems.

\vspace{0.2in}

\bibliographystyle{plainnat}
\bibliography{ref}

\appendix
\newpage

In this Appendix, we provide  theoretical proofs  supporting the results in the main context.  Section~\ref{sec:proof_duality} deliveries the proofs of  Theorem~\ref{Thm:minimax} and Theorem~\ref{Thm:ball} regarding the duality; Section~\ref{sec:supp_analysis} gives the   algorithm analysis; Section~\ref{sec:dual_of_losses} presents the dual forms of two loss functions.

\section{Proofs of Theorem~\ref{Thm:minimax} and Theorem~\ref{Thm:ball}}~\label{sec:proof_duality}

In this section, we provide the proofs to extend the duality theory~\citep{pilanci2015sparse,Liu17,yuan2020dual} to the generalized sparse learning problem~\eqref{eq:primal1}.  The derivation of duality presented here is  significantly different from  the duality of hard thresholding~\citep{Liu17,yuan2020dual} because that the generalized problem  ~\eqref{eq:primal1} uses soft-regularization  terms rather than hard constrains, and it also includes the combination of three regularization norms, i.e., $\ell_0$-, $\ell_1$-, and $\ell_2$-norms. 

\begin{lemma} \label{Lem:lemma1}
For a given $\alpha \in \mathcal{F}^n$,  let $\beta(\alpha) = \argmin_{\beta} L(\beta, \alpha) $. We have 
\begin{align*} 
\min_{\beta} L(\beta, \alpha)= -  \sum_{i=1}^n  l_i^*(\alpha_i)   + \sum_{j=1}^p \Psi(\eta_j(\alpha); \lambda_0, \lambda_1,  \lambda_2),
\end{align*} 
where 
\begin{align*} 
& \Psi(\eta_j(\alpha); \lambda_0,  \lambda_1, \lambda_2)   
=    \begin{cases}
- \lambda_2 (|\eta_j(\alpha)| - \frac{\lambda_1}{2 \lambda_2})^2 + \lambda_0   &\text{if} \  \     |\eta_j(\alpha)|  > \eta_0  \\
\big\{0, - \lambda_2 (|\eta_j(\alpha)| - \frac{\lambda_1}{2 \lambda_2})^2 + \lambda_0  \big\} \quad  &\text{if}  \  \    |\eta_j(\alpha)|  = \eta_0 \\
0  &\text{if}  \  \   |\eta_j(\alpha)|  < \eta_0
\end{cases} .
 \end{align*}
To be specific, $l^*$ is the conjugate function of $l$, $\eta(\alpha) := - \frac{1}{2\lambda_2} \sum_{i=1}^n \alpha_i x_i$ and 
 $\eta_0 := \frac{ 2\sqrt{\lambda_0 \lambda_2} +  \lambda_1}{2\lambda_2}$. The  link function between $\alpha$ and $\beta$ is
\begin{align*} 
& ~~~~ \beta_j(\alpha)  =    \mathfrak{B}(\eta_j(\alpha))  = \begin{cases}
\mathrm{sign}\big(\eta_j(\alpha)\big) \big(|\eta_j(\alpha)| -\frac{\lambda_1 }{2\lambda_2} \big) &\text{if} \  \    |\eta_j(\alpha)|  > \eta_0 \\
\big\{0, \mathrm{sign}\big(\eta_j(\alpha)\big) \big(|\eta_j(\alpha)| -\frac{\lambda_1 }{2\lambda_2} \big) \big\}  &\text{if}  \  \  |\eta_j(\alpha)|  = \eta_0  \\
0  &\text{if}  \  \   |\eta_j(\alpha)|  < \eta_0
\end{cases}
.
\end{align*}
\end{lemma}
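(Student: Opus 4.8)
\textbf{Proof proposal for Lemma~\ref{Lem:lemma1}.}

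The plan is to minimize $L(\beta,\alpha)$ over $\beta$ by exploiting the complete separability of the objective across coordinates of $\beta$. Writing $L(\beta,\alpha) = \sum_{i=1}^n \big(\alpha_i \beta^\top x_i - l_i^*(\alpha_i)\big) + \lambda_0\|\beta\|_0 + \lambda_1\|\beta\|_1 + \lambda_2\|\beta\|_2^2$, I would first note that the term $-\sum_i l_i^*(\alpha_i)$ does not depend on $\beta$ and can be pulled out. The remaining $\beta$-dependent part is $\sum_i \alpha_i\beta^\top x_i + \lambda_0\|\beta\|_0 + \lambda_1\|\beta\|_1 + \lambda_2\|\beta\|_2^2 = (X^\top\alpha)^\top\beta + \lambda_0\|\beta\|_0 + \lambda_1\|\beta\|_1 + \lambda_2\|\beta\|_2^2$. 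Since $\|\beta\|_0$, $\|\beta\|_1$ and $\|\beta\|_2^2$ all decompose as sums over $j$, and the linear term does too, the minimization splits into $p$ independent scalar problems: for each $j$, minimize $g_j(t) := (X^\top\alpha)_j\, t + \lambda_0\,\mathbbm{1}[t\neq 0] + \lambda_1|t| + \lambda_2 t^2$ over $t\in\mathbb{R}$.

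The core of the argument is then solving this one-dimensional problem. First I would handle the case $t\neq 0$: there $g_j(t) = \lambda_2 t^2 + \lambda_1|t| + (X^\top\alpha)_j t + \lambda_0$, a strictly convex quadratic on each half-line. Setting $c_j := (X^\top\alpha)_j = -2\lambda_2\,\eta_j(\alpha)$, completing the square gives, for the sign choice $s = -\mathrm{sign}(c_j) = \mathrm{sign}(\eta_j(\alpha))$, an optimal magnitude $|t^\star| = \max\{0,\, |\eta_j(\alpha)| - \tfrac{\lambda_1}{2\lambda_2}\}$ with minimum value $-\lambda_2\big(|\eta_j(\alpha)| - \tfrac{\lambda_1}{2\lambda_2}\big)^2 + \lambda_0$ when $|\eta_j(\alpha)| > \tfrac{\lambda_1}{2\lambda_2}$. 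Comparing this against the value $g_j(0) = 0$ determines whether the nonzero candidate actually wins: the nonzero branch is strictly better iff $-\lambda_2\big(|\eta_j(\alpha)| - \tfrac{\lambda_1}{2\lambda_2}\big)^2 + \lambda_0 < 0$, which rearranges exactly to $|\eta_j(\alpha)| > \eta_0 = \tfrac{2\sqrt{\lambda_0\lambda_2}+\lambda_1}{2\lambda_2}$. Equality $|\eta_j(\alpha)| = \eta_0$ yields a tie, giving the set-valued $\{0,\,\mathrm{sign}(\eta_j(\alpha))(|\eta_j(\alpha)| - \tfrac{\lambda_1}{2\lambda_2})\}$ in both $\mathfrak{B}$ and $\Psi$; below the threshold, $t^\star = 0$. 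Assembling the per-coordinate minima gives $\min_\beta L(\beta,\alpha) = -\sum_i l_i^*(\alpha_i) + \sum_j \Psi(\eta_j(\alpha);\lambda_0,\lambda_1,\lambda_2)$ with the stated link $\beta_j(\alpha) = \mathfrak{B}(\eta_j(\alpha))$.

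I expect the main obstacle to be the careful bookkeeping at the threshold and the verification that the quadratic comparison $\lambda_0 \lessgtr \lambda_2\big(|\eta_j| - \tfrac{\lambda_1}{2\lambda_2}\big)^2$ is genuinely equivalent to $|\eta_j| \lessgtr \eta_0$ — this needs the observation that $\lambda_0 = \lambda_2\big(\eta_0 - \tfrac{\lambda_1}{2\lambda_2}\big)^2$ (i.e. $\eta_0 - \tfrac{\lambda_1}{2\lambda_2} = \sqrt{\lambda_0/\lambda_2}$) together with monotonicity of $u\mapsto\lambda_2 u^2$ on $u\geq 0$, and one must also confirm that when $\tfrac{\lambda_1}{2\lambda_2} < |\eta_j| \le \eta_0$ the soft-thresholded magnitude is nonnegative so the comparison is on the right domain. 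The separability reduction and the scalar quadratic optimization itself are routine; the delicate point is getting the three-case boundary behavior to match \eqref{eq:B_frak} and \eqref{eq:Psi_1} exactly.
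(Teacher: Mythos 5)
Your proposal is correct and follows essentially the same route as the paper's proof: exploit coordinate separability, solve each scalar problem by completing the square (soft-thresholding on each half-line), and compare the nonzero candidate's value against $0$ to obtain the threshold $|\eta_j(\alpha)| \gtrless \eta_0$, including the tie at equality. The only difference is cosmetic — you work directly in the $\eta_j$ scale whereas the paper first uses $\tau_j = \alpha^\top x_{\cdot j}$ and converts at the end.
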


\begin{proof}
 With $l^*$ as the conjugate of $l$, the primal problem can be rewritten as 

\begin{align*} 
&\min_{\beta \in \mathbb{R}^p}  L(\beta, \alpha) \\
=&\min_{\beta \in \mathbb{R}^p} \sum_{i=1}^n  \big(\alpha_i \beta^{\top}x_i - l_i^*(\alpha_i) \big) + \lambda_0 ||\beta||_0 + \lambda_1 ||\beta||_1+ \lambda_2 ||\beta||^2_2 \\ 
=& \min_{\beta \in \mathbb{R}^p}   \sum_{i=1}^n   \big(\alpha_i \beta^{\top}x_i - l_i^*(\alpha_i) \big) + \lambda_0 ||\beta||_0 + \lambda_1 ||\beta||_1 + \lambda_2 ||\beta||^2_2\\
=&-  \sum_{i=1}^n  l_i^*(\alpha_i)  + \min_{\beta \in \mathbb{R}^p}  \sum_{i=1}^n   \big(\alpha_i \beta^{\top}x_i \big) + \lambda_0 ||\beta||_0 + \lambda_1 ||\beta||_1 + \lambda_2 ||\beta||^2_2 \\
=&  -  \sum_{i=1}^n  l_i^*(\alpha_i)   +  \min_{\beta \in \mathbb{R}^p} \sum_{j=1}^p \big(\sum_{i=1}^n \alpha_i x_{ij} \big)\beta_j  + \lambda_0 ||\beta||_0 + \lambda_1 ||\beta||_1 + \lambda_2 ||\beta||^2_2 \\
=& -  \sum_{i=1}^n  l_i^*(\alpha_i)   + \sum_{j=1}^p \Phi( \sum_{i=1}^n \alpha_i x_{ij}; \lambda_0, \lambda_1,  \lambda_2).
\end{align*}

Let $\tau_j = \alpha^{\top}  x_{\cdot j}$,
then
\begin{align*} 
 \Phi(\tau_j; \lambda_0, \lambda_1, \lambda_2) =&  \min_{\beta_j}    \tau_j \beta_j + \lambda_0 \mathbf{I} \{\beta_j \neq 0 \}+ \lambda_1|\beta_j|+ \lambda_2 \beta_j^2 \\
 =&  \min_{\beta_j}    \lambda_2 \bigg(   \beta_j^2  + \frac{\tau_j}{\lambda_2} \beta_j\bigg)^2 +  \lambda_0 \mathbf{I} \{\beta_j \neq 0 \}+ \lambda_1|\beta_j| \\
 =& \min_{\beta_j}   \lambda_2 \bigg( \beta_j + \frac{\tau_j}{2\lambda_2}\bigg)^2  +   \lambda_0 \mathbf{I} \{\beta_j \neq 0 \}+ \lambda_1|\beta_j| - \frac{\tau_j^2}{4 \lambda_2} .
\end{align*}
Let
\begin{align*} 
h(u) =   \lambda_2 \bigg( u + \frac{\tau_j}{2\lambda_2}\bigg)^2  +   \lambda_0 \mathbf{I} \{u \neq 0 \}+ \lambda_1|u| - \frac{\tau_j^2}{4 \lambda_2} .
\end{align*}

If $u \neq 0$, we have the soft-thresholding 

\begin{align*} 
\hat{u} = -\frac{1}{2\lambda_2} \mathrm{sign}(\tau_j)\max(|\tau_j| - \lambda_1,0) .
\end{align*}

With $|\tau_j| > \lambda_1$,  
\begin{align*} 
h(\hat{u}) &=  \lambda_2 \bigg( -\frac{\tau_j - \lambda_1 \mathrm{sign}(\tau_j) }{2\lambda_2}  + \frac{\tau_j}{2\lambda_2}\bigg)^2  +   \lambda_0 +\frac{ \lambda_1}{2\lambda_2}(|\tau_j| - \lambda_1) - \frac{\tau_j^2}{4 \lambda^2_2}  \\
&= \lambda_0 + \frac{\lambda_1 |\tau_j|}{2\lambda_2} - \frac{\lambda_1^2}{4\lambda_2} -  \frac{\tau_j^2}{4 \lambda_2} \\
&=  \lambda_0  - \frac{1}{4\lambda_2} \big( \lambda_1^2 + \tau_j^2 - 2\lambda_1|\tau_j|\big) \\
&=  \lambda_0  -   \frac{1}{4\lambda_2} \big( |\tau_j| -  \lambda_1 \big) ^2 .
\end{align*}
With $h(\hat{u}) < h(0)=0$, we get 
\begin{align*} 
 \lambda_0  <   \frac{1}{4\lambda_2} \big( |\tau_j| -  \lambda_1 \big) ^2  \implies   |\tau_j|  > 2\sqrt{\lambda_0 \lambda_2} +  \lambda_1 .
\end{align*}
Hence, $\hat{u}$ is the minimizer when $ |\tau_j|  > 2\sqrt{\lambda_0 \lambda_2} +  \lambda_1$.  We have
\begin{align*} 
 \Phi(\tau_j; \lambda_0, \lambda_1, \lambda_2) =  -   \frac{1}{4\lambda_2} \big( |\tau_j| -  \lambda_1 \big) ^2 +  \lambda_0  .
\end{align*}

When $2\sqrt{\lambda_0 \lambda_2} +  \lambda_1  = |\tau_j|$, both $\hat{u}$ and 0 are 
minimizer. Then 
\begin{align*} 
 \Phi(\tau_j; \lambda_0, \lambda_1, \lambda_2)  = \{ 0, -\frac{1}{4\lambda_2} (|\tau_j| - \lambda_1)^2 + \lambda_0 \}.
\end{align*}

 If $2\sqrt{\lambda_0 \lambda_2} +  \lambda_1  > |\tau_j| $, 0 is the minimizer, and $ \Psi(\tau_j, \lambda_0, \lambda_1, \lambda_2)  = 0$.

 The optimal primal can be written as 

\begin{align*} 
\beta^*_j & =    \mathfrak{A}(\tau_j; \lambda_0, \lambda_1, \lambda_2)  =  \begin{cases}
\mathrm{sign}(\tau_j) \frac{\lambda_1 - |\tau_j|}{2\lambda_2} &\text{if} \  \    |\tau_j|  > 2\sqrt{\lambda_0 \lambda_2} +  \lambda_1 \\
\{0,  \mathrm{sign}(\tau_j) \frac{\lambda_1 - |\tau_j|}{2\lambda_2}  \} \quad  &\text{if}  \  \   |\tau_j|  = 2\sqrt{\lambda_0 \lambda_2} +  \lambda_1 \\
0  &\text{if}  \  \   |\tau_j|   < 2\sqrt{\lambda_0 \lambda_2} +  \lambda_1
\end{cases} \\
&= \mathfrak{B}(\eta_j(\alpha)) .
\end{align*}
Here $\eta_0 = \frac{ 2\sqrt{\lambda_0 \lambda_2} +  \lambda_1}{2\lambda_2} $.
With the optimal $\beta(\alpha)$, $L(\beta, \alpha)$can be written as 
\begin{align*} 
 \Phi(\tau_j; \lambda_0,  \lambda_1, \lambda_2)& =   \begin{cases}
-\frac{1}{4\lambda_2} (|\tau_j| - \lambda_1)^2  + \lambda_0   &\text{if} \  \    |\tau_j|  > 2\sqrt{\lambda_0 \lambda_2} +  \lambda_1 \\
\{0,  -\frac{1}{4\lambda_2} (|\tau_j| - \lambda_1)^2 + \lambda_0   \} \quad  &\text{if}  \  \   |\tau_j|  = 2\sqrt{\lambda_0 \lambda_2} +  \lambda_1 \\
0  &\text{if}  \  \   |\tau_j|   < 2\sqrt{\lambda_0 \lambda_2} +  \lambda_1
\end{cases} .
\end{align*}
Alternatively, 
\begin{align*} 
 \Psi(\eta_j(\alpha); \lambda_0,  \lambda_1, \lambda_2)   
&= \Phi(\tau_j; \lambda_0,  \lambda_1, \lambda_2)  \\
 & =   \begin{cases}
- \lambda_2 (|\eta_j(\alpha)| - \frac{\lambda_1}{2 \lambda_2})^2 + \lambda_0   &\text{if} \  \     |\eta_j(\alpha)|  >  \eta_0 \\
\big\{0, - \lambda_2 (|\eta_j(\alpha)| - \frac{\lambda_1}{2 \lambda_2})^2 + \lambda_0  \big\} \quad  &\text{if}  \  \    |\eta_j(\alpha)|  = \eta_0 \\
0  &\text{if}  \  \   |\eta_j(\alpha)|  < \eta_0
\end{cases} .
\end{align*}
Here $\eta(\alpha) = - \frac{1}{2\lambda_2} \sum_{i=1}^n \alpha_i x_i$, and $\eta_0 = \frac{ 2\sqrt{\lambda_0 \lambda_2} +  \lambda_1}{2\lambda_2} $. It concludes the lemma. 
\end{proof}

\begin{lemma} \label{Lem:saddle_point}
 (Saddle Point). Let $\bar{\beta}$ be a primal vector and $\bar{\alpha}$ a  dual vector. Then $ (\bar{\alpha}, \bar{\beta})$ is a   saddle point of $L$ if and only if the following conditions hold:
 
 a) $\bar{\beta}$ solves the primal problem;
 
 b) $\bar{\alpha} \in [\partial l_1(\bar{\beta}^{\top}x_1),  \partial l_2(\bar{\beta}^{\top}x_2), ..., \partial l_n(\bar{\beta}^{\top}x_n)]^{\top}$;
 
 c) $\bar{\beta}_j = \mathfrak{B}(\eta_j(\bar{\alpha}))$.
\end{lemma}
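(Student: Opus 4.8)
The plan is to prove Lemma~\ref{Lem:saddle_point} by unpacking the two inequalities in the saddle-point definition~\eqref{eq:saddlepoint} and matching each with one of the stated conditions. Recall that $(\bar\beta,\bar\alpha)$ is a saddle point of $L$ iff simultaneously $\bar\beta \in \argmin_\beta L(\beta,\bar\alpha)$ and $\bar\alpha \in \argmax_{\alpha\in\mathcal F^n} L(\bar\beta,\alpha)$. I would treat these two membership conditions separately and then reassemble.

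First I would analyze the $\alpha$-maximization side. Since $L(\bar\beta,\alpha) = \sum_i (\alpha_i \bar\beta^\top x_i - l_i^*(\alpha_i)) + (\text{terms independent of }\alpha)$, the maximization over $\alpha\in\mathcal F^n$ separates coordinatewise into $\max_{\alpha_i\in\mathcal F}\{\alpha_i \bar\beta^\top x_i - l_i^*(\alpha_i)\}$, which by the Fenchel–Young setup (the definition $l_i(u)=\max_{\alpha_i\in\mathcal F}\{\alpha_i u - l_i^*(\alpha_i)\}$ stated in Section~\ref{sec:duality}) is attained precisely when $\alpha_i \in \partial l_i(\bar\beta^\top x_i)$. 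This gives the equivalence ``$\bar\alpha$ maximizes $L(\bar\beta,\cdot)$'' $\iff$ condition (b). This step is essentially the standard conjugate-subgradient correspondence and should be routine.

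Next I would analyze the $\beta$-minimization side. By Lemma~\ref{Lem:lemma1}, $\min_\beta L(\beta,\bar\alpha)$ has the stated closed form, and the set of minimizers is exactly $\{\beta : \beta_j \in \mathfrak B(\eta_j(\bar\alpha)) \text{ for all } j\}$ (the set-valued case at $|\eta_j(\bar\alpha)|=\eta_0$ being the only source of non-uniqueness). Hence ``$\bar\beta$ minimizes $L(\cdot,\bar\alpha)$'' $\iff$ condition (c). Then, to bring in condition (a), I would use the reformulation~\eqref{eq:Lminmax}: for any fixed $\beta$, $\max_{\alpha\in\mathcal F^n}L(\beta,\alpha) = P(\beta)$. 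So if $(\bar\beta,\bar\alpha)$ is a saddle point, the saddle-point inequalities plus this identity give $P(\bar\beta) = L(\bar\beta,\bar\alpha) = \min_\beta L(\beta,\bar\alpha) \le \min_\beta \max_\alpha L(\beta,\alpha) = \min_\beta P(\beta)$, so $\bar\beta$ solves the primal — this yields (a). For the converse direction, assuming (a), (b), (c): (b) gives $L(\bar\beta,\alpha)\le L(\bar\beta,\bar\alpha)$ for all $\alpha$ (left inequality); (c) gives $L(\bar\beta,\bar\alpha)=\min_\beta L(\beta,\bar\alpha)\le L(\beta,\bar\alpha)$ for all $\beta$ (right inequality); so $(\bar\beta,\bar\alpha)$ is a saddle point directly from the definition. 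Note that in this converse direction condition (a) is actually not needed to get the saddle point — (b) and (c) alone suffice — but (a) is part of the ``if'' package and is consistent; I would remark on this or simply show the needed implications in both directions carefully.

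The main obstacle I anticipate is the bookkeeping around the set-valued threshold operators $\mathfrak B$ and $\Psi$ at the boundary case $|\eta_j(\bar\alpha)|=\eta_0$: one must be careful that ``$\bar\beta_j = \mathfrak B(\eta_j(\bar\alpha))$'' is interpreted as $\bar\beta_j$ being a \emph{selection} from the set $\mathfrak B(\eta_j(\bar\alpha))$, and that Lemma~\ref{Lem:lemma1}'s minimizer characterization is stated as a membership rather than an equality. A secondary subtlety is making sure the coordinatewise separation of both the $\alpha$-problem and the $\beta$-problem is rigorous (the $\beta$-problem separates because $\|\beta\|_0$, $\|\beta\|_1$, $\|\beta\|_2^2$ and the linear term are all coordinate-separable, which is exactly what Lemma~\ref{Lem:lemma1}'s proof already exploited). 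Once these interpretive points are pinned down, the proof is a short chain of equivalences: (saddle point) $\iff$ ($\bar\beta\in\argmin L(\cdot,\bar\alpha)$ and $\bar\alpha\in\argmax L(\bar\beta,\cdot)$) $\iff$ ((c) and (b)), with (a) following from the min-max identity~\eqref{eq:Lminmax}.
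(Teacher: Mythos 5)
Your proposal is correct and follows essentially the same route as the paper's own proof: condition (b) via the Fenchel--Young/maximizing-argument property applied to $\max_{\alpha}L(\bar\beta,\alpha)=P(\bar\beta)$, condition (c) via the minimizer characterization from Lemma~\ref{Lem:lemma1}, and condition (a) from the chain $P(\bar\beta)=L(\bar\beta,\bar\alpha)=\min_\beta L(\beta,\bar\alpha)\le\min_\beta P(\beta)$. Your observation that (a) is not actually needed in the converse direction also matches the structure of the paper's argument, which likewise establishes the saddle-point inequalities from (b) and (c) alone.
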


\begin{proof}
$\Leftarrow$:  If the pair $(\bar{\beta}, \bar{\alpha})$ is a  saddle point for $L$, then from the definition of conjugate convexity and inequality in the definition of  saddle point we have 
\begin{align*} 
P(\bar{\beta}) = \max_{\alpha \in \mathcal{F}^n} L(\bar{\beta}, \alpha) \leq  L(\bar{\beta}, \bar{\alpha})  \leq \min_{\beta \in \mathbb{R}^p } L(\beta, \bar{\alpha}) .
 \end{align*}
On the other hand, we know that for any $\beta \in \mathbb{R}^p$ and $\alpha \in \mathcal{F}^n$
\begin{align*} 
L(\beta, \alpha) \leq \max_{\alpha' \in \mathcal{F}^n} L(\beta, \alpha') = P(\beta).
 \end{align*}
By combining the preceding two inequalities we obtain
\begin{align*} 
P(\bar{\beta}) \leq \min_{\beta \in \mathbb{R}^p} L(\beta, \bar{\alpha}) \leq \min_{\beta \in \mathbb{R}^p} P(\beta) \leq P(\bar{\beta}).
 \end{align*}
Therefore $P(\bar{\beta}) = \min_{ \beta \in \mathbb{R}^p } P(\beta)$, i.e., $\bar{\beta}$ solves the primal problem , which proves the necessary condition~(a). Moreover, the above arguments lead to 
\begin{align*} 
P(\bar{\beta}) = \max_{\alpha  \in \mathcal{F}^n } L(\bar{\beta}, \alpha) = L(\bar{\beta}, \bar{\alpha}) .
 \end{align*}
Then from the maximizing argument property of convex conjugate we have $\bar{\alpha}_i \in \partial l_i(\bar{\beta}^{\top} x_i) $, and it concludes condition b).
Note that 
\begin{align} \label{eq:L_Thm1}
L(\beta, \bar{\alpha}) = \lambda_2\bigg|\bigg| \beta + \frac{1}{ 2\lambda_2} \sum_{i=1}^n \bar{\alpha}_i x_i \bigg|\bigg|^2 -   \sum_{i=1}^n l_i^*(\bar{\alpha}_i) + \lambda_1||\beta||_1 + \lambda_0 ||\beta||_0+C .
\end{align}
Let $\bar{F} = \mathrm{supp}(\bar{\beta})$. Since the above analysis implies $L(\bar{\beta}, \bar{\alpha})=\min_{\beta} L(\beta, \bar{\alpha})$, with $\eta(\bar{\alpha}) = - \frac{1}{2\lambda_2} \sum_{i=1}^n \bar{\alpha}_i x_i$, it must hold that (more details refer to the proof of Lemma~\ref{Lem:lemma1})
$\bar{\beta}_j = \mathfrak{B}(\eta_j(\bar{\alpha}))$.
This validates the condition (c).

$\Rightarrow$: Inversely, let us assume that $\bar{\beta}$ is a solution to the primal problem~(condition a)), and $\bar{\alpha}_i \in \partial l_i(\bar{\beta}^{\top}x_i)$~(condition b)). Again from the maximizing argument property of convex conjugate we know that $l_i(\bar{\beta}^{\top} x_i) = \bar{\alpha} \bar{\beta}^{\top}x_i - l_i^*(\bar{\alpha}_i)$. This leads to 
\begin{align} 
L(\bar{\beta}, \alpha) \leq  P(\bar{\beta}) = \max_{\alpha \in \mathcal{F}^n} L(\bar{\beta}, \alpha) = \ L(\bar{\beta}, \bar{\alpha}). \label{eq:ThOne1_2}
 \end{align}
The sufficient condition~(c) guarantees that based on the expression of \eqref{eq:L_Thm1}, for any $\beta$, we have 
\begin{align} 
L(\bar{\beta}, \bar{\alpha}) \leq  L(\beta, \bar{\alpha}) \label{eq:ThOne2}.
\end{align}
By combining the inequalities~\eqref{eq:ThOne1_2} and~\eqref{eq:ThOne2} we get that for any $\beta$ and $\alpha$ 
 \begin{align*}
L(\bar{\beta}, \alpha) \leq L(\bar{\beta}, \bar{\alpha})  \leq L( \beta, \bar{\alpha}) .
\end{align*}
This shows that $(\bar{\beta}, \bar{\alpha})$ is a  saddle point of the Lagrangian $L$. 
\end{proof}

\noindent\textbf{Theorem~\ref{Thm:minimax}} \  {\it 
Let $\bar{\beta} \in \mathbb{R}^p $ be a primal vector and $\bar{\alpha} \in \mathcal{F}^n$  regarding L, then
\vspace{-0.1in}

\begin{enumerate} 
\item $ (\bar{\alpha}, \bar{\beta})$ is a  saddle point of $L$ if and only if the following conditions hold:
\begin{enumerate}
\item $\bar{\beta}$ solves the primal problem; 
 
\item  $\bar{\alpha} \in [\partial l_1(\bar{\beta}^{\top}x_1),  \partial l_2(\bar{\beta}^{\top}x_2), ..., \partial l_n(\bar{\beta}^{\top}x_n)]^{\top}$;
\item
$\bar{\beta}_j =   \mathfrak{B}(\eta_j(\bar{\alpha}))$.
\end{enumerate}

 \item The mini-max relationship 
\begin{align} \label{eq:th2}
\max_{\alpha \in \mathcal{F}^n} \min_{\beta} L(\beta, \alpha) = \min_{\beta} \max_{\alpha \in  \mathcal{F}^n} L(\beta, \alpha) .
\end{align}
holds if and only if there exists a saddle point $(\bar{\beta}, \bar{\alpha})$ for L. 

\item The corresponding dual problem of~\eqref{eq:primal1} is  written as
\begin{align} \label{eq:dual}
  &\max_{\alpha \in \mathcal{F}^n} D(\alpha) =  \max_{\alpha \in \mathcal{F}^n}-  \sum_{i=1}^n  l_i^*(\alpha_i)   + \sum_{j=1}^p \Psi(\eta_j(\alpha); \lambda_0,  \lambda_1, \lambda_2),
\end{align} 
where  $l^*$ is the conjugate function of $l$.  The primal dual link is written as $\beta_j(\alpha)  =  \mathfrak{B}(\eta_j(\alpha))$.

\item (Strong duality) $\bar{\alpha}$ solves the dual problem in~\eqref{eq:dual}, i.e., $D(\bar{\alpha}) \geq D(\alpha), \alpha \in \mathcal{F}^n$, and $P(\bar{\beta}) = D(\bar{\alpha})$ if and only if the pair $(\bar{\beta}, \bar{\alpha})$ satisfies the  three conditions given by (a)$\sim$(c).

\end{enumerate}
}

\begin{proof}
According to Lemma~\ref{Lem:saddle_point}, statement-1 can be proved. We focus on statements 2-4. The following Part (I),  Part (II), and Part (III) are proofs of statement-2, statement-3, and statement-4, respectively. 

Part (I): the mimi-max relationship in  statement-2.

$\Rightarrow$:
Let $(\bar{\beta}, \bar{\alpha})$ be a  saddle point for $L$. On one hand, note that the following holds for any $\beta'$ and $\alpha'$,
\begin{align*} 
\min_{\beta} L(\beta, \alpha') \leq  L(\beta', \alpha')  \leq \max_{\alpha \in \mathcal{F}^n} L(\beta', \alpha) 
\end{align*}
which implies
\begin{align} 
\max_{\alpha \in \mathcal{F}^n }\min_{\beta} L(\beta, \alpha)  \leq \min_{\beta} \max_{\alpha \in \mathcal{F}^n} L(\beta, \alpha) . \label{eq:th2_1}
\end{align}
On the other hand, since $(\bar{\beta}, \bar{\alpha})$ is a  saddle point for $L$, the following is true:
\begin{align}
& ~ \max_{\alpha \in \mathcal{F}^n}\min_{\beta} L(\beta, \alpha)  \leq  \max_{\alpha \in \mathcal{F}^n } L(\bar{\beta}, \alpha)  \nonumber \\
& \leq  \max_{\alpha \in \mathcal{F}^n} L(\bar{\beta},  \bar{\alpha})  \leq \min_{\beta} L(\beta,\bar{ \alpha}) 
 \leq \max_{\alpha \in \mathcal{F}^n }\min_{\beta} L(\beta, \alpha) \label{eq:th2_2} .
\end{align}
By combining~\eqref{eq:th2_1} and~\eqref{eq:th2_2} we prove the equality~\eqref{eq:th2}. 

$\Leftarrow$: 
Assume that the equality in~\eqref{eq:th2} holds. Let us define $\bar{\beta}$ and $\bar{\alpha}$ such that 
\[\max_{\alpha \in \mathcal{F}^n} L(\bar{\beta}, \alpha) = \min_{\beta}  \max_{\alpha \in \mathcal{F}^n } L(\beta, \alpha) \] 
and 
\[\min_{\beta} L(\beta, \bar{\alpha}) = \max_{\alpha \in \mathcal{F}^n}  \min_{\beta}  L(\beta, \alpha) .
\]
Then we can see that for any $\alpha$, $L(\bar{\beta}, \bar{\alpha}) \geq \min_{\beta} L(\beta, \bar{\alpha}) = \max_{\alpha' \in \mathcal{F}^n} L(\bar{\beta}, \alpha')$, 
where the ``='' is due to~\eqref{eq:th2}. In the meantime, for any $\beta$
 \begin{align*} 
L(\bar{\beta}, \bar{\alpha}) \leq \max_{\alpha \in \mathcal{F}^n} L(\bar{\beta}, \alpha) =  \min_{\beta} L(\beta', \bar{\alpha}) \leq L(\beta, \bar{\alpha}).
\end{align*}
This shows that $(\bar{\beta}, \bar{\alpha})$ is a  saddle point for L.

Part (II): the dual form in statement-3.

According to Lemma~\ref{Lem:lemma1}, for any $\alpha$, the $\beta$ that minimizes $L(\beta, \alpha)$ is 
\begin{align} \label{eq:beta_j}
\beta_j (\alpha)  =    \mathfrak{B}(\eta_j(\alpha))  = \begin{cases}
\mathrm{sign}\big(\eta_j(\alpha)\big) \big(|\eta_j(\alpha)| -\frac{\lambda_1 }{2\lambda_2} \big) &\text{if} \  \    |\eta_j(\alpha)|  > \eta_0 \\
\big\{0, \mathrm{sign}\big(\eta_j(\alpha)\big) \big(|\eta_j(\alpha)| -\frac{\lambda_1 }{2\lambda_2} \big) \big\} \quad  &\text{if}  \  \  |\eta_j(\alpha)|  = \eta_0  \\
0  &\text{if}  \  \   |\eta_j(\alpha)|  < \eta_0
\end{cases}
.
\end{align}

Then we have 
\begin{align} \label{eq:dual1}
D(\alpha) = -  \sum_{i=1}^n  l_i^*(\alpha_i)   + \sum_{j=1}^p \Psi(\eta_j ; \lambda_0, \lambda_1,  \lambda_2),
\end{align} 
where 
\begin{align} \notag
 \Psi(\eta_j(\alpha); \lambda_0,  \lambda_1, \lambda_2) &=   \begin{cases}
- \lambda_2 (|\eta_j(\alpha)| - \frac{\lambda_1}{2 \lambda_2})^2 + \lambda_0   &\text{if} \  \     |\eta_j(\alpha)|  > \eta_0  \\
\big\{0, - \lambda_2 (|\eta_j(\alpha)| - \frac{\lambda_1}{2 \lambda_2})^2 + \lambda_0  \big\} \quad  &\text{if}  \  \    |\eta_j(\alpha)|  = \eta_0 \\
0  &\text{if}  \  \   |\eta_j(\alpha)|  < \eta_0
\end{cases} \\ \label{eq:dual_normal}
&= - \lambda_2 ||\beta(\alpha)||^2_2 + \lambda_0 ||\beta(\alpha)||_0 .
 \end{align}
Here $\eta_0 = \frac{ 2\sqrt{\lambda_0 \lambda_2} +  \lambda_1}{2\lambda_2} $. Assume we have two arbitrary dual variables $\alpha_1, \alpha_2 \in \mathcal{F}^n$ and any $g(\alpha_2) \in [\beta(\alpha_2)^{\top} x_1 - {l_1^*}^{'}(\alpha_{2(1)}), ..., \beta(\alpha_2)^{\top} x_n - {l_n^*}^{'}(\alpha_{2(n)})] $. Here $\alpha_{2(n)}$ is the $n$th entry of $\alpha_{2}$. $L(\beta, \alpha)$ is concave in terms of $\alpha$ given any fixed $\beta$. According to the definition of $D(\alpha)$, we have 
 \begin{align*}
 D(\alpha_1) = L(\beta(\alpha_1), \alpha_1) \leq L(\beta(\alpha_2), \alpha_1)  \leq L(\beta(\alpha_2), \alpha_2)  +\langle g(\alpha_2) , \alpha_1 - \alpha_2 \rangle .
 \end{align*}
Hence $D(\alpha)$ is concave and the super gradient is as given. 

Part (III): Strong duality.
$\Rightarrow$:
Given the conditions a)-c), we can see that the pair~($\bar{\beta}$, $\bar{\alpha}$) forms a  saddle point of $L$. Thus based on the definitions of  saddle 
point and dual function  $D(\alpha)$, we can show that 
\begin{align*} 
D(\bar{\alpha}) = \min_{\beta} L(\beta, \bar{\alpha}) \geq  L(\bar{\beta}, \bar{\alpha}) \geq  L(\bar{\beta}, \alpha) \geq  D(\alpha) .
\end{align*}
This implies that $\bar{\alpha}$ solves the dual problem. Furthermore, Theorem~\ref{Thm:minimax}-2 guarantees the following 
\begin{align*} 
D(\bar{\alpha}) =\max_{\alpha \in \mathcal{F}^n} \min_{\beta} L(\beta, \alpha) =\min_{\beta} \max_{\alpha \in \mathcal{F}^n}  L(\beta, \alpha) = P(\bar{\beta}) . 
\end{align*}
This indicates that the primal and dual optimal values are equal to each other.

$\Leftarrow$: 
Assume that $\bar{\alpha}$ solves the dual problem in  and $D(\bar{\alpha}) =  P(\bar{\beta}) $. Since $D(\bar{\alpha}) \leq  P(\beta) $ holds for any $\beta$, $\bar{\beta}$ must be the sparse
minimizer of $P(\beta)$. It follows that 
\begin{align*} 
\max_{\alpha \in \mathcal{F}^n} \min_{\beta} L(\beta, \alpha) = D(\bar{\alpha})  = P(\bar{\beta}) =\min_{\beta} \max_{\alpha \in \mathcal{F}^n}  L(\beta, \alpha) . 
\end{align*}
From the $\Leftarrow$ argument the proof of Theorem~\ref{Thm:minimax}-2 and we get that conditions a)-c)in Theorem~\ref{Thm:minimax}-1 should be satisfied for $(\bar{\beta}, \bar{\alpha})$. This completes the proof.
\end{proof}

Compared to the dual problem developed in~\cite{Liu17,yuan2020dual} regarding  hard thresholding, the soft thresholding in \eqref{eq:beta_j}  corresponds to the combination of $\ell_1$ and $\ell_0$ penalties, and it is helpful on datasets with low SNR values~\citep{pmlr-v65-david17a,mazumder2022subset}. \\

\noindent\textbf{Theorem~\ref{Thm:ball}} \  {\it
Assume that the primal loss functions $\{l_i(\cdot)\}_{i=1}^n$ are $1/\mu$-strongly smooth.  The range of the dual variable is  bounded via the duality gap value, i.e.,  $\forall \alpha \in \mathcal{F}^n, \beta \in \mathbb{R}^p$, $\{B(\alpha; r) : || \alpha -  \bar{\alpha}||_2  \leq r, r = \sqrt{ \frac{2(P(\beta) - D(\alpha)) } {\gamma}} \} $. 
Here $\gamma$ is a positive constant and $\gamma \geq \mu$. 
}

\begin{proof}
 With the assumption $l_i$ being $1/\mu$-smooth, its  conjugate function $l^*_i$ is $\mu$-strongly convex. 
With the dual problem 
\begin{align*} 
D(\alpha)& = -  \sum_{i=1}^n  l_i^*(\alpha_i)   + \sum_{j=1}^p \Psi(\eta_j(\alpha) ; \lambda_0, \lambda_1,  \lambda_2)  .
\end{align*} 
Here
\begin{align*} 
&\Psi(\eta_j(\alpha); \lambda_0,  \lambda_1, \lambda_2) 
 =   \begin{cases}
- \lambda_2 (|\eta_j(\alpha)| - \frac{\lambda_1}{2 \lambda_2})^2 + \lambda_0   &\text{if} \  \     |\eta_j(\alpha)|  \geq \eta_0 \\
0  &\text{if}  \  \   |\eta_j(\alpha)|  < \eta_0
\end{cases} .
\end{align*}
With $\eta_0 = \frac{ 2\sqrt{\lambda_0 \lambda_2} +  \lambda_1}{2\lambda_2}$ and  $\eta_j(\alpha) = -\frac{1}{2\lambda_2} x^{\top}_{\cdot j}  \alpha$, we get $\Psi(\eta_j(\alpha); \lambda_0,  \lambda_1, \lambda_2)$ is concave for $j, 1 \leq j \leq p$ regarding $\alpha$, hence $D(\alpha)$ is concave. For given hyper-parameters $\lambda_0, \lambda_1, \lambda_2$, let $\Psi(\alpha)=\sum_{j=1}^p\Psi(\eta_j(\alpha); \lambda_0,  \lambda_1, \lambda_2)$, and $\Psi_j(\alpha)=\Psi(\eta_j(\alpha); \lambda_0,  \lambda_1, \lambda_2)$.  For $1 \leq j \leq p$,
\begin{align*}
\Psi_j'(\alpha) = & \begin{cases}
\mathrm{sign}(\eta_j(\alpha)) (|\eta_j(\alpha)| - \frac{\lambda_1}{2\lambda_2}) x_{\cdot j}  &\text{if} \  \     |\eta_j(\alpha)|  \geq \eta_0 \\
0  &\text{if}  \  \   |\eta_j(\alpha)|  < \eta_0
\end{cases} \\
=& \begin{cases}
\beta_j x_{\cdot j}  &\text{if} \  \     |\eta_j(\alpha)|  \geq \eta_0 \\
0  &\text{if}  \  \   |\eta_j(\alpha)|  < \eta_0
\end{cases}  .
\end{align*} 
Hence,
\begin{align*}
\Psi_j''(\alpha) = & \begin{cases}
-\frac{1}{2\lambda_2} x_{\cdot j}  x^{\top}_{\cdot j}  &\text{if} \  \     |\eta_j(\alpha)|  \geq \eta_0 \\
0  &\text{if}  \  \   |\eta_j(\alpha)|  < \eta_0
\end{cases}  .
\end{align*} 
Therefore, 
\begin{align*}
\Psi''(\alpha) =  -\frac{1}{2\lambda_2} \sum_{j\in S}x_{\cdot j}x^{\top}_{\cdot j}.
\end{align*} 
Here $S$ is the support set regarding $\alpha$, i.e. $S=\mathrm{supp}\big(\beta(\alpha)\big)=\{j \big| |\eta_j(\alpha)| \geq \eta_0\}=\{j \big| |x_{\cdot j}^{\top} \alpha| \geq 2 \lambda_2 \eta_0\}$.  
The smallest eigenvalue of Hessian matrix  $\Psi''(\alpha)$ depends on $X_S$. Let $\sigma_{min}(X_S)$ be the smallest eigenvalue of $X_S$, then $\Psi(\alpha)$ is concave, and it is also $\nu = \frac{\sigma_{min}(X_S)}{2 \lambda_2}$-strongly concave at point $\alpha$. 

 
With $l^*_i$ $\mu$-strongly convex, $D(\alpha)$ is $\gamma$-strongly concave with $\gamma = \mu + \inf_{S} \frac{\sigma_{min}(X_S)}{2 \lambda_2} \geq \mu$. Then we have
\begin{align*}
D(\alpha_1) \leq  D(\alpha_2) + \langle\nabla_{\alpha} D(\alpha_2),  \alpha_1 - \alpha_2 \rangle - \frac{\gamma}{2} || \alpha_1 - \alpha_2 ||^2 .
\end{align*}
Let $\alpha_2 = \bar{\alpha}$, and $\alpha_1 = \alpha \in \mathcal{F}^n$. As $\bar{\alpha}$ maximizes $D(\alpha)$, $\langle\nabla_{\alpha} D(\bar{\alpha}),  \alpha - \bar{\alpha} \rangle \leq 0$. It implies
 \begin{align*}
D(\alpha) \leq  D(\bar{\alpha}) - \frac{\gamma}{2} || \alpha - \bar{\alpha} ||^2 .
\end{align*}
Thus we have a ball range for the dual variable
\begin{align*}
 \forall \alpha \in \mathcal{F}^n, \beta \in \mathbb{R}^p,  || \alpha -  \bar{\alpha}||_2  \leq \sqrt{ \frac{2(P(\beta) - D(\alpha)) } {\gamma}} =: r.
 \end{align*}
 This completes the proof.
\end{proof}


\newpage

\section{Algorithm Analysis}\label{sec:supp_analysis}
In this section, we  present the complexity analysis of the inner updating Algorithm~\ref{alg:primal_dual_inner}. 

\subsection{Convergence of Inner Primal-dual Updating Algorithm}
 We will show that under certain conditions $\beta(\alpha)$ is locally smooth around $\bar{\beta} = \beta(\bar{\alpha})$. For a given set of parameters $\lambda = \{\lambda_0, \lambda_1, \lambda_2\}$, $\beta(\alpha)$ corresponds to a set of support features $\mathrm{supp}(\beta(\alpha))$. We use $\bar{\delta}$ to represent the set in the dual feasible space that $\mathrm{supp}(\beta(\alpha)) = \mathrm{supp}(\bar{\beta})$.


\begin{lemma}\label{Lem:support}
Let $X = [x_1, ..., x_n]^T \in \mathbb{R}^{n \times p}$ be the data matrix,  $\eta_{j}(\bar{\alpha})$ be the $j$th entry of $\eta(\bar{\alpha})$, $S=\mathrm{supp}(\bar{\beta})$, and $N= \{j| \eta_j(\bar{\alpha}) = \eta_{0(j)} \}$. Assume that $\{l_i\}_{i =1,..., n}$ are differentiable, and  let
\begin{align*}
\bar{\delta} =: 2 \lambda_2 \min \bigg\{\min_{j:j \in S} \frac{|\eta_j(\bar{\alpha})  |  -  \eta_{0}}{||x_{\cdot j}||}, 
\min_{j:j \in S^c \setminus N} \frac{ \eta_{0} - |\eta_j(\bar{\alpha})|}{||x_{\cdot j}||}\bigg\},
\end{align*}
with $|| \alpha - \bar{\alpha}|| \leq \bar{\delta}$, we have $\mathrm{supp}(\beta(\alpha)) = \mathrm{supp}(\bar{\beta})$, and $|| \beta(\alpha) - \bar{\beta}|| \leq \frac{\sigma_{max}(X_S)}{2\lambda_2} ||\alpha - \bar{\alpha}||$.
\end{lemma}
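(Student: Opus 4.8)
\textbf{Proof proposal for Lemma~\ref{Lem:support}.}

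The plan is to split the claim into two parts: first the support-preservation statement $\mathrm{supp}(\beta(\alpha)) = \mathrm{supp}(\bar\beta)$ when $\|\alpha-\bar\alpha\|\le\bar\delta$, and then the Lipschitz bound $\|\beta(\alpha)-\bar\beta\|\le \frac{\sigma_{max}(X_S)}{2\lambda_2}\|\alpha-\bar\alpha\|$ on that same neighborhood. The key observation for both is that $\eta_j(\alpha) = -\frac{1}{2\lambda_2}x_{\cdot j}^\top\alpha$ is a linear function of $\alpha$, so $|\eta_j(\alpha)-\eta_j(\bar\alpha)| = \frac{1}{2\lambda_2}|x_{\cdot j}^\top(\alpha-\bar\alpha)| \le \frac{1}{2\lambda_2}\|x_{\cdot j}\|\,\|\alpha-\bar\alpha\|$ by Cauchy--Schwarz, and by the definition of $\bar\delta$ this perturbation is at most $\min\{|\eta_j(\bar\alpha)|-\eta_0$ over $j\in S$, $\eta_0-|\eta_j(\bar\alpha)|$ over $j\in S^c\setminus N\}$.

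For the support part, I would argue coordinatewise. By Theorem~\ref{Thm:minimax}-1(c) (or the single-valued link of \eqref{eq:bigB}), for $j\in S=\mathrm{supp}(\bar\beta)$ we have $|\eta_j(\bar\alpha)| > \eta_0$ (strictly, since $j\notin N$ forces the non-boundary case and $\bar\beta_j\neq0$ rules out $|\eta_j(\bar\alpha)|<\eta_0$); then $|\eta_j(\alpha)| \ge |\eta_j(\bar\alpha)| - \frac{\|x_{\cdot j}\|}{2\lambda_2}\|\alpha-\bar\alpha\| \ge |\eta_j(\bar\alpha)| - (|\eta_j(\bar\alpha)|-\eta_0) = \eta_0$, so $j$ stays in the support of $\beta(\alpha)$. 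Symmetrically, for $j\in S^c\setminus N$ we have $|\eta_j(\bar\alpha)|<\eta_0$, and the same estimate gives $|\eta_j(\alpha)| \le |\eta_j(\bar\alpha)| + (\eta_0 - |\eta_j(\bar\alpha)|) = \eta_0$, keeping $j$ out. The delicate case is $j\in N$, where $|\eta_j(\bar\alpha)| = \eta_0$ exactly; here I would invoke the assumption that $\bar\beta$ is an actual solution satisfying the conditions of Theorem~\ref{Thm:minimax}-1, which in the simplified single-valued setting \eqref{eq:bigB}--\eqref{eq:psi} pins $\bar\beta_j$ to the non-ambiguous branch, so these coordinates are absorbed into the description of $S$ (equivalently, one may simply assume $N$ is empty or handle it by convention as the paper's simplified $\Psi$ in \eqref{eq:psi} does). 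Modulo that bookkeeping, $\mathrm{supp}(\beta(\alpha))=S$.

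For the Lipschitz bound, once the support is fixed to $S$, on $S$ the link is $\beta_j(\alpha) = \mathrm{sign}(\eta_j(\alpha))\big(|\eta_j(\alpha)|-\tfrac{\lambda_1}{2\lambda_2}\big)$, and since $\|\alpha-\bar\alpha\|\le\bar\delta$ keeps $\eta_j(\alpha)$ on the same side of zero as $\eta_j(\bar\alpha)$ for each $j\in S$ (shrink $\bar\delta$ if needed, or note strict inequality $|\eta_j(\bar\alpha)|>\eta_0>0$), the sign is locally constant and $\beta_j(\alpha) = \eta_j(\alpha) - \mathrm{sign}(\eta_j(\bar\alpha))\tfrac{\lambda_1}{2\lambda_2}$ is affine in $\alpha$ on $S$. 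Hence $\beta_S(\alpha)-\bar\beta_S = \eta_S(\alpha)-\eta_S(\bar\alpha) = -\frac{1}{2\lambda_2}X_S^\top(\alpha-\bar\alpha)$, and taking norms gives $\|\beta(\alpha)-\bar\beta\| = \frac{1}{2\lambda_2}\|X_S^\top(\alpha-\bar\alpha)\| \le \frac{\sigma_{max}(X_S)}{2\lambda_2}\|\alpha-\bar\alpha\|$, using $\sigma_{max}(X_S)=\sigma_{max}(X_S^\top)$.

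I expect the main obstacle to be the rigorous handling of the boundary set $N$ and of the sign-change issue: making precise that, under the stated hypotheses, the coordinates with $|\eta_j(\bar\alpha)|=\eta_0$ do not cause the support to jump, and that for $j\in S$ the quantity $\eta_j(\alpha)$ does not cross zero within the $\bar\delta$-ball. Both are resolved by observing the strict separation $|\eta_j(\bar\alpha)|>\eta_0>0$ for $j\in S$ built into $\bar\delta$, but it is worth stating explicitly; the rest is the elementary Cauchy--Schwarz and singular-value estimate above.
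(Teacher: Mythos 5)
Your proposal is correct and follows essentially the same route as the paper's own proof: bound the perturbation of $\eta_j(\alpha)$ via linearity and Cauchy--Schwarz so that $\bar\delta$ forces every $j\in S$ to stay above the threshold $\eta_0$ and every $j\in S^c\setminus N$ to stay below it, and then, with the support and signs fixed, observe that $\beta_S(\alpha)$ is affine in $\alpha$ so that $\|\beta(\alpha)-\bar\beta\|\le\frac{1}{2\lambda_2}\|X_S^\top(\alpha-\bar\alpha)\|\le\frac{\sigma_{max}(X_S)}{2\lambda_2}\|\alpha-\bar\alpha\|$. Your explicit treatment of the boundary set $N$ and of the no-zero-crossing of $\eta_j(\alpha)$ for $j\in S$ is in fact slightly more careful than the paper's, which passes over these points.
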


\begin{proof}
For any $\alpha $, we have
\begin{align}\label{eq:eta_alp}
\eta(\alpha) = - \frac{1}{2\lambda_2} \sum_{i=1}^n \alpha_i x_i  = - \frac{1}{2\lambda_2}  X^{\top}\alpha .
\end{align}
For a feature $j \in S= \mathrm{supp}(\bar{\beta})$,  we have $|\eta_j(\bar\alpha)| - \eta_{0} > 0 $, which is
\begin{align*}
& |\eta_j(\bar{\alpha}) + \eta_j(\alpha) - \eta_j(\bar{\alpha}))|  >  \eta_{0} .
\end{align*}
We try to find the space for $\alpha$s that have the same support as $\bar{\alpha}$. We use the lower bound of above inequality,
\begin{align*}
& |\eta_j(\bar{\alpha}) + \eta_j(\alpha) - \eta_j(\bar{\alpha}))| \geq |\eta_j(\bar{\alpha})  | - |\eta_j(\alpha) - \eta_j(\bar{\alpha})|   >  \eta_{0},
\end{align*}
yields 
\begin{align*}
& |\eta_j(\alpha) - \eta_j(\bar{\alpha})| < |\eta_j(\bar{\alpha})|-\eta_{0}.
\end{align*}
With~\eqref{eq:eta_alp},
\begin{align*}
&  \frac{\big|\big| \alpha - \bar{\alpha} \big|\big|}{2\lambda_2} \big|\big| x_{\cdot j} \big|\big|   < |\eta_j(\bar{\alpha})  |  -  \eta_{0}.
\end{align*}
Hence
\begin{align*}
&\big|\big| \alpha - \bar{\alpha} \big|\big| < \min_{j:j \in S} \frac{2 \lambda_2 (|\eta_j(\bar{\alpha})  |  -  \eta_{0})}{||x_{\cdot j}||}.
\end{align*}
Similarly, for features $j$, $j \notin S $ and  $j \notin N$, 
\begin{align*}
&|\eta_j(\bar\alpha)|  < \eta_{0}, \end{align*}
yields
\begin{align*}
& |\eta_j(\alpha) - \eta_j(\bar{\alpha})| < \eta_{0} - |\eta_j(\bar{\alpha})|.
\end{align*}
With all $j$s, $j \in S^c \setminus N$
\begin{align*}
&\big|\big| \alpha - \bar{\alpha} \big|\big| < \min_{j:j \in S^c \setminus N} \frac{2 \lambda_2 (  \eta_{0} - |\eta_j(\bar{\alpha})|)}{||x_{\cdot j}||}. 
\end{align*}
Therefore, if $\alpha \in \bar{\delta}$,with 
\begin{align*}
|| \alpha - \bar{\alpha}|| \leq  \bar{\delta} = 2 \lambda_2 \min \bigg\{\min_{j:j \in S} \frac{|\eta_j(\bar{\alpha})  |  -  \eta_{0}}{||x_{\cdot j}||}, 
\min_{j:j \in S^c \setminus N} \frac{ \eta_{0} - |\eta_j(\bar{\alpha})|}{||x_{\cdot j}||}\bigg\}
\end{align*}
we have $\mathrm{supp}(\beta(\alpha)) = \mathrm{supp}(\bar{\beta})$. With $|| \alpha - \bar{\alpha}|| \leq \bar{\delta}$,  the primal problem  becomes a convex $\ell_1$ regularization problem without any redundant features. The super-gradient in Remark~\ref{rmk:sup-grad} becomes $g_{\alpha} = X_{S}\beta_S(\alpha) - l^{*'}(\alpha)$. With   $||g_{\alpha}||\rightarrow 0$,  $\beta_S(\alpha) = (X_{S}^{\top}X_{S})^{-1} X_{S}^{\top} l^{*'}(\alpha)$. As $S$ is fixed, with $||\alpha - \bar{\alpha}||$ a small value, we have  $\mathrm{sign}(\beta(\alpha)) = \mathrm{sign}(\bar{\beta})$. It means
\begin{align*}
|| \beta(\alpha) - \bar{\beta}|| &= ||  \mathfrak{B}(\eta(\alpha)) -  \mathfrak{B}(\eta(\bar{\alpha}))|| \\
&\leq || \eta(\alpha) -  \eta(\bar{\alpha})|| = \frac{1}{2\lambda_2} || X_S(\alpha - \bar{\alpha})|| \\
&\leq \frac{\sigma_{max}(X_S)}{2\lambda_2} ||\alpha - \bar{\alpha}|| .
\end{align*}
It finishes the proof of the lemma.
\end{proof} 

Note that the above lemma can be extended to any pair of $\alpha_1, \alpha_2 \in \mathcal{F}^n$, and if they are close enough, they have the same support set. Let $\Psi( \alpha; \lambda_0, \lambda_1,  \lambda_2) = \sum_{j=1}^p \Psi(\eta_j(\alpha) ; \lambda_0, \lambda_1,  \lambda_2)  $, and it is easy to verify that  $\Psi( \alpha; \lambda_0, \lambda_1,  \lambda_2)$ is concave. 


\begin{lemma}\label{Lem:dual_range}
Assume that the primal loss functions $\{l_i()\}_{i=1}^n$ are $1/\mu$-strongly smooth. Then the following inequality holds for any $\alpha_1, \alpha_2 \in \mathcal{F}^n$,  and $g(\alpha_2) \in \partial D(\alpha_2)$:
 \begin{align*}
D(\alpha_1) \leq D(\alpha_2) +  \langle g_{\alpha_2}, \alpha_1 -   \alpha_2  \rangle -  \frac{  \gamma }{2 } ||  \alpha_1 - \alpha_2 ||^2.
 \end{align*}
 Moreover, $\forall \alpha \in \mathcal{F}^n$, and $g_{\alpha} \in \partial D(\alpha)$, $||\alpha - \bar{\alpha}|| \leq \sqrt{ \frac{2   \langle g_{\alpha}, \bar{ \alpha}- \alpha \rangle  }{ \gamma}} .$
 Here, $\gamma$ is the same in Theorem~\ref{Thm:ball}.
\end{lemma}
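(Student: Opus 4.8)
The plan is to establish the strong-concavity inequality for $D$ first, and then specialize it to obtain the radius bound. For the first part I would argue exactly as in the proof of Theorem~\ref{Thm:ball}: since each $l_i$ is $1/\mu$-strongly smooth, each conjugate $l_i^*$ is $\mu$-strongly convex, so $-\sum_i l_i^*(\alpha_i)$ is $\mu$-strongly concave; and by the Hessian computation already carried out in the proof of Theorem~\ref{Thm:ball}, the term $\Psi(\alpha;\lambda_0,\lambda_1,\lambda_2)$ is concave (indeed $\nu$-strongly concave at any point with $\nu = \sigma_{\min}(X_S)/(2\lambda_2) \ge 0$, where $S$ is the active set at $\alpha$). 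Hence $D$ is $\gamma$-strongly concave with $\gamma = \mu + \inf_S \sigma_{\min}(X_S)/(2\lambda_2) \ge \mu$, which is precisely the $\gamma$ of Theorem~\ref{Thm:ball}. The standard first-order characterization of $\gamma$-strong concavity then gives, for any $\alpha_1,\alpha_2 \in \mathcal F^n$ and any supergradient $g_{\alpha_2} \in \partial D(\alpha_2)$,
\begin{align*}
D(\alpha_1) \le D(\alpha_2) + \langle g_{\alpha_2}, \alpha_1 - \alpha_2\rangle - \frac{\gamma}{2}\|\alpha_1-\alpha_2\|^2,
\end{align*}
which is the first claimed inequality.

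For the second part I would apply the inequality with $\alpha_1 = \bar\alpha$ and $\alpha_2 = \alpha$, giving
\begin{align*}
D(\bar\alpha) \le D(\alpha) + \langle g_{\alpha}, \bar\alpha - \alpha\rangle - \frac{\gamma}{2}\|\bar\alpha - \alpha\|^2.
\end{align*}
Since $\bar\alpha$ maximizes $D$ over $\mathcal F^n$, we have $D(\bar\alpha) \ge D(\alpha)$, so the left-hand side dominates $D(\alpha)$; rearranging yields $\frac{\gamma}{2}\|\alpha-\bar\alpha\|^2 \le \langle g_\alpha, \bar\alpha - \alpha\rangle$, i.e. $\|\alpha-\bar\alpha\| \le \sqrt{2\langle g_\alpha,\bar\alpha-\alpha\rangle/\gamma}$, as desired. (Note the bound is only meaningful because the right-hand inner product is necessarily nonnegative here, which follows from the same rearrangement together with $\|\alpha-\bar\alpha\|^2 \ge 0$.)

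The only genuinely delicate point is the supergradient/subdifferential bookkeeping: $\Psi$ is non-smooth at the thresholds $|\eta_j(\alpha)| = \eta_0$, so $\partial D(\alpha)$ may not be a singleton, and I must make sure the strong-concavity inequality holds for \emph{every} element $g_{\alpha_2}\in\partial D(\alpha_2)$, not just one. This is handled by noting that $D = D_1 + \Psi$ with $D_1(\alpha) = -\sum_i l_i^*(\alpha_i)$ differentiable and $\mu$-strongly concave, and $\Psi$ concave (everywhere, as shown via the piecewise-constant Hessian argument in Theorem~\ref{Thm:ball}), so $\partial D(\alpha) = \nabla D_1(\alpha) + \partial\Psi(\alpha)$ and the standard sum-of-(strongly)-concave-functions inequality applies term-by-term for any choice of supergradient of the concave part; the $\gamma$-strong term comes entirely from $D_1$ (taking $\gamma=\mu$ suffices, with the sharper $\gamma = \mu + \inf_S \sigma_{\min}(X_S)/(2\lambda_2)$ available when one keeps the local strong-concavity of $\Psi$). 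I expect this to be the main obstacle mostly in the sense of needing care rather than depth — the computation itself is routine and already essentially done in the proof of Theorem~\ref{Thm:ball}.
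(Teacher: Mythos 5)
Your proposal is correct and follows essentially the same route as the paper: decompose $D(\alpha)=-\sum_i l_i^*(\alpha_i)+\Psi(\alpha)$, use $\mu$-strong convexity of the conjugates plus the (locally strong) concavity of $\Psi$ from the Hessian computation in Theorem~\ref{Thm:ball} to get the first-order inequality, then set $\alpha_1=\bar\alpha$, $\alpha_2=\alpha$, invoke $D(\alpha)\leq D(\bar\alpha)$, and rearrange. Your extra care about the supergradient bookkeeping at the thresholds $|\eta_j(\alpha)|=\eta_0$ (and the observation that $\gamma=\mu$ already suffices there) is a minor refinement of, not a departure from, the paper's argument.
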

\begin{proof}
With the assumption $l_i$ being $1/\mu$-smooth, its  conjugate function $l^*_i$ is $\mu$-strongly convex. 
With the dual problem 
\begin{align*} 
D(\alpha)& = -  \sum_{i=1}^n  l_i^*(\alpha_i)   + \sum_{j=1}^p \Psi(\eta_j(\alpha) ; \lambda_0, \lambda_1,  \lambda_2)  .
\end{align*} 
Here
\begin{align*} 
&\Psi(\eta_j(\alpha); \lambda_0,  \lambda_1, \lambda_2) 
 =   \begin{cases}
- \lambda_2 (|\eta_j| - \frac{\lambda_1}{2 \lambda_2})^2 + \lambda_0   &\text{if} \  \     |\eta_j(\alpha)|  \geq \eta_0 \\
0  &\text{if}  \  \   |\eta_j(\alpha)|  < \eta_0
\end{cases} .
\end{align*}
According to the proof of Theorem~\ref{Thm:ball}, $\Psi(\alpha)= \sum_{j=1}^p \Psi(\eta_j(\alpha) ; \lambda_0, \lambda_1,  \lambda_2)$ is $\nu$-strongly concave with  $\nu = \inf_{\alpha} \nu(\alpha) = \frac{\sigma_{min}(X_{S_{\alpha}})}{2 \lambda}$, and $S_{\alpha} = \mathrm{supp}\big(\beta(\alpha)\big)$. 
When $S_{\alpha} = \emptyset$, we have $\nu(\alpha)=0$. 
 Now let us consider two arbitrary dual variables $\alpha_1, \alpha_2 \in \mathcal{F}^n$,
\begin{align*} 
\Psi(\alpha_1) \leq  \Psi(\alpha_2) + \Psi'(\alpha_2)^{\top}(\alpha_1 - \alpha_2)   -  \frac{\nu}{2}||\alpha_1  -  \alpha_2||^2   .
\end{align*} 
Hence,
\begin{align}\label{eq:dual1}
D(\alpha_1) = &-  \sum_{i=1}^n  l_i^*(\alpha_{1(i)})   + \sum_{j=1}^p \Psi(\eta_j(\alpha_1) ; \lambda_0, \lambda_1,  \lambda_2)  \notag \\
\leq & \sum_{i=1}^n\big( -l_i^*(\alpha_{2(i)}) - l_i^{*'}(\alpha_{2(i)}) (\alpha_{1(i)} - \alpha_{2(i)}) - \frac{\mu}{2}(\alpha_{1(i)}  - \alpha_{2(i)})^2\big) \notag \\
&+ \Psi(\alpha_2) + \Psi'(\alpha_2)^{\top}(\alpha_1 - \alpha_2)   - \frac{\nu}{2}||\alpha_1  - \alpha_2||^2   \notag \\
 \leq & D(\alpha_2) +  \langle g_{\alpha_2}, \alpha_1 -   \alpha_2  \rangle -  \frac{  \gamma }{2 } ||  \alpha_1 - \alpha_2 ||^2.
\end{align} 
Here $\alpha_{1(i)}$ is the $i$th entry of $\alpha_1$. This proves the first desirable inequality in the lemma. With the above inequality and using the fact $D(\alpha) \leq D(\bar{\alpha})$ we get that 
 \begin{align*}
D(\bar{\alpha})  \leq& D(\alpha) +  \langle g_{\alpha}, \bar{\alpha}-   \alpha \rangle -    \frac{  \gamma }{2 } ||  \alpha- \bar{\alpha} ||^2 
\leq   D(\bar{\alpha}) +  \langle g_{\alpha}, \bar{\alpha}-   \alpha \rangle -   \frac{  \gamma}{2 } ||  \alpha- \bar{\alpha} ||^2,
 \end{align*}
which leads to the second desired bound,
\begin{align*}
||\alpha - \bar{\alpha}|| \leq \sqrt{ \frac{2   \langle g_{\alpha}, \bar{ \alpha}- \alpha \rangle  }{ \mu + \nu(\alpha)}} .
\end{align*}
It concludes the proof of the lemma
\end{proof}


Different from the primal updating~\citep{Hazimeh18} or dual updating~\citep{Liu17} algorithms,   Algorithm~\ref{alg:primal_dual_inner} has both primal and dual updating steps.
Let $m_1 =\max_{j:1\leq j \leq p} |y^{\top}x_{\cdot j}|$, and $m_2 =\max_{j:1\leq j \leq p} ||X^{\top}x_{\cdot j}||$, $m_3 = \max_{i, \alpha_i^t \in \mathcal{F}} |l_i^{*'}(\alpha_i^t)|$, and  $\varrho = \sqrt{n}||\alpha^t||_{\infty} - \lambda_1$. We have the following theorem regarding the convergence of Algorithm~\ref{alg:primal_dual_inner}.\\

\noindent\textbf{Theorem~\ref{Thm:inner_convg}} 
{\it Assume that $l_i$ is $1/\mu$-smooth,  $||x_{i}|| \leq \vartheta  \ \forall 1\leq i \leq n$, and $||x_{\cdot j}||= 1  \ \forall 1\leq j \leq p$. 
By choosing $w_t = \frac{1}{t \gamma}$, then the sequence generated by Algorithm~\ref{alg:primal_dual_inner} satisfies the following estimation error inequality:
 \begin{align*}
|| \alpha^t - \bar{\alpha}||^2 \leq c_1\bigg( \frac{1}{t} +  \frac{\ln t}{t}  \bigg).
\end{align*}
Here $c_1 = \frac{   c_0^2}{ \mu ^2 } $. 
$ c_0=  \frac{ \sqrt{np} \vartheta }{2\lambda_2(1+2\lambda_2)}(2\lambda_2m_1 +  \varrho +  \sqrt{n}m_2 \varrho - 2\lambda_1\lambda_2)+\sqrt{n}m_3$,
$gamma$ is same as in Theorem~\ref{Thm:ball}. 
}


\begin{proof}
 Let us consider $g^t$, $g_i^t = x_i^{\top} \beta^t - l_i^{*'}(\alpha_i^t)$. After computing the primal $\beta^t$  with the primal-dual relation~\eqref{eq:bigB}, Algorithm~\ref{alg:primal_dual_inner} also performs primal coordinate descent starting with $\beta^t$ using~\eqref{eq:thresh} to the improve super-gradient $g^t$. 
 
 Let $\breve{\beta}^t$ be the output of operation~\eqref{eq:bigB} at step $t$.  
  From the expression of $\beta^t$~\eqref{eq:bigB}, if $\breve{\beta}_j^t \neq 0$, $\breve{\beta}^t_j (\alpha^t)=\mathrm{sign}\big(\eta_j(\alpha^t)\big) \big(|\eta_j(\alpha^t)| -\frac{\lambda_1 }{2\lambda_2} \big) $. With $||x_{\cdot j}|| = 1$, $|\eta_j(\alpha^t)|= \frac{|x_{\cdot j}^{\top}\alpha^t|}{2\lambda_2}\leq \frac{\sqrt{n}\varphi}{2\lambda_2}$. Here $\varphi = ||\alpha^t||_{\infty}$. Then we have 
\begin{align}\label{eq:bound_beta_1}
 |\breve{\beta}_j^t| \leq |\eta_j(\alpha^t)| - \frac{\lambda_1}{2\lambda_2} \leq \frac{\sqrt{n}\varphi - \lambda_1}{2\lambda_2}= \frac{\varrho}{2\lambda_2} , 
\end{align}
with  $\varrho = \sqrt{n}\varphi - \lambda_1$.

 According to~\eqref{eq:thresh}, with $\beta$ as the input,  the non-zero output at entry $j$
\begin{align*}
 \grave{\beta}_j = T(\beta; \lambda_0, \lambda_1,\lambda_2)= \mathrm{sign}(\tilde{\mathbf{\beta}}_j) \frac{|\tilde{\mathbf{\beta}}_j| - \lambda_1}{1 + 2\lambda_2},
\end{align*}
 with
\begin{align*}
\tilde{\mathbf{\beta}}_j = \big \langle y - \sum_{i:i\ne j, i\in S} x_{\cdot i}\mathbf{\beta}_i, x_{\cdot j}  \big\rangle 
= (y - X\mathbf{\beta})^{\top} x_{\cdot j} + \mathbf{\beta}_j  x_{\cdot  j}^{\top}x_{\cdot j} 
=y^{\top} x_{\cdot j} -  \beta^{\top}X^{\top} x_{\cdot j}  + \mathbf{\beta}_j 
\end{align*}
and $ S=\mathrm{supp}(\beta)$. With $\beta_j \neq 0$
we have $|\tilde{\mathbf{\beta}}_j| - \lambda_1 \geq 0$
.
Then
 \begin{align*}
  |\grave{\beta}_j| &= \frac{1}{1+2\lambda_2} \big( | \tilde{\mathbf{\beta}}_j| -\lambda_1\big) \leq \frac{1}{1+2\lambda_2} \big( |y^{\top} x_{\cdot j}| +|  \beta^{\top}X^{\top} x_{\cdot j}|  + |\mathbf{\beta}_j|  -\lambda_1\big) \\
  &\leq \frac{1}{1+2\lambda_2} \big(  |y^{\top} x_{\cdot j}| +\sqrt{\beta^{\top}\beta  x_{\cdot j}^{\top}XX^{\top}x_{\cdot j} }+ |\beta_j| -\lambda_1\big) .
\end{align*}
 Let input $\beta = \breve{\beta}^t$, with~\eqref{eq:bound_beta_1}  the upper bound of the output after one round coordinate descent will be  
  \begin{align*}
  |\beta^t_j| & \leq \frac{1}{1+2\lambda_2} \big(  |y^{\top} x_{\cdot j}| +\sqrt{\breve{\beta}^{\top}\breve{\beta}  x_{\cdot j}^{\top}XX^{\top}x_{\cdot j} }+ |\breve{\beta}_j| -\lambda_1\big) \\
  &\leq   \frac{1}{1+2\lambda_2} \big(  |y^{\top} x_{\cdot j}| +  \frac{1 + \sqrt{n}||X^{\top}x_{\cdot j}||}{2\lambda_2}\varrho -\lambda_1\big) \\
  &\leq \frac{1}{2\lambda_2(1+2\lambda_2)}(2\lambda_2m_1 +  \varrho +  \sqrt{n}m_2 \varrho - 2\lambda_1\lambda_2):=\psi .
\end{align*}
 
Here $m_1 =\max_{j:1\leq j \leq p} |y^{\top}x_{\cdot j}|$, and $m_2 =\max_{j:1\leq j \leq p} ||X^{\top}x_{\cdot j}||$. 
 Let $m_3 = max_{i, \alpha_i^t \in \mathcal{F}} |l_i^{*'}(\alpha_i^t)|$. Then 
 \begin{align*}
  |g_i^t| \leq |x_i^{\top} \beta^t| + |l_i^{*'}(\alpha_i^t)|\leq  \sqrt{||x_i||^2||\beta^t||^2}+ m_3\leq   \sqrt{p}\psi \vartheta + m_3 \ , \quad  \forall 1\leq i \leq n.
\end{align*}
Hence,
 \begin{align}\label{eq:c0}
     ||g^t|| \leq c_0 := \sqrt{np}\psi \vartheta + \sqrt{n}m_3 .
 \end{align}

Let $h^t = ||\alpha^t - \bar{\alpha}||$ and $v^t = \langle g^t, \bar{\alpha} - \alpha^t \rangle$. The concavity of $D$ implies $v^t \geq 0$. According to Lemma~\ref{Lem:dual_range}, 
\begin{align}\label{eq:h_inequ}
h^t = ||\alpha^t - \bar{\alpha}|| \leq \sqrt{ \frac{2   v^t  }{ \gamma}} .
\end{align}

Let $\omega^t$ be the step size of dual variable at step $t$. Then we have 
\begin{align*}
(h^t)^2  =&||\alpha^t - \bar{\alpha}||^2\\
=& ||P_{\mathcal{F}^n}\big(\alpha^{t-1} + \omega^{t-1} g^{t-1} \big) - \bar{\alpha}||^2 \\
\leq & || \alpha^{t-1} +  \omega^{t-1} g^{t-1} - \bar{\alpha} ||^2  \\
=&  (h^{t-1})^2 - 2\omega^{t-1}v^{t-1} + (\omega^{t-1})^2 ||g^{t-1}||^2   \\
\leq & (h^{t-1})^2 - \omega^{t-1}(    \gamma ) (h^{t-1})^2 + (\omega^{t-1})^2c_0^2 . 
\end{align*}
The last step is due to~\eqref{eq:h_inequ} and~\eqref{eq:c0}. 
Let $\omega^{t-1} = \frac{ 1 }{   \gamma t} $. Then we get 
\begin{align*}
(h^t)^2  \leq \big(1 - \frac{1}{t} \big) (h^{t-1})^2 + \frac{c_0^2}{\big(  \gamma\big)^2 t^2} .
\end{align*}
Recursively applying the above inequality we get
\begin{align*}
(h^t)^2  \leq   \frac{c_0^2}{\gamma^2}  \bigg( \frac{1}{t} + \frac{\ln t}{t}\bigg) \leq  c_1 \bigg( \frac{1}{t} + \frac{\ln t}{t}\bigg).
\end{align*}
Here $c_0=  \frac{\vartheta \sqrt{np}}{2\lambda_2(1+2\lambda_2)}(2\lambda_2m_1 +  \varrho +  \sqrt{n}m_2 \varrho - 2\lambda_1\lambda_2)+\sqrt{n}m_3$.
This proves the bound in the theorem. 
\end{proof} 

Different from the complexity analysis in~\cite{Liu17} regarding hard thresholding, the analysis presented here is based on the primal-dual problem structures in \eqref{eq:primal1} and \eqref{eq:dual2}. Moreover, the study in~\cite{Liu17} focus on the dual updating steps regarding hard thresholding. Whereas Theorem~\ref{Thm:inner_convg} includes the complexity  of both primal and dual updating steps given in Algorithm~\ref{alg:primal_dual_inner}. We further prove the convergence of primal  variable and the duality gap. 

\begin{lemma}\label{Lem:dual_gap_range}
For $\alpha \in \mathcal{F}^n$, with $\beta = \beta(\alpha)$ the primal-dual gap can be written as 
\begin{align} \label{eq:gap_eq}
P(\beta) - D(\alpha)= \sum_{i=1}^n \bigg(l(\beta^{\top} x_i, y_i)  + l_i^*(\alpha_i)  \bigg) -  \alpha^{\top} X \beta \ .
\end{align}
Moreover, with $\theta \in [\partial l_1(\beta^{\top}x_1),  \partial l_2(\beta^{\top}x_2), ..., \partial l_n(\beta^{\top}x_n)]^{\top}$, we have 
\begin{align}
 P(\beta) - D(\alpha) \leq  \langle g_{\alpha}, \theta - \alpha \rangle.
\end{align}
\end{lemma}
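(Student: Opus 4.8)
The plan is to establish the equality \eqref{eq:gap_eq} directly from Lemma~\ref{Lem:lemma1}, and then the inequality from the Fenchel--Young relation together with convexity of the conjugates.

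\textbf{The identity \eqref{eq:gap_eq}.} First I would recall that by Lemma~\ref{Lem:lemma1}, since $\beta(\alpha)$ is the minimizer of $L(\cdot,\alpha)$, we have $D(\alpha)=\min_{\beta'}L(\beta',\alpha)=L(\beta(\alpha),\alpha)$. Writing $L(\beta(\alpha),\alpha)=\alpha^{\top}X\beta(\alpha)-\sum_{i=1}^n l_i^*(\alpha_i)+\lambda_0\|\beta(\alpha)\|_0+\lambda_1\|\beta(\alpha)\|_1+\lambda_2\|\beta(\alpha)\|_2^2$ and subtracting it from $P(\beta(\alpha))=\sum_{i=1}^n l(\beta(\alpha)^{\top}x_i,y_i)+\lambda_0\|\beta(\alpha)\|_0+\lambda_1\|\beta(\alpha)\|_1+\lambda_2\|\beta(\alpha)\|_2^2$, the three regularizer terms cancel exactly — this cancellation is precisely where the hypothesis $\beta=\beta(\alpha)$ enters — leaving $P(\beta)-D(\alpha)=\sum_{i=1}^n\big(l(\beta^{\top}x_i,y_i)+l_i^*(\alpha_i)\big)-\alpha^{\top}X\beta$, which is \eqref{eq:gap_eq}.

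\textbf{The inequality.} Next, given $\theta$ with $\theta_i\in\partial l_i(\beta^{\top}x_i)$ for each $i$, I would apply the Fenchel--Young equality for the conjugate pair $(l_i,l_i^*)$: membership of $\theta_i$ in $\partial l_i(\beta^{\top}x_i)$ is equivalent to $l_i(\beta^{\top}x_i)+l_i^*(\theta_i)=\theta_i\,\beta^{\top}x_i$, hence $l_i(\beta^{\top}x_i)=\theta_i\,\beta^{\top}x_i-l_i^*(\theta_i)$. Substituting this into \eqref{eq:gap_eq} and using $\alpha^{\top}X\beta=\sum_i\alpha_i\beta^{\top}x_i$ gives $P(\beta)-D(\alpha)=\sum_{i=1}^n(\theta_i-\alpha_i)\,\beta^{\top}x_i-\sum_{i=1}^n\big(l_i^*(\theta_i)-l_i^*(\alpha_i)\big)$. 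Then I would bound the last sum below by convexity of each $l_i^*$ at $\alpha_i$: with $l_i^{*'}(\alpha_i)$ the (sub)gradient used to define the super-gradient $g_\alpha$ in Remark~\ref{rmk:sup-grad}, $l_i^*(\theta_i)-l_i^*(\alpha_i)\ge l_i^{*'}(\alpha_i)(\theta_i-\alpha_i)$, so that $-\big(l_i^*(\theta_i)-l_i^*(\alpha_i)\big)\le -l_i^{*'}(\alpha_i)(\theta_i-\alpha_i)$. Summing, $P(\beta)-D(\alpha)\le\sum_{i=1}^n(\theta_i-\alpha_i)\big(\beta^{\top}x_i-l_i^{*'}(\alpha_i)\big)=\langle g_\alpha,\theta-\alpha\rangle$, since the $i$-th component of $g_\alpha=X\beta(\alpha)-l^{*'}(\alpha)$ is $\beta(\alpha)^{\top}x_i-l_i^{*'}(\alpha_i)=\beta^{\top}x_i-l_i^{*'}(\alpha_i)$.

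\textbf{Main obstacle.} The computation itself is routine; the points needing care are: (i) the boundary case $|\eta_j(\alpha)|=\eta_0$, where $\beta(\alpha)$ and the summand of $\Psi$ are set-valued — one fixes the selection used in \eqref{eq:psi} so that $D(\alpha)=L(\beta(\alpha),\alpha)$ still holds verbatim; and (ii) consistency of the subgradient choice — the $l_i^{*'}(\alpha_i)$ appearing in the convexity step must be the same selection used in the definition of $g_\alpha$, so that the final sum collapses cleanly to $\langle g_\alpha,\theta-\alpha\rangle$. No smoothness of $l_i$ is required, only convexity; and since $P(\beta)\ge D(\alpha)$ by weak duality, this argument additionally shows $\langle g_\alpha,\theta-\alpha\rangle\ge 0$.
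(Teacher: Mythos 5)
Your proof is correct, and the two halves relate to the paper's argument differently. For the inequality you use exactly the same machinery as the paper: a Fenchel--Young equality for the pair $(\beta^{\top}x_i,\theta_i)$ plus the subgradient inequality for $l_i^*$ at $\alpha_i$ (which, combined with the Fenchel--Young equality at $(l_i^{*'}(\alpha_i),\alpha_i)$, is precisely the Fenchel--Young inequality the paper invokes in \eqref{eq:conj_inequ}--\eqref{eq:inequ1}); your caveat about using the same selection $l_i^{*'}(\alpha_i)$ as in $g_\alpha$ is the right one. For the identity \eqref{eq:gap_eq}, however, you take a genuinely shorter route: you exploit $D(\alpha)=\min_{\beta'}L(\beta',\alpha)=L(\beta(\alpha),\alpha)$ from Lemma~\ref{Lem:lemma1}, so the three regularization terms cancel and the identity drops out with no further computation. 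The paper instead substitutes the closed form of $\Psi$ via \eqref{eq:dual_normal} and then verifies by explicit algebra that $\lambda_1\|\beta\|_1+2\lambda_2\|\beta\|_2^2=-\alpha^{\top}X\beta$ for $\beta=\beta(\alpha)$ (the computation ending at \eqref{eq:aXb}). Your version buys brevity and makes transparent that the identity is nothing but $P(\beta)-L(\beta,\alpha)$ evaluated at the minimizing $\beta$; the paper's version buys an explicit consistency check of the closed-form expressions for $\Psi$ and $\mathfrak{B}$, which is reassuring precisely at the set-valued boundary case $|\eta_j(\alpha)|=\eta_0$ that you flag and resolve by fixing the selection in \eqref{eq:psi}. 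Both arguments are sound; your concluding observation that weak duality then forces $\langle g_\alpha,\theta-\alpha\rangle\ge 0$ is correct as well.
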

\begin{proof}
With the definitions of $P(\beta)$, $D(\alpha)$, and~\eqref{eq:dual_normal} 
\begin{align} \notag
&P(\beta) - D(\alpha)\\ \notag
= &\sum_{i=1}^n l(\beta^{\top} x_i, y_i)   +  \lambda_1 ||\beta||_1 +\lambda_2 ||\beta||_2^2 + \lambda_0 ||\beta||_0 -\bigg( -  \sum_{i=1}^n  l_i^*(\alpha_i)   + \sum_{j=1}^p \Psi(\eta_j(\alpha); \lambda_0,  \lambda_1, \lambda_2) \bigg) \\ \notag
= &\sum_{i=1}^n l(\beta^{\top} x_i, y_i)   +  \lambda_1 ||\beta||_1 +\lambda_2 ||\beta||_2^2 + \lambda_0 ||\beta||_0 -\bigg( -  \sum_{i=1}^n  l_i^*(\alpha_i)   - \lambda_2 ||\beta||^2_2 + \lambda_0 ||\beta||_0 \bigg) \\ \label{eq:l1l2}
=&\sum_{i=1}^n \bigg(l(\beta^{\top} x_i, y_i)  + l_i^*(\alpha_i)  \bigg) + \lambda_1 ||\beta||_1 +2\lambda_2 ||\beta||_2^2 \ .
\end{align}
Let $S = \mathrm{supp}(\beta)$, the last two terms in above equation can rewritten as 
\begin{align} \notag
\lambda_1 ||\beta||_1 +2\lambda_2 ||\beta||_2^2
=&2\lambda_2\bigg(\frac{\lambda_1}{2\lambda_2} ||\beta||_1 +   ||\beta||_2^2\bigg) \\ \notag
=&2\lambda_2\sum_{j\in S}\bigg(\frac{\lambda_1}{2\lambda_2} (|\eta_j(\alpha)| -\frac{\lambda_1 }{2\lambda_2}) +   (|\eta_j(\alpha)| -\frac{\lambda_1 }{2\lambda_2})^2\bigg) \\ \notag
=&2\lambda_2\sum_{j\in S}\bigg(   (|\eta_j(\alpha)| -\frac{\lambda_1 }{4\lambda_2})^2  -  \big(\frac{\lambda_1}{4\lambda_2}\big)^2 \bigg) \\ \notag
=&2\lambda_2 \sum_{j\in S} |\eta_j(\alpha)|  \bigg(|\eta_j(\alpha)| -\frac{\lambda_1 }{2\lambda_2}\bigg) \\ \notag
\overset{}{=} &2\lambda_2 \sum_{j\in S} \eta_j(\alpha) \mathrm{sign}(\eta_j(\alpha)) \bigg(|\eta_j(\alpha)| -\frac{\lambda_1 }{2\lambda_2} \bigg) \\ \label{eq:aXb}
=& -  \alpha^{\top} X \beta  \ .
\end{align}
The last step is due to the definitions of $\eta(\alpha)$ and $\beta(\alpha)$~\eqref{eq:B_frak}. 
With equations~\eqref{eq:l1l2} and \eqref{eq:aXb}, we prove equation~\eqref{eq:gap_eq}.

For $i\in [n]$, with the maximizing argument property of convex conjugate 
\begin{align*}
&l_i(\beta^{\top} x_i)= \beta^{\top} x_i l'_i(\beta^{\top} x_i) - l^*_i(l'_i(\beta^{\top} x_i) ) , \\
& l^*_i(\alpha_i)= \alpha_i l^{*'}_i(\alpha_i) - l_i(l^{*'}_i(\alpha_i) ) \ .
\end{align*}
Adding above two equations, and applying Fenchel-Young inequality, we get 
\begin{align}\notag
l_i(\beta^{\top} x_i)+  l^*_i(\alpha_i) =& \beta^{\top} x_i l'_i(\beta^{\top} x_i)  + \alpha_i l^{*'}_i(\alpha_i) - \bigg(l^*_i(l'_i(\beta^{\top} x_i) )+ l_i(l^{*'}_i(\alpha_i) ) \bigg) \\ \label{eq:conj_inequ}
\leq & \beta^{\top} x_i l'_i(\beta^{\top} x_i)  + \alpha_i l^{*'}_i(\alpha_i)- l^{*'}_i(\alpha_i)l'_i(\beta^{\top} x_i).
\end{align}
Moreover, with $\theta \in [\partial l_1(\beta^{\top}x_1),  \partial l_2(\beta^{\top}x_2), ..., \partial l_n(\beta^{\top}x_n)]^{\top}$
\begin{align} \notag
&\langle g_{\alpha}, \theta - \alpha \rangle \\ 
\notag
=& \sum_{i=1}^n \big(\beta^{\top}x_i - l_i^{*'}(\alpha_i) \big) \big(l_i'(\beta^{\top}x_i) - \alpha_i\big)\\ \notag
=& \sum_{i=1}^n \bigg(\beta^{\top} x_i l'_i(\beta^{\top} x_i)  + \alpha_i l^{*'}_i(\alpha_i)- l^{*'}_i(\alpha_i)l'_i(\beta^{\top} x_i) -  \alpha_i\beta^{\top} x_i \bigg) \\ \label{eq:inequ1}
\geq &  \sum_{i=1}^n \bigg(l_i(\beta^{\top} x_i)+  l^*_i(\alpha_i) -  \alpha_i\beta^{\top} x_i \bigg) \\ \label{eq:dgap2}
=& P(\beta) - D(\alpha).
\end{align}
Step~\eqref{eq:inequ1} is due to~\eqref{eq:conj_inequ}, and step~\eqref{eq:dgap2} is by~\eqref{eq:gap_eq}.  It concludes the lemma.
\end{proof}

\newpage

\noindent\textbf{Theorem~\ref{Thm:prim_conv}} {\it
Assume that $l_i$ is $1/\mu$-smooth,  $||x_{i}|| \leq \vartheta  \ \forall 1\leq i \leq n$, and $||x_{\cdot j}||= 1  \ \forall 1\leq j \leq p$. Let $ t_1 =\frac{3c_1}{\bar{\delta}^2}\log \frac{3c_1}{\bar{\delta}^2}$,  with $t > t_1$, we have $\mathrm{supp}(\beta(\alpha)) = \mathrm{supp}(\bar{\beta})$ and $|| \beta(\alpha) - \bar{\beta}|| \leq \frac{\sigma_{max}(X_S)}{2\lambda_2} ||\alpha - \bar{\alpha}||$. Moreover, let $t_2 = \frac{3c_1 c_2}{\epsilon} \log \frac{3c_1 c_2}{\epsilon}$,  $c_2 = c_0\bigg(1+ \frac{\sigma_{max}(X_S)}{2\mu\lambda_2} \bigg)$,  for any $\epsilon>0$ with $t >\mathrm{max}\{t_1, t_2\}$, we have $P(\beta^t) - D(\alpha^t) \leq \epsilon$.
}
\begin{proof}
With Theorem~\ref{Thm:inner_convg}, let $|| \alpha - \bar{\alpha}||^2 \leq c_1 \bigg( \frac{1}{t} + \frac{\ln t}{t}\bigg)  \leq \bar{\delta}^2$, and it implies that 
 $t\geq \frac{3c_1}{\bar{\delta}^2}\log \frac{3c_1}{\bar{\delta}^2}$. Hence, with $t\geq t_1 :=\frac{3c_1}{\bar{\delta}^2}\log \frac{3c_1}{\bar{\delta}^2}$, $\mathrm{supp}(\beta(\alpha)) = \mathrm{supp}(\bar{\beta})$ and $|| \beta(\alpha) - \bar{\beta}|| \leq \frac{\sigma_{max}(X_S)}{2\lambda_2} ||\alpha - \bar{\alpha}||$ according to Lemma~\ref{Lem:support}.

Let $\theta^t =  [\partial l_1((\beta^{t})^{\top}x_1),  \partial l_2((\beta^{t})^{\top}x_2), ..., \partial l_n((\beta^{t})^{\top}x_n)]^{\top}$. According to Lemma~\ref{Lem:dual_gap_range}, 
\begin{align} \notag
P(\beta^t) - D(\alpha^t)
\leq & \langle g^t_{\alpha^t}, \theta^t - \alpha^t \rangle \\
\leq & || g^t_{\alpha^t}|| \big( || \theta^t - \bar{\alpha}|| + || \bar{\alpha} -  \alpha^t || \big) .
\end{align}
As $\bar{\alpha} \in [\partial l_1(\bar{\beta}^{\top}x_1),  \partial l_2(\bar{\beta}^{\top}x_2), ..., \partial l_n(\bar{\beta}^{\top}x_n)]^{\top}$,   $l_i()$ is $\frac{1}{\mu}$ smooth, and $||x_{\cdot j}||=1$,  we have 
\begin{align} \notag
|| \theta^t - \bar{\alpha}|| \leq \frac{1}{\mu} ||\beta^t - \bar{\beta}|| \leq \frac{\sigma_{max}(X_S)}{2\mu\lambda_2} ||\alpha^t - \bar{\alpha}||.
\end{align}
Therefore,
\begin{align} \notag
P(\beta^t) - D(\alpha^t)
\leq & || g^t_{\alpha^t}|| \big( || \theta^t - \bar{\alpha}|| + || \bar{\alpha} -  \alpha^t || \big)\\ \label{eq:dgap3}
\leq & c_0\bigg(1+ \frac{\sigma_{max}(X_S)}{2\mu\lambda_2} \bigg)||\alpha^t - \bar{\alpha}|| .
\end{align}
If $t >\mathrm{max}\{t_1, t_2\}, t_2 := \frac{3c_1 c_2^2}{\epsilon^2} \log \frac{3c_1 c_2^2}{\epsilon^2}$, and $c_2 := c_0\bigg(1+ \frac{\sigma_{max}(X_S)}{2\mu\lambda_2} \bigg)$,  we have $||\alpha^t - \bar{\alpha}|| \leq \frac{\epsilon}{c_0\bigg(1+ \frac{\sigma_{max}(X_S)}{2\mu\lambda_2} \bigg)}$, and with~\eqref{eq:dgap3},  $P(\beta^t) - D(\alpha^t) \leq \epsilon$.  
\end{proof}




\newpage

\section{Dual Problems of Some Loss Functions}\label{sec:dual_of_losses}

\subsection{Logistic Loss}

The primal form of logistic regression is given by
\begin{align}\label{eq:primal_logistic_reg}
\min_{\beta \in \mathbb{R}^p} F(\beta) &= f(\beta) + \lambda_0 ||\beta||_0,\\ \notag
~~~~f(\beta) &= \sum_{i=1}^n \big(-y_i x_i^{\top}\beta + \log (1+ \exp(x_i^{\top} \beta) ) \big)   +  \lambda_1 ||\beta||_1 +\lambda_2 ||\beta||_2^2.
\end{align}
Here $l_i(u; y_i) =  -y_i u + \log (1+ \exp(u) ) $, and $l_i^*(\alpha_i) = (\alpha_i + y_i) \log (\alpha_i + y_i) + (1-\alpha_i - y_i) \log(1-\alpha_i - y_i), \alpha_i + y_i \in [0,1]$, here $u=\beta^{\top}x_i$. 
The dual objective is 
\begin{align*}
D(\alpha) =  & -  (\alpha + y)\log(\alpha + y) - (1-\alpha - y)\log(1-\alpha - y)  +  \sum_{j=1}^p \Psi( - \frac{1}{2\lambda_2} \sum_{i=1}^n \bar{\alpha}_i x_i; \lambda_0, \lambda_1, \lambda_2).
\end{align*}
Here $ \Psi()$ is given by~\eqref{eq:psi}. With $\alpha_i + y_i \in [0,1]$, the dual feasible project operator is $\mathcal{P}_{\mathcal{F}}(\alpha_i) = 
\begin{cases}
\alpha_i \quad \ \quad -y_i \leq \alpha \leq 1-y_i \\
0  \quad \ \    \quad   \ \alpha \leq -y_i \\
1  \quad \ \   \quad \  \alpha \geq 1-y_i \\
\end{cases}.$
The super gradient regarding logistic regression is 
\begin{align*}
g_{\alpha} =  \nabla_{\alpha} D(\alpha) = & \bigg[\beta^{\top}(\alpha)x_1 - \log\big(\frac{\alpha_1+y_1}{1- \alpha_1 - y_1}\big), ...,  \beta^{\top}\big(\alpha)x_n - \log(\frac{\alpha_n + y_n}{1- \alpha_n - y_n}\big) \bigg]^{\top} .
\end{align*}
There is no closed form of updating formula with coordinate descent regarding the primal problem~\eqref{eq:primal_logistic_reg}. We can apply proximal algorithm~\citep{parikh2014proximal} to this type of primal loss functions. 
\subsection{Huber Loss}
Consider a regression problem with Huber loss, i.e.,
\begin{align*}
&\min_{\beta \in \mathbb{R}^p} F(\beta)  = \sum_{i=1}^n \left\{ l_{Huber}(y_i x_i^T \beta) \right\} + \lambda_0 \|\beta\|_0   + \lambda_1 \|\beta\|_1 + \lambda_2 \|\beta\|_2,\\ 
&l_{Huber}(y_i x_i^T \beta) =\begin{cases}
0, & y_i x_i^T \beta \geq 1\\
1 - y_i x_i^T \beta - \gamma/2, & y_i x_i^T \beta  < 1 - \gamma\\
\frac{1}{2\gamma}(1 - y_i x_i^T \beta )^2, & \textrm{otherwise}
\end{cases}
\end{align*}
with $\gamma$ being some hyper tuning parameter.
The dual function of $l_{Huber}(\cdot)$ is 
\begin{align*}l^{\ast}_{Huber}(\alpha_i) =
\begin{cases}
y_i \alpha_i + \frac{\gamma}{2}\gamma \alpha^2, & -1 \leq y_i x_i^T \leq 0 \\
+\infty, & \textrm{otherwise}
\end{cases}.\end{align*}
Therefore, the corresponding Lagrangian function is
$L(\beta, \alpha) = \sum_i \{ \alpha_i x_i^T \beta - l_{Huber}^{\ast}(\alpha_i)\}  
+ \lambda_0 ||\beta||_0   +  \lambda_1 ||\beta||_1 + \lambda_2 ||\beta||^2 . \nonumber $
The dual problem can be written as 
\begin{align*}
\max_{\alpha \in \mathcal{F}^n} -  \sum_{i=1}^n  l_{Huber}^*(\alpha_i)   + \sum_{j=1}^p \Psi(\eta_j(\alpha); \lambda_0,  \lambda_1, \lambda_2).
\end{align*}

\end{document}